\newtheorem*{theorem*}{Theorem}
\newtheorem{theorem}{Theorem}[section]
\newtheorem{lemma}[theorem]{Lemma}
\newtheorem{fact}[theorem]{Fact}
\newtheorem{definition}[theorem]{Definition}
\newtheorem{claim}[theorem]{Claim}
\newtheorem{corollary}[theorem]{Corollary}
\def\FullBox{\hbox{\vrule width 8pt height 8pt depth 0pt}} 
\newcommand{\QED}{\;\;\;\FullBox}
\renewenvironment{proof}{\noindent{\bf Proof:~~}}{\hfill\QED}
\newenvironment{proofof}[1]{\noindent{\bf Proof of {#1}:~~}}{\(\QED\)}
\newcommand{\bmb}{{\bm b}}
\newcommand{\bmd}{{\bm d}}
\newcommand{\bmu}{{\bm u}}
\newcommand{\bmv}{{\bm v}}
\newcommand{\bmx}{{\bm x}}
\newcommand{\bmone}{{\bm 1}}
\newcommand{\eps}{\ensuremath{\epsilon}\xspace}
\newcommand{\eqdef}{\stackrel{\rm def}{=}}
\newcommand{\R}{\ensuremath{\mathbb{R}}\xspace}
\newcommand{\exxp}[1]{\underset{#1}{\mathbf{E}}}
\newcommand{\var}{\operatorname{\mathbf{Var}}}
\newcommand{\set}[1]{\{#1\}}
\newcommand{\norm}[1]{\lVert#1{\rVert}}
\newcommand{\normtwo}[1]{{\norm{#1}}_2}
\newcommand{\normF}[1]{{\norm{#1}}_F}
\newcommand{\Lovasz}{Lov{\'a}sz\xspace}
\newcommand{\dotprod}[2]{ \left\langle #1,\xspace #2 \right\rangle }      
\providecommand{\poly}{\operatorname*{poly}}
\providecommand{\diag}{\operatorname*{diag}}
\def\authornameal{Amit Levi}
\def\authoraffial{University of Waterloo. Email: \href{mailto: amit.levi@uwaterloo.ca}{amit.levi@uwaterloo.ca} . Research supported by NSERC Discovery grant and the David R. Cheriton Graduate Scholarship. Part of this work was done while the author was visiting NII Tokyo.}
\def\authornameyy{Yuichi Yoshida}
\def\authoraffiyy{National Institute of Informatics. Email: \href{mailto: yyoshida@nii.ac.jp}{yyoshida@nii.ac.jp}. Research supported by  JSPS KAKENHI Grant Number JP17H04676 and JST ERATO Grant Number JPMJER1201.}
\title{Sublinear-Time Quadratic Minimization via Spectral Decomposition of Matrices}
\author{
	\authornameal\thanks{\authoraffial} \and \authornameyy\thanks{\authoraffiyy}
}
\begin{document}

\maketitle
\begin{abstract}
We design a sublinear-time approximation algorithm for quadratic function minimization problems with a better error bound than the previous algorithm by Hayashi and Yoshida~(NIPS'16). Our approximation algorithm can be modified to handle the case where the minimization is done over a sphere. The analysis of our algorithms is obtained by combining results from graph limit theory,  along with a novel spectral decomposition of matrices. Specifically, we prove that a matrix $A$ can be decomposed into a structured part and a pseudorandom part, where the structured part is a block matrix with a polylogarithmic number of blocks, such that in each block all the entries are the same, and  the pseudorandom part has a small spectral norm, achieving better error bound than the existing decomposition theorem of Frieze and Kannan~(FOCS'96).
As an additional application of the decomposition theorem, we give a sublinear-time approximation algorithm for computing the top singular values of a matrix.

	\end{abstract}

\section{Introduction}

Quadratic function minimization/maximization is a versatile tool used in machine learning, statistics, and data mining and can represent many fundamental problems such as linear regression, $k$-means clustering,  principal component analysis (PCA), support vector machines, kernel machines and more (see~\cite{Murphy:2012}).
In general, quadratic function minimization/maximization is NP-Hard.
When the problem is convex (for minimization) or concave (for maximization), we can solve it by solving a system of linear equations, which requires $O(n^3)$ time, where $n$ is the number of variables. There are faster approximation methods based on  stochastic gradient descent~\cite{Bottou:2004},  and the multiplicative update algorithm~\cite{Clarkson:2012}.
However, these methods still require $\Omega(n)$ time, which is prohibitive when we need to handle a huge number of variables.

Quadratic function minimization over a sphere is also an important problem.
This minimization problem is often called the \emph{trust region subproblem} since it must be solved in each step of a \emph{trust region algorithm}. Trust region algorithms are among the most important tools in solving nonlinear programming problems, as they are robust and can be applied to ill-conditioned problems. In addition, trust region subproblems are useful in many other problems such as constrained eigenvalue problems~\cite{gander1989constrained}, least-square problems~\cite{zhang2010derivative}, combinatorial optimization problems~\cite{busygin2006new} and many more.
While the problem is non-convex, it has been shown that the problem exhibit strong duality properties and is known to be solved in polynomial time (see~\cite{ben1996hidden,ye2003new}). In particular, it was shown to be equivalent to some semidefinite programming optimization problems that can be solved in polynomial time (\cite{nesterov1994interior,alizadeh1995interior}). As in the non-constrained case, there are approximation algorithms based on gradient descent~\cite{nesterov1983method} and on reducing the problem to a sequence of eigenvalues computations~\cite{hazan2016linear}. However, as in the unconstrained case, these methods require running time which is linear in the number of the non-zero elements of the matrix (which might be linear in $n$).

\subsection{Our Contributions}
In this work, we provide sublinear-time approximation algorithms for minimizing quadratic functions, assuming random access to the entries of the input matrix and the vector.

First, we consider unconstraind minimization.
Specifically,
for a matrix $A\in \R^{n\times n}$ and vectors $\bmd,\bmb\in \R^{n}$, we consider the following quadratic function minimization problem:
\begin{align}
\min_{\bmv\in \R^n} \psi_{n,A,\bmd,\bmb}(\bmv),~\text{where } \psi_{n,A,\bmd,\bmb}(\bmv)= \dotprod{\bmv}{A\bmv}+n\dotprod{\bmv}{\diag(\bmd)\bmv}+n\dotprod{\bmb}{\bmv}\;.\label{Eq:MinDiscreat-intro}
\end{align}
Here $\diag(\bmd) \in \R^{n \times n}$ is a matrix whose diagonal entries are specified by $\bmd$ and $\dotprod{\cdot}{\cdot}$ denotes the standard inner product.

\begin{theorem}\label{thm:mainTheorem-intro}
	Fix $\eps>0$ and let $\bmv^*$ and $z^*$ be an optimal solution and the optimal value, respectively, of Problem~\eqref{Eq:MinDiscreat-intro}. Let $S$ be a random set such that each index $i \in \set{1,2,\ldots,n}$ is taken to $S$ independently w.p $k/n$ with
	$$k=\max\Bigg\{  O\left(\frac{\log^2 n}{\eps^2}\right), \left(\frac{1}{\eps}\right)^{O(1/\eps^2)}  \Bigg\}\;.$$ Then, the following holds with probability at least $2/3$:
	Let $\tilde{\bmv}^*$ and $\tilde{z}^*$ be an optimal solution and the optimal value, respectively, of the problem $$\min_{\bmv\in \R^{|S|}}\psi_{|S|,A|_{S},\bm d|_{S},\bm b|_{S}}(\bmv)\;,$$
	where $\cdot|_S$ is an operator that extracts a submatrix (or subvector) specified by an index set $S$.
	Then,
	$$\left|\frac{1}{|S|^2}\tilde{z}^*-\frac{1}{n^2}{z^*}\right|\le \eps L \max\left\{\frac{\normtwo{\tilde{\bmv}^*}^2}{|S|},\frac{\normtwo{\bmv^*}^2}{n}\right\}\;,
	$$
	where $L=\max\{\max_{i,j}|A_{ij}|,\max_i |d_i|,\max_i|b_i| \}$.
\end{theorem}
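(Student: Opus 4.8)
The plan is to pass to the ``analytic'' form of the problem. Split $[0,1]$ into $n$ consecutive intervals of length $1/n$; let $W\colon[0,1]^2\to\R$ be the step function equal to $A_{ij}$ on the $(i,j)$-th cell, let $d,b\colon[0,1]\to\R$ be the step functions equal to $d_i,b_i$ on the $i$-th interval, and for $\bmv\in\R^n$ let $v$ be the step function equal to $v_i$ on the $i$-th interval. Then $\tfrac{1}{n^2}\psi_{n,A,\bmd,\bmb}(\bmv)=\int_{[0,1]^2}W(x,y)v(x)v(y)\,dx\,dy+\int_{[0,1]}d(x)v(x)^2\,dx+\int_{[0,1]}b(x)v(x)\,dx$, and the analogous identity holds for the sampled instance with $W,d,b$ restricted to the $|S|$ random sample points. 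So minimizing $\tfrac{1}{n^2}\psi$ over $\bmv\in\R^n$ (resp.\ $\tfrac{1}{|S|^2}\psi$ over $\bmv\in\R^{|S|}$) amounts to minimizing this functional over the step functions adapted to the fine (resp.\ sampled) partition, and it suffices to show the minimum is stable under sampling up to additive error $\eps L\max\{\normtwo{\tilde{\bmv}^*}^2/|S|,\normtwo{\bmv^*}^2/n\}$. Since the quadratic part is $2$-homogeneous and the linear part $1$-homogeneous, I would first prove such a bound for the minimum over the sphere on which $\normtwo{\bmv}^2$ equals the dimension, then recover the unconstrained statement by minimizing the resulting parabola in the scale factor; the $\max$ on the right-hand side appears because one direction of the comparison is witnessed by $\bmv^*$ and the other by $\tilde{\bmv}^*$.

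Next, decompose $\psi$ into the quadratic form $\dotprod{\bmv}{A\bmv}$, the diagonal term $n\dotprod{\bmv}{\diag(\bmd)\bmv}$, and the linear term $n\dotprod{\bmb}{\bmv}$. The last two depend only on the $O(n)$ numbers $d_i,b_i$ and are, up to the $1/n$ normalization, weighted coordinate averages; after refining the partition so that $d$ and $b$ are within $\eps L$ of constant on each part, Hoeffding's inequality controls the sampled value at a fixed $\bmv$, and a net argument upgrades this to a bound uniform over all relevant $\bmv$ (the union bound being paid for by the lower bound on $k$). The heart of the proof is the quadratic form, which is where the decomposition theorem enters.

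Apply the spectral decomposition to write $A=A^{\mathrm{str}}+A^{\mathrm{psr}}$, where $A^{\mathrm{str}}$ is constant on each of at most $t$ blocks with $t=\left(\frac{1}{\eps}\right)^{O(1/\eps^2)}$ (or polylogarithmic in $n$, in the other regime), has entries bounded by $L$, and $\normtwo{A^{\mathrm{psr}}}\le\eps Ln$ (absolute constants absorbed into $\eps$). For the pseudorandom part, $\tfrac{1}{n^2}\lvert\dotprod{\bmv}{A^{\mathrm{psr}}\bmv}\rvert\le\eps L\,\normtwo{\bmv}^2/n$, which already has the shape of the claimed bound; moreover a random $|S|\times|S|$ principal submatrix of $A^{\mathrm{psr}}$ still has spectral norm $O(\eps L\,|S|)$ with high probability, so the pseudorandom part is negligible for the sampled instance too. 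For the structured part, $A^{\mathrm{str}}$ is determined by its $t\times t$ array of block values and its $t$ block sizes; the corresponding data of the sampled instance replaces the block sizes by the numbers of sample points in each block, which a Chernoff bound preserves to relative accuracy $\eps$ — the failure probability here, against a net of the $t$-dimensional ``block space'', is what forces $k$ to be as large as stated (the two branches of the bound corresponding to the two regimes for $t$). A direct estimate comparing two $t$-block quadratic optimization problems whose data agree to within $\eps$ then shows their sphere-constrained minima differ by $O(\eps L)$ times the norm factor. Assembling the three parts, and obtaining the two inequalities by (i) using $\bmv^*$ restricted to $S$ as a feasible point for the sampled problem and (ii) lifting $\tilde{\bmv}^*$ to a feasible point for the full problem by tiling each block with the sampled values lying in it, yields the theorem once the internal accuracy is set to a small constant multiple of $\eps$.

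The step I expect to be the main obstacle, and the one where this argument beats that of Hayashi and Yoshida, is the spectral control of the pseudorandom part, both on $A$ itself and on its random principal submatrix. Because the decomposition gives $A^{\mathrm{psr}}$ small \emph{spectral} norm rather than merely small cut norm, its contribution to $\dotprod{\bmv}{A^{\mathrm{psr}}\bmv}$ is governed by $\normtwo{\bmv}^2$ — exactly the quantity on the right-hand side — instead of by $\norm{\bmv}_\infty^2$, which a cut-norm bound would give and which is far too weak for an $\ell_2$-type guarantee. The delicate quantitative ingredient is then the bound $\normtwo{A^{\mathrm{psr}}|_S}=O(\eps L|S|)$: a concentration statement for the spectral norm of a random principal submatrix of a matrix with small spectral norm and entries of order $L$ (equivalently of order $L/n$ after the natural rescaling), which together with the net argument above is what pins down the lower bounds on $k$ in the statement.
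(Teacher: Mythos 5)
Your proposal follows essentially the same route as the paper: pass to the step-function (dikernel) formulation, reduce the unconstrained problem to a ball-constrained one via homogeneity, split $A$ by the spectral decomposition theorem, control the pseudorandom part on both the full matrix and the random principal submatrix via a spectral-norm bound for random submatrices (the paper cites exactly such a result of Tropp), and control the structured part by concentration of the block sizes under sampling (which the paper formalizes as a measure-preserving bijection aligning the sampled partition with the original one). The ingredients you single out as the crux --- that a spectral-norm (rather than cut-norm) bound on the pseudorandom part is what yields an $\ell_2$-type error, and that the random-submatrix spectral bound is the delicate quantitative step fixing $k$ --- are precisely the ones the paper's proof turns on, so the approaches coincide.
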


Recently, Hayashi and Yoshida~\cite{Hayashi:2016wh} proposed a constant-time sampling method for this problem with an additive error of $O(\epsilon L K_\infty^2 n^2)$ for $K_\infty = \max\set{\|\bmv^*\|_\infty,\|\tilde{\bmv}^*\|_\infty}$, where $\bmv^*$ and $\tilde{\bmv}^*$ are the optimal solutions to the original and sampled problems, respectively.
Although their algorithm runs in constant time, the guarantee is not meaningful when $K_\infty=\omega(1)$ because  the optimal value is always of order $O(L n^2)$.
Theorem~\ref{thm:mainTheorem-intro} shows that we can improve the additive error to $O(\epsilon L K^2_2 n^2)$, where $K_2 = \max\set{\|\bmv^*\|_2/\sqrt{n},\|\tilde{\bmv}^*\|_2/\sqrt{s}}$, as long as the number of samples $s$ is polylogarithmic (or more).
We note that we always have $K_2 \leq K_\infty$ and the difference is significant when $\bmv^*$ and $\tilde{\bmv}^*$ are sparse.
For example, if $\bmv^*$ and $\tilde{\bmv}^*$ have only $O(1)$ non-zero elements, then we have $K_2 = O(K_\infty / \sqrt{s})$.
Our new bound provides a trade off between the additive error and the time complexity, which was unclear from the argument by Hayashi and Yoshida~\cite{Hayashi:2016wh}.

Moreover, we consider minimization over a sphere.
Specifically, given a matrix $A\in \R ^{n\times n}$, vectors $\bmb,\bmd\in \R^n$ and $r > 0$, we consider the following quadratic function minimization problem over a sphere of radius $r$:
\begin{align}
\min_{\bmv : \normtwo{\bmv}\le r} \psi_{n,A,\bmd,\bmb}(\bmv)\;.\label{Eq:MinDiscreatConstrained-intro}
\end{align}

We give the first sublinear-time approximation algorithm for this problem. 
\begin{theorem}\label{thm:mainTheoremSphere-intro}
	Let $\bmv^*$ and $z^*$ be an optimal solution and optimal value, respectively, of Problem~\eqref{Eq:MinDiscreatConstrained-intro}. Let $\eps>0$ and let $S$ be a random set such that each index $i \in \set{1,2,\ldots,n}$ is taken to $S$ independently w.p $k/n$ with
	$$k=\max\Bigg\{  O\left(\frac{\log^2 n}{\eps^2}\right), \left(\frac{1}{\eps}\right)^{O(1/\eps^2)}  \Bigg\}\;.$$
	Then, the following holds with probability at least $2/3$: Let $\tilde{\bmv}^*$ and $\tilde{z}^*$ be an optimal solution and the optimal value, respectively, of the problem $$\min_{\normtwo{v}\le\sqrt{\frac{|S|}{n}}r}\psi_{|S|,A|_{S},d|_S,b|_S}(\bmv)\;.$$
	Then,
	$$\left|\frac{1}{|S|^2}\tilde{z}^*-\frac{1}{n^2}{z^*}\right|\le \frac{\eps L r^2}{n}\;,$$
	where $L=\max\{\max_{i,j}|A_{ij}|,\max_i |d_i|,\max_i|b_i| \}$.
\end{theorem}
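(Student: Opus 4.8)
The plan is to reduce Theorem~\ref{thm:mainTheoremSphere-intro} to Theorem~\ref{thm:mainTheorem-intro} via Lagrangian duality for the trust-region subproblem. Observe first that for any $\lambda\ge 0$ one has $\psi_{n,A,\bmd,\bmb}(\bmv)+\lambda\normtwo{\bmv}^2=\psi_{n,A,\bmd+\frac{\lambda}{n}\bmone,\bmb}(\bmv)$; i.e.\ adding an $\ell_2$ penalty of weight $\lambda$ merely shifts the diagonal vector $\bmd$ by $\frac{\lambda}{n}\bmone$. Since Problem~\eqref{Eq:MinDiscreatConstrained-intro} is a trust-region subproblem, strong duality holds~\cite{ben1996hidden,ye2003new}, so $\frac{1}{n^2}z^*=\max_{\lambda\ge0}\big[\frac{1}{n^2}z^*_\lambda-\frac{\lambda r^2}{n^2}\big]$ where $z^*_\lambda:=\min_{\bmv\in\R^n}\psi_{n,A,\bmd+\frac{\lambda}{n}\bmone,\bmb}(\bmv)$. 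Running the same computation on the sampled problem with radius $\sqrt{|S|/n}\,r$ gives $\frac{1}{|S|^2}\tilde z^*=\max_{\mu\ge0}\big[\frac{1}{|S|^2}\tilde z^*_\mu-\frac{\mu (|S|/n) r^2}{|S|^2}\big]$ with $\tilde z^*_\mu:=\min_{\bmv\in\R^{|S|}}\psi_{|S|,A|_S,\bmd|_S+\frac{\mu}{|S|}\bmone,\bmb|_S}(\bmv)$. The key algebraic coincidence is that the substitution $\mu=\frac{|S|}{n}\lambda$ simultaneously turns the shifted diagonal into $\bmd|_S+\frac{\lambda}{n}\bmone=(\bmd+\frac{\lambda}{n}\bmone)|_S$ and turns the penalty coefficient $\frac{\mu(|S|/n)r^2}{|S|^2}$ into $\frac{\lambda r^2}{n^2}$; hence $\frac{1}{|S|^2}\tilde z^*=\max_{\lambda\ge0}\big[\frac{1}{|S|^2}\tilde z^*_{(|S|/n)\lambda}-\frac{\lambda r^2}{n^2}\big]$, and $\tilde z^*_{(|S|/n)\lambda}$ is exactly what one obtains by running the sampling procedure of Theorem~\ref{thm:mainTheorem-intro} on the unconstrained instance $(A,\bmd+\frac{\lambda}{n}\bmone,\bmb)$ with the very same random set $S$. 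This is precisely what forces the rescaled radius $\sqrt{|S|/n}\,r$ in the statement.

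It then suffices to show that $|\frac{1}{|S|^2}\tilde z^*_{(|S|/n)\lambda}-\frac{1}{n^2}z^*_\lambda|$ is $O(\eps L r^2/n)$ uniformly over the relevant range of $\lambda$: the two concave functions $\lambda\mapsto\frac{1}{n^2}z^*_\lambda-\frac{\lambda r^2}{n^2}$ and $\lambda\mapsto\frac{1}{|S|^2}\tilde z^*_{(|S|/n)\lambda}-\frac{\lambda r^2}{n^2}$ would then be pointwise $O(\eps L r^2/n)$-close, and hence so would be their maxima. To bound the relevant $\lambda$: values with $z^*_\lambda=-\infty$ (nonconvex penalized objective) never attain the outer maximum, and past the value $\lambda^\star$ at which the unconstrained penalized minimizer has norm $r$ the function $\frac{1}{n^2}z^*_\lambda-\frac{\lambda r^2}{n^2}$ is strictly decreasing (envelope theorem); boundedness of the data by $L$ yields an explicit $\lambda^\star\le\Lambda=\poly(n,L,1/r)$, so only $\lambda\in[0,\Lambda]$ matter, and on this interval the penalized minimizers of both instances have norm $O(r)$, respectively $O(\sqrt{|S|/n}\,r)$, which makes the right-hand side of Theorem~\ref{thm:mainTheorem-intro} itself $O(\eps L r^2/n)$. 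A $\poly(n)$-size net of $\lambda$'s in $[0,\Lambda]$, together with the Lipschitz dependence of $z^*_\lambda$ and $\tilde z^*_{(|S|/n)\lambda}$ on $\lambda$ (both are minima of functions affine in $\lambda$ with bounded slope on the relevant domain), then reduces the uniform statement to finitely many applications of Theorem~\ref{thm:mainTheorem-intro}. An alternative to the duality route is to re-run the proof of Theorem~\ref{thm:mainTheorem-intro} with the constraint present from the start: in the graph-limit formulation the constraint becomes an $L^2[0,1]$-ball, which the weak-regularity and sampling lemmas accommodate directly, and the factor $\sqrt{|S|/n}$ is just the ratio by which the $L^2$ norm of a step function changes when it is discretized from $[0,1]$ to $|S|$ sample points.

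The main obstacle I anticipate is this uniformity over the Lagrange multiplier. Applying Theorem~\ref{thm:mainTheorem-intro} at a single data-independent $\lambda$ (for instance the optimal $\lambda^\star$ of the original problem, which is not a function of $S$) immediately yields $\frac{1}{|S|^2}\tilde z^*\ge\frac{1}{n^2}z^*-O(\eps L r^2/n)$; the reverse inequality, however, is governed by the \emph{sampled} optimal multiplier, which does depend on $S$, so one genuinely needs Theorem~\ref{thm:mainTheorem-intro} to hold simultaneously over the entire net $[0,\Lambda]$. This forces boosting the per-instance success probability from $2/3$ to $1-1/\poly(n)$ — by independent repetition of the sampling and taking medians, which the proof of Theorem~\ref{thm:mainTheorem-intro} should support even though the statement is phrased with constant probability — before the union bound, and it requires showing that a near-optimal vector for one of the two problems can be transported, by restricting-and-rescaling or by averaging onto block statistics, to a feasible vector for the other with only $O(\eps L r^2/n)$ change in the normalized objective, while absorbing the $(1\pm\eps)$ distortion of the per-block sample sizes. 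Finally one would collect the $O(1)$ many error contributions — the penalized sampling error from Theorem~\ref{thm:mainTheorem-intro}, the net-discretization error, and the transport error, each of size $O(\eps L r^2/n)$ — and tune the hidden constants to obtain the claimed bound $\eps L r^2/n$ with probability at least $2/3$.
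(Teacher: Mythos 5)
Your one-sentence ``alternative'' at the end of the second paragraph is in fact the paper's entire proof, and it is the route that works: Lemma~\ref{lem:Min-Inf} is stated for an arbitrary radius $r$ precisely so that both $\frac{1}{n^2}z^*$ and $\frac{1}{|S|^2}\tilde z^*$ equal infima of $\Psi$ over the \emph{same} ball $\set{f:\normtwo{f}\le r/\sqrt{n}}$ in function space (this is where the rescaled radius $\sqrt{|S|/n}\,r$ comes from), and the chain of inequalities in the proof of Theorem~\ref{thm:mainTheorem-intro} then goes through verbatim with the feasible region fixed, giving error $\eps L\normtwo{f}^2\le \eps Lr^2/n$. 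Crucially, the only probabilistic event used is the existence of a single measure-preserving bijection $\pi$ making $\mathcal{A},\mathcal{D},\mathcal{B}$ spectrally close to their sampled versions (Lemma~\ref{lem:BoundTwonorm}); once one conditions on it, \emph{every} comparison of minima is deterministic, so no uniformity issue ever arises.

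Your primary route, the Lagrangian/trust-region duality reduction, has two genuine gaps. First, the fix you propose for uniformity over the multiplier --- boosting the success probability of Theorem~\ref{thm:mainTheorem-intro} to $1-1/\poly(n)$ by ``independent repetition of the sampling and taking medians'' --- is incompatible with the statement being proved: the theorem concerns the optimal value $\tilde z^*$ of the subproblem induced by one fixed random set $S$, so you cannot resample; you would instead have to show that the good event underlying Theorem~\ref{thm:mainTheorem-intro} is $\lambda$-independent, which requires opening up its proof, at which point the duality detour buys nothing. Second, when you apply Theorem~\ref{thm:mainTheorem-intro} to the penalized instance $(A,\bmd+\frac{\lambda}{n}\bmone,\bmb)$, the constant $L$ in its error bound becomes $L+\lambda/n$, not $L$. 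The optimal multiplier satisfies only $\lambda^\star=O(nL+n\normtwo{\bmb}/r)=O(n^{3/2}L/r)$ in general, so $L+\lambda^\star/n$ can be of order $\sqrt{n}L/r\gg L$; combined with $\normtwo{\bmv^*}^2/n=O(r^2/n)$ this yields a per-$\lambda$ error of order $\eps Lr/\sqrt{n}$ rather than the claimed $O(\eps Lr^2/n)$, which is worse by a factor $\sqrt{n}/r$ whenever $r\ll\sqrt{n}$. Your assertion that the right-hand side of Theorem~\ref{thm:mainTheorem-intro} is ``itself $O(\eps Lr^2/n)$'' on the relevant range of $\lambda$ silently substitutes the original $L$ for the penalized one, and this is exactly where the reduction breaks.
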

We can design a constant-time algorithm for~\eqref{Eq:MinDiscreatConstrained-intro} by using the result of~\cite{Hayashi:2016wh}, but the resulting error bound will be $O(\epsilon Lr^2)$, which is $n$ times worse than the bound in Theorem~\ref{thm:mainTheoremSphere-intro}.

The proofs of Theorems~\ref{thm:mainTheorem-intro} and~\ref{thm:mainTheoremSphere-intro} rely on a novel decomposition theorem of matrices, which will be discussed in Section~\ref{subsec:techniques}.

As another application of this decomposition theorem, we show that for any (small) $t$, we can approximate the $t$-th largest singular values of a matrix $A \in [-L,L]^{n \times m}$ (denoted $\sigma_t(A)$) to within an additive error of $O(L\sqrt{\epsilon t nm} )$ in time $$\max\Big\{  O\left(\frac{\max\{\log^2 n,\log^2 m\}}{\eps^2}\right), \left(\frac{1}{\eps}\right)^{O(1/\eps^2)}  \Big\}\;.$$
Our algorithms are very simple to implement, and do not require any structure in the input matrix. However, similar results (with better running time) can be obtained by applying known sampling techniques from~\cite{frieze2004fast}.
Formally, we prove the following.
\begin{theorem}\label{the:topEigen-intro}
	Given a matrix $A \in [-L,L]^{n \times m}$, $\epsilon \in (0,1) $, let $$k=
	\max\Bigg\{  O\left(\frac{\max\{\log^2 n,\log^2 m\}}{\eps^2}\right), \left(\frac{1}{\eps}\right)^{O(1/\eps^2)}  \Bigg\}\;.$$
	Then, for every $t = O(k)$, there is an algorithm that runs in $\poly(k)$ time, and outputs a value $z$ such that with probability at least $2/3$,
	$$
	| \sigma_t(A) - z | \leq L  \sqrt{\epsilon t nm}.
	$$
\end{theorem}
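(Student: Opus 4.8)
The plan is to reduce the computation of $\sigma_t(A)$ to a singular value computation on a sampled submatrix, using the decomposition theorem as the bridge. First, recall the variational (min–max, or Courant–Fischer / Ky Fan) characterization of singular values: $\sigma_t(A)^2$ equals the $t$-th largest eigenvalue of $A^\top A$, which can be written in terms of orthonormal sets of test vectors. The key structural input is the decomposition $A = A_{\mathrm{struct}} + A_{\mathrm{pseudo}}$, where $A_{\mathrm{struct}}$ is a block-constant matrix with only $\mathrm{polylog}$ many blocks and $\|A_{\mathrm{pseudo}}\|_2$ is small — specifically of order $\sqrt{\epsilon\, nm}$ up to the $L$ factor (this is the improved Frieze–Kannan-type bound promised in the abstract and, I assume, made precise in Section~\ref{subsec:techniques}). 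By Weyl's inequality for singular values, $|\sigma_t(A) - \sigma_t(A_{\mathrm{struct}})| \le \|A_{\mathrm{pseudo}}\|_2 \le O(L\sqrt{\epsilon\, nm})$, so it suffices to approximate $\sigma_t(A_{\mathrm{struct}})$. Since $t = O(k)$ and $A_{\mathrm{struct}}$ has rank at most the number of blocks $= \mathrm{polylog}$ in each direction, $\sigma_t(A_{\mathrm{struct}})$ is a well-defined quantity captured by the low-dimensional block structure.

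Next, I would pass to the sample. Let $S$ (rows) and $T$ (columns) be random subsets obtained by including each index independently with probability $k/n$ (resp.\ $k/m$), and consider the submatrix $A|_{S,T}$. The decomposition theorem's proof shows that the block partition underlying $A_{\mathrm{struct}}$ is, with constant probability, essentially preserved under such subsampling: the induced partition of $S$ and $T$ has block-density parameters close to the original ones (this is the standard ``sampling preserves cut structure / weak regularity'' phenomenon, quantified via Azuma/Chernoff bounds on the $\mathrm{polylog}$-many block statistics). Consequently $\frac{1}{|S||T|}$-normalized, the sampled structured part $(A|_{S,T})_{\mathrm{struct}}$ has the same eigenvalue profile as $A_{\mathrm{struct}}$ up to an additive error that, after rescaling by $\sqrt{nm}$, is again $O(L\sqrt{\epsilon\, nm})$; and the sampled pseudorandom part still has small spectral norm by the same decomposition bound applied to $A|_{S,T}$. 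The algorithm therefore just computes $\sigma_t(A|_{S,T})$ exactly (an $O(k)$-dimensional SVD, hence $\mathrm{poly}(k)$ time) and outputs $z = \sqrt{\frac{nm}{|S||T|}}\,\sigma_t(A|_{S,T})$, with $|S| \approx k$, $|T| \approx k$ concentrated.

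To assemble the bound, I would chain three approximations: (i) $\sigma_t(A) \approx \sigma_t(A_{\mathrm{struct}})$ with error $\|A_{\mathrm{pseudo}}\|_2$; (ii) $\frac{1}{\sqrt{nm}}\sigma_t(A_{\mathrm{struct}}) \approx \frac{1}{\sqrt{|S||T|}}\sigma_t((A|_{S,T})_{\mathrm{struct}})$ via sampling concentration of the block statistics (here the $t = O(k)$ hypothesis is used so that the $t$-th singular value is genuinely determined by the preserved block structure rather than by tail directions that sampling destroys); and (iii) $\sigma_t((A|_{S,T})_{\mathrm{struct}}) \approx \sigma_t(A|_{S,T})$, again by Weyl applied to the sampled pseudorandom part. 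Each error term is $O(L\sqrt{\epsilon})$ in the $\frac{1}{\sqrt{nm}}$-normalization, so after multiplying through by $\sqrt{nm}$ and absorbing the $\sqrt{t}$ factor (which enters because we control $t$ eigenvalues simultaneously, or equivalently because $\sigma_t \le \frac{1}{\sqrt t}\|A\|_F$-type slack accumulates over $t$ directions), the total error is $O(L\sqrt{\epsilon\, t\, nm})$; rescaling $\epsilon$ by a constant gives exactly the stated bound. The union bound over the $\mathrm{polylog}$-many concentration events, tuned so each fails with probability $o(1)$, keeps the overall success probability at $2/3$.

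The main obstacle I anticipate is step~(ii): showing that the $t$-th singular value of the structured part survives subsampling with only $O(L\sqrt{\epsilon\, nm})$ additive distortion. The delicate point is that $A_{\mathrm{struct}}$ is block-constant but its blocks may be wildly unequal in size, so a uniform-probability sample can badly distort a small block; one must argue either that small blocks contribute negligibly to the top $t$ singular values, or re-weight appropriately, and then control the perturbation of an eigenvalue of a $\mathrm{polylog}\times\mathrm{polylog}$ ``profile matrix'' whose entries are the (concentrated but noisy) sampled block densities. Getting the dependence on $t$ to come out as $\sqrt{t}$ rather than $t$ — i.e.\ exploiting that the relevant error is measured in Frobenius/Hilbert–Schmidt norm across the $t$ directions rather than operator norm direction-by-direction — is the quantitative heart of the argument.
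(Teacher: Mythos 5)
Your algorithm coincides with the paper's --- the output $\sqrt{\widetilde{\Lambda}_{t}-\widetilde{\Lambda}_{t-1}}$ of Algorithm~\ref{alg:topEigenAlg} is (up to the obvious sign) exactly $\sqrt{nm/(|S_R||S_C|)}\,\sigma_t(A|_{S_R\times S_C})$ --- but your analysis follows a genuinely different route and has two concrete gaps. The paper never chains Weyl's inequality through the structured part. It works with the Frobenius tail sums $\Lambda_t=\min\|A-\sum_{\ell\in[t]}\bmu^\ell(\bmv^\ell)^\top\|_F^2=\sum_{\ell>t}\sigma_\ell(A)^2$ (Lemma~\ref{lem:BestRankApprox}), recasts them as optimization problems over dikernels (Lemma~\ref{lem:Min-Inf_general case}), and shows $|\Lambda_t-\widetilde{\Lambda}_t|=O(\eps t\, nm)$ using only the spectral-norm sampling bound $\normtwo{\mathcal{A}-\pi(\mathcal{A}|_{S_R\times S_C})}\le\eps L$ of Lemma~\ref{lem:BoundTwonorm}, applied to both $\mathcal{A}$ and its entrywise square. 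The factor $t$ arises because the cross term in the expanded Frobenius objective has $t$ summands, each bounded by $2\eps\normtwo{f_\ell}\normtwo{g_\ell}$; taking the square root of the resulting error in $\sigma_t(A)^2=\Lambda_{t-1}-\Lambda_t$ gives $O(\sqrt{\eps t\,nm})$. Your own accounting of the $\sqrt{t}$ is symptomatic: a correct Weyl-based chain would produce a $t$-independent error, so the $\sqrt{t}$ you ``absorb'' is inserted to match the statement rather than generated by your argument.

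The gaps are these. First, your step (ii) --- that $\sigma_t$ of the block-constant part survives uniform subsampling with only $O(L\sqrt{\eps\,nm})$ distortion --- is precisely the technical content of Lemma~\ref{lem:ApproxBlockSample}, which aligns the sampled and original structured dikernels via a measure-preserving bijection and controls the spectral norm of their difference by the total-variation discrepancy of the empirical block sizes; you correctly identify this (including the unequal-block-size issue) as the crux but do not prove it. Second, your step (iii) is wrong as stated: re-applying the decomposition theorem to $A|_{S,T}$ produces a decomposition of the submatrix with no relation to the restriction of the original pseudorandom part, and a worst-case $k\times k$ restriction of a matrix with spectral norm $\gamma L\sqrt{nm}$ can have spectral norm as large as $k$ times its max norm (here up to $kL/\gamma^{11}$), far exceeding the $\approx\gamma Lk$ you need after renormalization. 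The paper invokes the random-submatrix bound of Lemma~\ref{lem:RandMatrixNorm} exactly for this step. If you supply these two ingredients and extend Corollary~\ref{cor:NormIdent} from the top singular value to all singular values of the dikernel, your Weyl-based chain does go through and would in fact yield the stronger, $t$-free bound $O(\eps L\sqrt{nm})$ at the same sample complexity; as written, however, the proposal has the right skeleton without the load-bearing lemmas.
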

We note that since the $\sigma_t(A)\le L\sqrt{\frac{nm}{t}}$ (see Fact~\ref{Fact:singularvalues}), the \emph{relative} error the algorithm achieves is at least $\sqrt{\eps}\cdot t$. Therefore, to get meaningful approximation, one must have that $\sqrt{\eps}\cdot t<1$. So, if we wish to set $\eps=O(1)$ then we must have $t=O(1)$. 

Finlay, we present numerical experiments that confirm the empirical performance for accuracy and runtime of our singular values algorithm (see Section~\ref{sec:experiments})

\subsection{Related work}

In machine learning context, Clarkson \emph{et~al.}~\cite{Clarkson:2012} considered several machine learning optimization problems and gave sublinear-time approximation algorithms for those problems.
In particular, they considered approximate minimization of a quadratic function over the unit simplex $\Delta=\{\bmx\in \R^n\mid x_i\ge 0,\; \sum_i x_i=1\}$.
Namely, given a positive semidefinite matrix $A\in \R^{n\times n}$ and $\bmb\in \R^n$, they showed that it is possible to obtain an approximate solution to $\min_{x\in \Delta}\bmx^\top A\bmx+\bmx^\top \bmb$ (up to an additive error of $\eps$) in $\widetilde{O}(n/\eps^2)$ time, which is sublinear in the input size $\Theta(n^2)$.
In contrast, our algorithms run in polylogarithmic time and are much more efficient.
Hayashi and Yoshida~\cite{NIPS2017_6841} proposed a constant-time approximation algorithm for Tucker decomposition of tensors, which can be seen as minimizing low-degree polynomials.

In addition to the work of Hayashi and Yoshida~\cite{Hayashi:2016wh} mentioned above, an additional line of relevant work is constant-time approximation algorithms for the max cut problem on dense graphs~\cite{Frieze:1996er,Mathieu:2008vs}.
Let $L_G \in \R^{n \times n}$ be the Laplacian matrix of a graph $G$ on $n$ vertices.
Then, the max cut problem can be seen as maximizing $\langle \bmx,L_G \bmx\rangle$ subject to $x_i \in \set{-1/\sqrt{n},1/\sqrt{n}}$, and these methods approximate the optimal value to within $O(\epsilon n)$.
Our method for approximating the largest singular values can be seen as an extension of these methods to a continuous setting.

\subsection{Techniques}\label{subsec:techniques}
The main ingredient in our proof is a novel \emph{spectral decomposition theorem} of matrices, which may be of independent interest.
The theorem states that we can decompose a matrix $A \in \R^{n \times m}$ into a structured matrix $A^{\text{str}} \in \R^{n \times m}$ and a pseudorandom matrix $A^{\text{psd}} \in \R^{n \times m}$.
Here, $A^{\text{str}}$ is structured in the sense that it is a block matrix with a polylogarithmic number of blocks such that the entries in each block are equal.
Also, $A^{\text{psd}}$ is pseudorandom in the sense that it has a small spectral norm. Formally, we prove the following.
For a matrix $A \in \R^{n \times m}$, we define its max norm as $\|A\|_{\max}=\max_{ 1 \leq i \leq n}\max_{1 \leq j \leq m}|A_{ij}|$.
\begin{theorem}\label{thm:MatrixDecomposition-intro}
	For any matrix $A\in[-L,L]^{n\times m }$ and $\gamma \in (0,1)$, there exists a decomposition $A= A^{\text{str}}+A^{\text{psd}}$ with the following properties for $N = \sqrt{nm}$:
	\begin{enumerate}
		\item $A^{\text{str}}$ is \emph{structured} in the sense that it is a block matrix with $\left(\frac{1}{\gamma}\right)^{O(1/\gamma^2)}$ blocks, such that the entries in each block are equal.
		\item $\|A^{\text{psd}}\|_2 \le \gamma NL$.
		\item $\|A^{\text{str}}\|_{\max} = L/\gamma^{O(1)}$.
	\end{enumerate}
\end{theorem}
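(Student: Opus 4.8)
The plan is an iterative refinement (energy‑increment) argument in the style of the weak regularity lemma, with one extra observation tailored to the spectral norm: the top singular vectors of the current residual are automatically \emph{flat} (entrywise bounded by $O(1/(\gamma\sqrt n))$ and $O(1/(\gamma\sqrt m))$), which is precisely what lets one round them to step functions with only $\poly(1/\gamma)$ distinct values, and hence keep the number of blocks independent of $n,m$.

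Concretely, I would maintain a partition $\mathcal{P}$ of the rows $[n]$ and a partition $\mathcal{Q}$ of the columns $[m]$, and let $B=A_{\mathcal{P},\mathcal{Q}}$ be the matrix obtained by replacing each entry of $A$ by the average of $A$ over the block containing it; with respect to the Frobenius inner product $\langle X,Y\rangle=\sum_{i,j}X_{ij}Y_{ij}$, $B$ is exactly the orthogonal projection of $A$ onto the subspace $V_{\mathcal{P},\mathcal{Q}}$ of matrices constant on each block. When $\mathcal{P}',\mathcal{Q}'$ refine $\mathcal{P},\mathcal{Q}$ one has $V_{\mathcal{P},\mathcal{Q}}\subseteq V_{\mathcal{P}',\mathcal{Q}'}$, so the Pythagorean identity $\normF{A_{\mathcal{P}',\mathcal{Q}'}}^2=\normF{B}^2+\normF{A_{\mathcal{P}',\mathcal{Q}'}-B}^2$ holds. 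Starting from the trivial partitions, I would use $\Phi:=\normF{B}^2$ as a potential; it stays in $[0,\normF{A}^2]\subseteq[0,N^2L^2]$ throughout, and since every entry of $B$ is an average of entries of $A\in[-L,L]^{n\times m}$, the output $A^{\mathrm{str}}=B$ automatically satisfies $\|A^{\mathrm{str}}\|_{\max}\le L$, which comfortably implies Property~3.

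At a given step, if $\normtwo{A-B}\le\gamma NL$ I would stop and output $A^{\mathrm{str}}=B$ and $A^{\mathrm{psd}}=A-B$, which is Property~2. Otherwise set $\sigma:=\normtwo{A-B}>\gamma NL$ and take unit top left/right singular vectors $\bmu\in\R^n,\bmv\in\R^m$, so $(A-B)\bmv=\sigma\bmu$. Here is the key point: $A-B$ has entries in $[-2L,2L]$, so each of its rows has $\ell_2$‑norm at most $2L\sqrt m$, giving $|\bmu_i|=|((A-B)\bmv)_i|/\sigma\le 2L\sqrt m/(\gamma NL)=2/(\gamma\sqrt n)$ and, symmetrically, $|\bmv_j|\le 2/(\gamma\sqrt m)$. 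Hence $\bmu$ takes values in an interval of length $4/(\gamma\sqrt n)$, and rounding each coordinate down to a grid of spacing $\delta/\sqrt n$ with $\delta=\Theta(\gamma)$ yields $\tilde{\bmu}$ with $O(1/\gamma^2)$ distinct values, $\normtwo{\tilde{\bmu}-\bmu}\le\delta$, and $\normtwo{\tilde{\bmu}}\le 1+\delta$; I would round $\bmv$ to $\tilde{\bmv}$ in the same way (when $n=O(1/\gamma^2)$ the row side needs no rounding at all, as one can just split the rows into singletons, and likewise for $m$).

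The refinement step is then routine: let $\mathcal{P}'$ be the common refinement of $\mathcal{P}$ with the level sets of $\tilde{\bmu}$ (so $|\mathcal{P}'|\le|\mathcal{P}|\cdot O(1/\gamma^2)$) and $\mathcal{Q}'$ the analogous refinement for $\tilde{\bmv}$. Using $\normtwo{A-B}\le\normF{A-B}\le 2NL$ together with the rounding bounds, expanding $\langle A-B,\tilde{\bmu}\tilde{\bmv}^{\top}\rangle$ around $\langle A-B,\bmu\bmv^{\top}\rangle=\sigma$ gives $\langle A-B,\tilde{\bmu}\tilde{\bmv}^{\top}\rangle\ge\sigma-O(\delta NL)\ge\gamma NL/2$ for a suitably small $\delta=\Theta(\gamma)$. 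Since $\tilde{\bmu}\tilde{\bmv}^{\top}\in V_{\mathcal{P}',\mathcal{Q}'}$, it is orthogonal to $A-A_{\mathcal{P}',\mathcal{Q}'}$, so $\langle A_{\mathcal{P}',\mathcal{Q}'}-B,\tilde{\bmu}\tilde{\bmv}^{\top}\rangle=\langle A-B,\tilde{\bmu}\tilde{\bmv}^{\top}\rangle\ge\gamma NL/2$, and dividing by $\normF{\tilde{\bmu}\tilde{\bmv}^{\top}}=\normtwo{\tilde{\bmu}}\,\normtwo{\tilde{\bmv}}\le(1+\delta)^2$ yields $\normF{A_{\mathcal{P}',\mathcal{Q}'}-B}=\Omega(\gamma NL)$. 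By the Pythagorean identity, $\Phi$ increases by $\Omega(\gamma^2 N^2L^2)$ at each step, so the process terminates after $O(1/\gamma^2)$ steps; since each step multiplies $|\mathcal{P}|$ and $|\mathcal{Q}|$ by $O(1/\gamma^2)$, at termination $|\mathcal{P}|,|\mathcal{Q}|\le(O(1/\gamma^2))^{O(1/\gamma^2)}=(1/\gamma)^{O(1/\gamma^2)}$, so $A^{\mathrm{str}}$ is a block matrix with $|\mathcal{P}|\cdot|\mathcal{Q}|=(1/\gamma)^{O(1/\gamma^2)}$ constant blocks, which is Property~1. The hard part is the flatness observation $\|\bmu\|_\infty\le 2/(\gamma\sqrt n)$ and its analogue for $\bmv$: without it a spiky singular vector would force singleton parts and destroy the bound on the number of blocks, and this is exactly where passing to the \emph{spectral} norm (rather than, say, the cut norm, where the corresponding rounding is classical and no flatness is needed) does real work, since the assumption $\sigma>\gamma NL$ is what forces the residual's singular vectors to be spread out. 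The remaining ingredients — a bounded monotone potential and a fixed per‑step increment — are standard regularity bookkeeping.
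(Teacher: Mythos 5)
Your proof is correct, and it reaches the theorem by a genuinely different route from the paper's. The paper argues in one shot: it truncates the SVD of $A$ to the at most $1/\gamma^2$ singular values exceeding $\gamma NL$, observes that the corresponding singular vectors are entrywise small, rounds each of them to a step function with $\poly(1/\gamma)$ distinct values, takes the common refinement of the resulting level-set partitions, and then bounds the spectral norm of the total error (discarded small singular values plus rounding error) by entrywise Frobenius estimates; this forces it to prove an auxiliary max-norm bound on the truncated matrix (Lemma~\ref{lem:BoundValueApproxMarix}) and to treat separately the coordinates falling into the ``large'' buckets $\Sigma_R,\Sigma_C$. You instead run a weak-regularity energy increment on block partitions, using the rounded top singular pair of the residual as the witness that forces a Frobenius-energy gain of $\Omega(\gamma^2 N^2L^2)$ per round. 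The flatness observation does the same work in both proofs --- it is what keeps the number of new parts per singular vector at $\poly(1/\gamma)$ --- but in your version the guarantee $\normtwo{A-A^{\mathrm{str}}}\le \gamma NL$ is enforced by the stopping rule rather than estimated after the fact, and $A^{\mathrm{str}}$ is a genuine block average of $A$, which gives $\norm{A^{\mathrm{str}}}_{\max}\le L$ for free, strictly better than the paper's $2L/\gamma^{11}$ (a quantity that propagates into the sampling bounds of Section~\ref{sec:sampling}, so your construction would actually simplify the downstream analysis). The block counts match: $O(1/\gamma^2)$ rounds each multiplying the partition sizes by $O(1/\gamma^2)$ in your argument, versus a common refinement of $O(1/\gamma^2)$ partitions of size $\poly(1/\gamma)$ in the paper's, both yielding $(1/\gamma)^{O(1/\gamma^2)}$ blocks. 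The steps you label routine (the projection/Pythagoras bookkeeping, the bound $\langle A-B,\tilde{\bmu}\tilde{\bmv}^{\top}\rangle\ge \sigma-O(\delta NL)$ via $\normtwo{A-B}\le\normF{A-B}\le 2NL$, and the Cauchy--Schwarz bound $|u_i|\le 2/(\gamma\sqrt{n})$ from $\sigma>\gamma NL$) all check out.
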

Our decomposition theorem is a strengthening of the matrix decomposition result of Frieze and Kannan~\cite{frieze1999quick,Frieze:1996er}. In particular, they showed that any matrix $A\in\R^{n\times m}$ can be decomposed to $D_1+\cdots+D_s+W$ for $s=O(1/\gamma^2)$, where the matrices $D_i$ are block matrices and $\norm{W}_C\le \gamma nm \|A\|_{\max}$.
Here, $\|W\|_C$ is the cut norm, which is defined as $$\max\limits_{S \subseteq \set{1,\ldots,n}}\max\limits_{T\subseteq \set{1,\ldots,m}}\left|\sum_{i\in S}\sum_{j\in T}W_{ij}\right|\;.$$
By using a result of Nikiforov~\cite{nikiforov2009cut} that $\|W\|_2=O(\sqrt{nm\cdot\|W\|_{\max}\cdot \|W\|_C})$ and the fact that the Frieze-Kannan result implies $\|W\|_{\max}\le\sqrt{s}\|A\|_{\max} $, we get that $\normtwo{W}= O(nm\cdot \|A\|_{\max})$, which is too loose, and thus insufficient for our applications.

Given our decomposition theorem, we can conclude the following.
When approximating~\eqref{Eq:MinDiscreat-intro} and~\eqref{Eq:MinDiscreatConstrained-intro}, we can disregard the pseudorandom part $A^{\text{psd}}$.
This will not affect our approximation by much, since $A^{\text{psd}}$ has a small spectral norm.
In addition, as $A^{\text{str}}$ consists of a polylogarithmic number of blocks, such that the entries in each block are equal, we can hit all the blocks by sampling a polylogarithmic number of indices.
Hence, we can expect that $A|_S$ is a good approximation to $A$.
To formally define the distance between $A$ and $A|_S$ and to show it is small, we exploit graph limit theory, initiated by \Lovasz and Szegedy~\cite{Lovasz:2006jj} (refer to~\cite{Lovasz:2012wn} for a book).

\section{Preliminaries}\label{sec:pre}
For an integer $n$ we let $[n]\eqdef\{1,2,\ldots,n\}$.
Given a set of indices $S=\{i_1,\ldots,i_k\}$, and a vector $\bmv\in\R^n$, we let $\bmv|_S \in \R^k$ be the \textit{restriction} of $\bmv$ to $S$; that is, $(\bmv|_S)_j=v_{i_j}$, for every $i\in[k]$.
Similarly, for a matrix $A\in\R^{n\times m}$ and sets $S_R=\{i_1,\ldots,i_{k_R}\}\subseteq [n]$ and $S_C=\{j_1,\ldots,j_{k_C}\}\subseteq[m]$, we denote the \textit{restriction} of $A$ to $S_R\times S_C$ by $A|_{S_R\times S_C} \in \R^{k_R \times k_C}$; that is, $(A|_{S_R\times S_C})_{j\ell}=A_{i_ji_\ell}$, for every $j\in [k_R]$ and $\ell\in[k_C]$.
When $S_R=S_C=S$ we often use $A|_{S}$ as a shorthand for $A|_{S\times S}$.
We use the notation $x=y\pm z$ as a shorthand for $y-z\le x\le y+z$.

\noindent Given a matrix $A\in\R^{n\times m}$ we define the \textit{Frobenius norm} of $A$ as $\normF{A}=\sqrt{\sum_{(i,j)\in[n]\times[m]}A^2_{ij}}$ and the \textit{max norm} of $A$ as $\norm{A}_{\max}=\max_{i \in [n],j \in [m]}|A_{ij}|$.
For a matrix $A \in \R^{n \times m}$, we let $\sigma_\ell(A)$ denote the $\ell$-th largest singular value of $A$.
It is well known that the largest singular value can be evaluated using the following.
$$\sigma_1(A)=\max_{\bmv \in \R^m:\; \normtwo{\bmv}\le 1}\normtwo{A\bmv}\;.$$
In addition, we state the following fact regarding the singular values.
\begin{fact}\label{Fact:singularvalues}
	Let $A\in\R^{n\times m}$, and consider the singular values of $A$: $\sigma_1(A)\ge \ldots\ge\sigma_{\min\set{n,m}}(A) $. Then, for every $1\le\ell\le\min\set{n,m}$, $\sigma_\ell(A)\le\frac{\normF{A}}{\sqrt{\ell}}$.
\end{fact}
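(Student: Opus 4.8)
The plan is to reduce the statement to the well-known identity relating the Frobenius norm to the singular values, and then use monotonicity of the sorted singular values. Concretely, I would first recall that $\|A\|_F^2 = \operatorname{tr}(A^\top A)$, and that the eigenvalues of the positive semidefinite matrix $A^\top A$ are exactly $\sigma_1(A)^2 \ge \sigma_2(A)^2 \ge \cdots \ge \sigma_{\min\{n,m\}}(A)^2$ (with the remaining eigenvalues equal to zero when $m > n$). Taking the trace as the sum of eigenvalues then yields
\[
\|A\|_F^2 = \sum_{i=1}^{\min\{n,m\}} \sigma_i(A)^2 .
\]
Alternatively one can obtain this directly from the singular value decomposition $A = U\Sigma V^\top$ together with the orthogonal invariance of the Frobenius norm.

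Next I would exploit the fact that the singular values are arranged in non-increasing order. For any fixed $\ell$ with $1 \le \ell \le \min\{n,m\}$, each of the first $\ell$ singular values is at least $\sigma_\ell(A)$, so
\[
\ell \cdot \sigma_\ell(A)^2 \le \sum_{i=1}^{\ell} \sigma_i(A)^2 \le \sum_{i=1}^{\min\{n,m\}} \sigma_i(A)^2 = \|A\|_F^2 .
\]
Dividing by $\ell$ and taking square roots gives $\sigma_\ell(A) \le \|A\|_F / \sqrt{\ell}$, which is the claim.

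There is essentially no obstacle here; the only thing requiring a (standard) justification is the Frobenius–singular value identity, and even that is immediate from the trace characterization. The rest is a one-line averaging argument over the first $\ell$ squared singular values.
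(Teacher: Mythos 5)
Your proof is correct and is the standard argument: the identity $\normF{A}^2=\sum_i\sigma_i(A)^2$ combined with the one-line averaging bound $\ell\,\sigma_\ell(A)^2\le\sum_{i=1}^{\ell}\sigma_i(A)^2$. The paper states this as a Fact without proof, and your argument is exactly the justification it implicitly relies on.
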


\section{Spectral Decomposition Theorem}\label{sec:decomposition}
In this section we will prove the following decomposition theorem.
\begin{theorem}[Spectral decomposition, restatement of Theorem~\ref{thm:MatrixDecomposition-intro}]\label{thm:MatrixDecomposition}
	For any matrix $A\in[-L,L]^{n\times m }$ and $\gamma \in (0,1)$, there exists a decomposition $A= A^{\mathrm{str}}+A^{\mathrm{psd}}$ with the following properties for $N = \sqrt{nm}$:
	\begin{enumerate}
		\item $A^{\mathrm{str}}$ is \emph{structured} in the sense that it is a block matrix with $O\left(\left(\frac{1}{\gamma^{10}}\right)^{3/\gamma^2}\right)$ blocks, such that the entries in each block are equal.
		\item $\|A^{\mathrm{psd}}\|_2\le 7\gamma NL$.
		\item $\|A^{\mathrm{str}}\|_{\max}\le \frac{2}{\gamma^{11}}L$.
	\end{enumerate}
\end{theorem}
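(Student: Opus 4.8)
The plan is to establish Theorem~\ref{thm:MatrixDecomposition} by an energy-increment argument in the spirit of the Frieze--Kannan weak regularity lemma, but carried out directly with the \emph{spectral} norm of the residual and with rank-one increments rather than with cut matrices; as the excerpt already notes, passing through the cut norm loses a factor of $N$ and is too weak. I maintain a pair of partitions $\mathcal{P}$ of $[n]$ and $\mathcal{Q}$ of $[m]$, let $A^{\mathrm{str}}$ be the block-constant matrix whose value on a block $P\times Q$ equals the average of $A$ over $P\times Q$, and set $A^{\mathrm{psd}}=A-A^{\mathrm{str}}$. Starting from the trivial partitions, I repeat one step: if $\normtwo{A^{\mathrm{psd}}}\le 7\gamma NL$ I stop; otherwise I refine both partitions and recompute $A^{\mathrm{str}}$ as the new block averages.

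For the refinement step, suppose $W:=A-A^{\mathrm{str}}$ has $\sigma:=\normtwo{W}>7\gamma NL$, with top left/right singular vectors $u\in\R^n$, $v\in\R^m$ (unit, signs chosen so that $\dotprod{u}{Wv}=\sigma$). The crucial observation is that $u$ and $v$ are automatically \emph{flat}. Indeed, each entry of $A^{\mathrm{str}}$ is an average of entries of $A$, so $\norm{W}_{\max}\le 2L$; and from $Wv=\sigma u$ we get, for every $i$, $\sigma|u_i|=|(Wv)_i|\le\normtwo{W_{i,\cdot}}\le\sqrt{m}\,\norm{W}_{\max}\le 2L\sqrt{m}$, hence $\norm{u}_\infty\le 2/(7\gamma\sqrt{n})$, and likewise $\norm{v}_\infty\le 2/(7\gamma\sqrt{m})$. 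Consequently I may round $u$ coordinatewise to the nearest integer multiple of a suitable $\delta=\Theta(1/\sqrt{n})$, obtaining $\hat u$ with $\normtwo{u-\hat u}\le 1/5$ while $\hat u$ takes at most $O(1/\gamma)$ distinct coordinate values (the clipping to $[-2/(7\gamma\sqrt n),2/(7\gamma\sqrt n)]$ is automatic); I do the same for $v$. I then refine $\mathcal{P}$ by the level sets of $\hat u$ and $\mathcal{Q}$ by those of $\hat v$, so each part splits into $O(1/\gamma)$ pieces.

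To see that this makes progress, note that $\hat u\hat v^\top$ is constant on every block of the refined partition, and that $A^{\mathrm{str}}_{\mathrm{new}}$ (resp.\ $A^{\mathrm{str}}$) is the orthogonal projection of $A$ onto the space of block-constant matrices for the new (resp.\ old, nested) partition. Writing $E:=A^{\mathrm{str}}_{\mathrm{new}}-A^{\mathrm{str}}$, Pythagoras gives $\normF{W}^2-\normF{A-A^{\mathrm{str}}_{\mathrm{new}}}^2=\normF{E}^2$, and since $W$ is orthogonal to all old block-constant matrices while $\hat u\hat v^\top$ lies in the new space, $\normF{E}\ge|\dotprod{W}{\hat u\hat v^\top}|/\normF{\hat u\hat v^\top}=|\hat u^\top W\hat v|/(\normtwo{\hat u}\normtwo{\hat v})$. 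The small rounding errors yield $\hat u^\top W\hat v\ge\sigma/2$ and $\normtwo{\hat u}\normtwo{\hat v}=O(1)$, so the Frobenius energy $\normF{A^{\mathrm{str}}}^2$ strictly increases, by $\Omega(\sigma^2)=\Omega(\gamma^2N^2L^2)$, at each step. As $\normF{A^{\mathrm{str}}}^2\le\normF{A}^2\le N^2L^2$ throughout, the process halts after $s=O(1/\gamma^2)$ steps. At termination: $\normtwo{A^{\mathrm{psd}}}\le7\gamma NL$ (property~2); $A^{\mathrm{str}}$ is constant on the cells of a partition into at most $\bigl(O(1/\gamma)\bigr)^{O(1/\gamma^2)}=(1/\gamma)^{O(1/\gamma^2)}$ blocks, which with the implied constants fixed is $O\bigl((1/\gamma^{10})^{3/\gamma^2}\bigr)$ (property~1); and $\norm{A^{\mathrm{str}}}_{\max}\le L\le\tfrac{2}{\gamma^{11}}L$, since its entries are averages of entries of $A$ (property~3).

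The step I expect to be the main obstacle is the rounding of $u$ and $v$. Without the flatness bound one cannot round a general unit vector to $\mathrm{poly}(1/\gamma)$ distinct values while keeping the $\ell_2$ error small (an $\Omega(\log n)$ number of values is needed in the worst case), which would reintroduce a $\log n$ dependence into the block count; the observation that any singular vector of a matrix with $\norm{\cdot}_{\max}\le L$ belonging to a singular value $\gg\gamma NL$ is uniformly $O(1/(\gamma\sqrt n))$-flat is exactly what removes this obstruction, and the technical heart of the proof is to pick $\delta$ so that $\normtwo{u-\hat u}$ stays small, the number of level sets stays $\mathrm{poly}(1/\gamma)$, and the energy increment stays $\Omega(\gamma^2N^2L^2)$ all at once, i.e.\ carrying every constant through. (If, as the paper's stated bounds suggest, one instead builds $A^{\mathrm{str}}$ as a sum of $O(1/\gamma^2)$ rounded rank-one matrices rather than as block averages, one additionally has to cap the coefficients of those pieces to keep $\norm{A^{\mathrm{str}}}_{\max}$ — and hence the flatness bound used in the next iteration — from blowing up; this is presumably the source of the $\gamma^{-O(1)}$ slack in properties~1 and~3.)
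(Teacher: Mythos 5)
Your proof is correct, but it takes a genuinely different route from the paper's. The paper constructs $A^{\mathrm{str}}$ in one shot: it truncates the SVD of $A$ at threshold $\gamma NL$ (keeping at most $1/\gamma^2$ terms by Fact~\ref{Fact:singularvalues}), rounds each retained singular vector coordinatewise to multiples of $\Theta(\gamma^{O(1)}/\sqrt{n})$ while zeroing the few coordinates exceeding $\gamma^{-O(1)}/\sqrt{n}$, and takes $A^{\mathrm{str}}$ to be the resulting sum of rounded rank-one matrices; the blocks are the common refinement of the level sets, and the error is controlled by splitting the Frobenius sum over the ``small-coordinate'' and ``large-coordinate'' index sets. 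You instead run an iterative energy-increment argument in the style of the Frieze--Kannan weak regularity lemma, keeping $A^{\mathrm{str}}$ equal to the block averages (the Frobenius-orthogonal projection of $A$ onto block-constant matrices) and refining by the level sets of the rounded top singular vectors of the residual. The common technical heart is identical: the observation that a singular vector attached to a singular value $\geq 7\gamma NL$ of a matrix with bounded max-norm is uniformly $O(1/(\gamma\sqrt{n}))$-flat is exactly the paper's Lemma~\ref{lem:BoundValueApproxMarix}, and both proofs obtain $(1/\gamma)^{O(1/\gamma^2)}$ blocks as (number of level sets)$^{\text{(number of vectors used)}}$. Your route buys a strictly stronger item~3, namely $\|A^{\mathrm{str}}\|_{\max}\leq L$, since block averages of entries of $A$ stay in $[-L,L]$, and it avoids the paper's three-way case analysis over $\Sigma_R,\Sigma_C$; your closing parenthetical correctly identifies why the paper's rank-one construction instead incurs the $\gamma^{-O(1)}$ losses in items~1 and~3. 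What the paper's route buys is a non-iterative, explicit description of $A^{\mathrm{str}}$ as a perturbation of the truncated SVD; but since the downstream sampling lemmas (Lemmas~\ref{lem:ApproxBlockSample} and~\ref{lem:BoundTwonorm}) use only the partition sizes and the max-norm bound, your decomposition would serve them equally well.
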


The above theorem will serve as a central tool in the analysis of our algorithms.
The fact that $A^{\mathrm{str}}$ is a block matrix with polylogarithmic number of blocks, such that the entries in each block are equal, implies that by using polylogarithmic number of samples, we can query (with high probability) an entry from each of the blocks. In addition, the fact that $A^{\mathrm{psd}}$ has a small spectral norm allows us to disregard it, which only paying a small cost in the error of our approximation.

In order to prove the theorem, we introduce the following definition, two lemmas and a claim.
\begin{definition}
	We say that a partition $\mathcal{Q}$ is a \textit{refinement} of a partition $\mathcal{P}=\{V_1,\ldots,V_p\}$, if $\mathcal{Q}$ is obtained from $\mathcal{P}$, by splitting some sets $V_i$ into one or more parts.
\end{definition}
\begin{lemma}\label{lem:GoodBlockApprox}
	Given a matrix $A\in[-L,L]^{n\times m}$ and $\gamma \in(0,1)$, there exists a block matrix $A^{\mathrm{str}} \in \R^{n \times m}$ with $O\left(\left(\frac{1}{\gamma^{10}}\right)^{3/\gamma^2}\right)$ blocks such that the entries in each block are equal and  $\normtwo{A-A^{\mathrm{str}}}\le 7\gamma N L$, where $N = \sqrt{nm}$.
\end{lemma}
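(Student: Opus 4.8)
The plan is to build $A^{\mathrm{str}}$ by an energy-increment (Szemer\'edi-style) argument applied to matrix partitions, and then to convert a cut-norm type bound into a spectral-norm bound using Nikiforov's inequality as described in the introduction. More precisely, I would work with partitions $\mathcal{P} = \mathcal{P}_R \times \mathcal{P}_C$ of the row index set $[n]$ and column index set $[m]$, and for such a partition let $A_{\mathcal{P}}$ be the matrix obtained by replacing each entry $A_{ij}$ by the average of $A$ over the block containing $(i,j)$. The key quantity is the ``energy'' (mean-square) $\frac{1}{nm}\normF{A_{\mathcal{P}}}^2$, which is nondecreasing under refinement and bounded above by $L^2$. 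First I would show that if the current partition $\mathcal{P}$ is ``bad'', in the sense that some submatrix $S_R \times S_C$ witnesses $\bigl|\sum_{i \in S_R, j \in S_C}(A - A_{\mathcal{P}})_{ij}\bigr| > \gamma' nm L$ for an appropriate $\gamma' = \Theta(\gamma^{?})$, then refining $\mathcal{P}$ using the four sets $S_R, [n]\setminus S_R, S_C, [m]\setminus S_C$ increases the energy by at least $\Omega(\gamma'^2 L^2)$. Since the energy starts at $\ge 0$ and is capped at $L^2$, this can happen at most $O(1/\gamma'^2)$ times, and each refinement multiplies the number of parts by at most $4$; hence after $O(1/\gamma'^2)$ steps we reach a partition into $2^{O(1/\gamma'^2)} = \left(\frac{1}{\gamma^{10}}\right)^{O(1/\gamma^2)}$ blocks on which $\|A - A_{\mathcal{P}}\|_C \le \gamma' nm L$. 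Tracking the exact constants (the $\gamma^{10}$, the factor $3$ in the exponent) is the bookkeeping one has to do carefully to match the stated bound.

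The second half is the spectral upgrade. Let $W = A - A^{\mathrm{str}}$ with $A^{\mathrm{str}} = A_{\mathcal{P}}$ from the above. We have $\|W\|_C \le \gamma' nmL$ and also $\|W\|_{\max} \le 2L$ (since each entry of $A_{\mathcal{P}}$ is an average of entries in $[-L,L]$). By Nikiforov's bound~\cite{nikiforov2009cut}, $\|W\|_2 = O\bigl(\sqrt{nm \cdot \|W\|_{\max} \cdot \|W\|_C}\bigr) = O\bigl(\sqrt{nm \cdot L \cdot \gamma' nmL}\bigr) = O\bigl(\sqrt{\gamma'} \, nm\, L\bigr) = O(\sqrt{\gamma'}\, N^2 L/N \cdot ... )$ — wait, I must be careful: $\sqrt{nm}\cdot\sqrt{nm} = nm = N^2$, so this gives $\|W\|_2 = O(\sqrt{\gamma'}\, N^2 L)$, which is the wrong scale. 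The fix is to choose $\gamma'$ so that $\sqrt{\gamma'}\, N^2 L \le 7\gamma N L$, i.e.\ $\gamma' = \Theta(\gamma^2/N^2)$ — but that makes the number of blocks depend on $N$, which is not allowed. So this naive route fails, and the real argument must avoid applying Nikiforov globally.

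The correct approach, and the step I expect to be the main obstacle, is to apply the regularity/cut-norm decomposition \emph{blockwise} or with a normalized cut norm. Concretely, I would instead define the cut norm in the normalized form $\|W\|_{C,\mathrm{norm}} = \frac{1}{nm}\|W\|_C$ and note the Frieze--Kannan decomposition gives $A^{\mathrm{str}}$ with $\|W\|_{C,\mathrm{norm}} \le \gamma' L$, and then the right inequality to use is that for a matrix with small normalized cut norm and bounded max norm, $\|W\|_2 \le O(\sqrt{\|W\|_{C,\mathrm{norm}} \cdot \|W\|_{\max}} \cdot \sqrt{nm})$; plugging $\|W\|_{\max} \le 2L/\gamma^{O(1)}$ (controlled by the max-norm bound on $A^{\mathrm{str}}$, item 3) and $\|W\|_{C,\mathrm{norm}} \le \gamma' L$ yields $\|W\|_2 = O(\sqrt{\gamma' /\gamma^{O(1)}}\, NL)$, so choosing $\gamma'$ a suitable polynomial in $\gamma$ (this is where the $\gamma^{10}$ comes from — we need $\gamma' = \Theta(\gamma^{?})$ to beat the $\gamma^{-O(1)}$ loss from $\|W\|_{\max}$ and still land at $7\gamma NL$) gives the claim while keeping the number of blocks $\left(\frac{1}{\gamma^{10}}\right)^{O(1/\gamma^2)}$, independent of $n,m$. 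Thus the proof is: (i) run the energy-increment argument with a carefully chosen threshold $\gamma' = \mathrm{poly}(\gamma)$ to get $A^{\mathrm{str}}$ with few blocks, bounded max norm, and small normalized cut norm; (ii) invoke Nikiforov to pass to spectral norm; (iii) verify the constants line up with $7\gamma NL$. The delicate points are getting the max-norm bound on $A^{\mathrm{str}}$ correct (one may need to restrict to blocks where the average is not too large, pushing the rest into $A^{\mathrm{psd}}$) and choosing the exponents so that everything closes.
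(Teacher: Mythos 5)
Your proposal follows the Frieze--Kannan energy-increment route plus a cut-norm-to-spectral-norm conversion, which is precisely the route the paper rules out in Section~\ref{subsec:techniques}: the authors note there that combining the Frieze--Kannan decomposition with Nikiforov's inequality $\|W\|_2 = O(\sqrt{nm\cdot\|W\|_{\max}\cdot\|W\|_C})$ only yields $\|W\|_2 = O(nm\cdot\|A\|_{\max})$, which misses the target $\gamma N L = \gamma\sqrt{nm}L$ by roughly a factor of $\sqrt{nm}$. You in fact notice this yourself mid-proof (``which is the wrong scale''), but your proposed fix is to invoke the inequality $\|W\|_2 \le O\bigl(\sqrt{\|W\|_{C,\mathrm{norm}}\cdot\|W\|_{\max}}\cdot\sqrt{nm}\bigr)$, i.e.\ $\|W\|_2 = O(\sqrt{\|W\|_C\cdot\|W\|_{\max}})$. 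This is not a renormalization of Nikiforov's bound --- it is a strictly stronger statement, smaller by a factor of $N$ than the inequality the paper cites --- and you neither prove it nor give a reference. Since this single step is what would make the whole argument close, and since the paper's stated motivation for proving Theorem~\ref{thm:MatrixDecomposition} is exactly that no such passage from small cut norm to small spectral norm is available at the required strength, this is a genuine gap, not bookkeeping. (A cut-norm witness is a pair of \emph{sets}; a spectral-norm witness is a pair of arbitrary unit vectors, and refining the partition by cut-norm witnesses gives no control over the latter.)

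The paper's actual proof is structurally different and avoids cut norms entirely. It takes the SVD $A=\sum_\ell\sigma_\ell\bmu^\ell(\bmv^\ell)^\top$, discards all components with $\sigma_\ell<\gamma NL$ (there are at most $1/\gamma^2$ remaining, by Fact~\ref{Fact:singularvalues}), and then \emph{rounds the entries of the surviving singular vectors} to a grid of width $\delta_n=\eps/(J\sqrt{n})$, zeroing out the few coordinates with large magnitude. The level sets of the rounded vectors induce the row and column partitions, so $A^{\mathrm{str}}$ is a block matrix by construction, and the spectral error is bounded via a direct Frobenius-norm estimate of $A'-A^{\mathrm{str}}$ split over the ``small-entry'' and ``large-entry'' index sets. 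If you want to salvage an energy-increment style argument, you would have to run the increment against \emph{spectral} witnesses (unit vectors $u,v$ with $|u^\top(A-A_{\mathcal P})v|$ large) and discretize those witnesses to keep the number of parts bounded --- which is morally what the paper's rounding of singular vectors accomplishes.
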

In order to prove Lemma~\ref{lem:GoodBlockApprox}, we will need to prove the following.
\begin{lemma}\label{lem:BoundValueApproxMarix}
	Given a matrix $A\in [-L,L]^{n\times m}$ and $\gamma \in (0,1)$, let $A'=\sum_{\ell:\sigma_\ell\ge\gamma NL}\sigma_\ell \bmu^\ell(\bmv^\ell)^\top$. Then, $\|A'\|_{\max}\le \frac{L}{\gamma^3}$.
\end{lemma}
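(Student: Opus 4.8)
The plan is to bound each entry $|A'_{ij}|$ by interpreting it as a bilinear form $\bm e_i^\top A' \bm e_j$ and controlling the number of "large" singular values that contribute. First I would observe that $A' = \sum_{\ell \in T} \sigma_\ell \bmu^\ell (\bmv^\ell)^\top$, where $T = \{\ell : \sigma_\ell \ge \gamma N L\}$, and that by Fact~\ref{Fact:singularvalues} applied with $\ell = |T|$ (using $\sigma_{|T|}(A) \ge \gamma N L$ since $|T|$ is the index of the smallest retained singular value), we get $\gamma N L \le \sigma_{|T|}(A) \le \normF{A}/\sqrt{|T|}$. Since $\normF{A} \le L\sqrt{nm} = NL$ because every entry lies in $[-L,L]$, this forces $|T| \le 1/\gamma^2$. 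So $A'$ has rank at most $1/\gamma^2$.

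Next I would bound a single entry. Write $A'_{ij} = \sum_{\ell \in T} \sigma_\ell u^\ell_i v^\ell_j$. Applying Cauchy–Schwarz over the (at most $1/\gamma^2$ many) terms, $|A'_{ij}| \le \big(\sum_{\ell \in T} \sigma_\ell^2 (u^\ell_i)^2\big)^{1/2}\big(\sum_{\ell \in T} (v^\ell_j)^2\big)^{1/2}$. The second factor is at most $\normtwo{\bm e_j}=1$ since the $\bmv^\ell$ are orthonormal. For the first factor, note $\sum_{\ell \in T}\sigma_\ell^2 (u^\ell_i)^2 = \bm e_i^\top A'(A')^\top \bm e_i = \normtwo{(A')^\top \bm e_i}^2 = \normtwo{\text{(}i\text{-th row of }A'\text{)}}^2$. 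The $i$-th row of $A'$ is the $i$-th row of $A$ minus the $i$-th row of $A - A'$ (the tail); alternatively, more directly, the $i$-th row of $A'$ is the projection of the $i$-th row of $A$ onto $\mathrm{span}\{\bmv^\ell : \ell \in T\}$, hence has squared norm at most that of the $i$-th row of $A$, which is at most $mL^2 \le N^2 L^2 \cdot \sqrt{m/n}$. Hmm — this needs a little care with the $n$ versus $m$ asymmetry; cleaner is to bound each singular value itself by $\sigma_\ell \le \sigma_1 \le \normF{A} \le NL$ and each $|u^\ell_i| \le 1$, $|v^\ell_j| \le 1$, giving $|A'_{ij}| \le |T| \cdot NL \le NL/\gamma^2$, which is off by a factor of $N$.

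The right route, then, is the projection bound combined with the rank bound, being careful about normalization. The $i$-th row $r_i$ of $A'$ equals $P r_i^{(A)}$ where $P$ projects onto $V := \mathrm{span}\{\bmv^\ell : \ell \in T\}$ and $r_i^{(A)}$ is the $i$-th row of $A$; so $\normtwo{r_i} \le \normtwo{r_i^{(A)}} \le \sqrt{m}\,L$. But I also want to exploit that $r_i$ lives in the $|T|$-dimensional space $V$, and that $\sum_i \normtwo{r_i}^2 = \normF{A'}^2 = \sum_{\ell \in T}\sigma_\ell^2 \le \normF{A}^2 \le nm L^2$; these give an $\ell^2$ average of $\sqrt{m}L$ per row but not an $\ell^\infty$ bound directly. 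To get the $\ell^\infty$ bound $|A'_{ij}| \le L/\gamma^3$, I would use: $|A'_{ij}| = |\langle \bm e_i, A' \bm e_j\rangle| \le \sigma_1(A') \cdot$ (something), no — rather, factor $A' = \sum \sigma_\ell \bmu^\ell(\bmv^\ell)^\top$ and write $|A'_{ij}| \le \sum_{\ell \in T}\sigma_\ell |u^\ell_i||v^\ell_j| \le \max_\ell \sigma_\ell \cdot \sum_{\ell\in T}|u^\ell_i||v^\ell_j| \le \sigma_1 \cdot |T|$ — still has the $\sigma_1 \le NL$ problem. I think the intended argument upper-bounds $\sigma_\ell$ for $\ell \in T$ not by $\sigma_1$ but rather bounds the whole sum via Cauchy–Schwarz as $|A'_{ij}| \le \sqrt{|T|}\cdot\sqrt{\sum_{\ell\in T}\sigma_\ell^2 (u^\ell_i)^2(v^\ell_j)^2}$ and then bounds $\sum_\ell \sigma_\ell^2(u^\ell_i)^2(v^\ell_j)^2$: since $(v^\ell_j)^2 \le 1$ this is $\le \sum_\ell \sigma_\ell^2 (u^\ell_i)^2 = \normtwo{r_i}^2 \le mL^2$, giving $|A'_{ij}| \le \sqrt{|T|\, m}\,L \le \sqrt{m}L/\gamma$. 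Still not it unless $m = O(1/\gamma^4)$.

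The genuine obstacle, and where I expect the real content to lie, is squeezing out the extra factors: one must combine the row-norm bound $\normtwo{r_i}^2 \le mL^2$, the symmetric column-norm bound $\normtwo{c_j}^2 \le nL^2$, and the rank/Frobenius bound $|T| \le 1/\gamma^2$ in a balanced way — e.g. bound $|A'_{ij}|^2 = |\langle r_i, \bm e_j\rangle|^2$ where $r_i \in V$, so $|A'_{ij}| \le \normtwo{r_i}\cdot\normtwo{P\bm e_j}$, and then bound $\normtwo{P\bm e_j}^2 = \sum_{\ell\in T}(v^\ell_j)^2$ cleverly (this is the $j$-th diagonal entry of the projection $\sum_\ell \bmv^\ell(\bmv^\ell)^\top$, which averages to $|T|/m$ over $j$ but can be as large as $1$). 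Combining $\normtwo{r_i} \le \sqrt{m}L$ with $\normtwo{P\bm e_j} \le 1$ gives exactly $|A'_{ij}| \le \sqrt{m}L$ — so plausibly the statement implicitly uses a normalization in which $\sqrt{m}$ and $N = \sqrt{nm}$ and $1/\gamma^3$ are reconciled via $n,m$ being polynomially related, or the singular vectors/values are scaled so that $\sigma_\ell \ge \gamma N L$ already encodes that $\normtwo{r_i}$ is small on the retained subspace. I would therefore structure the write-up as: (1) prove $|T| \le 1/\gamma^2$ via Fact~\ref{Fact:singularvalues}; (2) express $A'_{ij}$ via the SVD and reduce the entry bound to bounding $\normtwo{r_i}$ restricted to $V$; (3) show $\normtwo{r_i}^2 \le mL^2$ and the column analogue, then take the geometric-mean-type combination with the $|T|$ bound to land at $L/\gamma^3$. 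The delicate step — making the constants and the $n/m$ asymmetry work out to precisely $L/\gamma^3$ rather than $\sqrt m L/\gamma$ — is the one I would spend the most care on, likely by bounding $\sum_{\ell\in T}\sigma_\ell^2(u^\ell_i)^2 \le \min\{mL^2, |T| \cdot \max_\ell \sigma_\ell^2 \cdot \max_\ell (u_i^\ell)^2\}$ is not enough; instead using that for $\ell \in T$, $\sigma_\ell \le \normF A/\sqrt{1} $ is wasteful and one should use $\sigma_\ell^2 \le \normF{A}^2 - \sum_{\ell' < \ell}\sigma_{\ell'}^2$, i.e. a telescoping/interlacing argument tying each retained $\sigma_\ell$ to the row structure.
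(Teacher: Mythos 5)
Your proposal correctly establishes the rank bound $|T|\le 1/\gamma^2$ via Fact~\ref{Fact:singularvalues}, but it never closes the entrywise bound: each of your routes lands at $\sqrt{m}\,L$ or $NL/\gamma^2$, you acknowledge the extra polynomial factor in $m$ or $N$, and the write-up ends with speculation about a ``telescoping/interlacing argument'' that is not carried out. The missing idea is to use the threshold $\sigma_\ell\ge\gamma NL$ in the \emph{denominator}, i.e., to convert the largeness of $\sigma_\ell$ into smallness of the individual coordinates $|u^\ell_i|$ and $|v^\ell_j|$. Concretely, since the left singular vectors are orthogonal, the $j$-th column $c_j^A$ of $A$ decomposes orthogonally as $c_j^A=\sum_\ell \sigma_\ell v^\ell_j\,\bmu^\ell$, so for each retained $\ell$ one has $\sigma_\ell|v^\ell_j|\le\normtwo{c_j^A}\le\sqrt{n}L$, and symmetrically for the $i$-th row, $\sigma_\ell|u^\ell_i|\le\normtwo{r_i^A}\le\sqrt{m}L$. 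Multiplying these two inequalities and dividing by $\sigma_\ell$ gives the per-term bound
$$\sigma_\ell|u^\ell_i||v^\ell_j|\le\frac{\sqrt{nm}\,L^2}{\sigma_\ell}\le\frac{NL^2}{\gamma NL}=\frac{L}{\gamma}\;,$$
and summing over the at most $1/\gamma^2$ retained indices yields $|A'_{ij}|\le L/\gamma^3$. This is exactly the paper's argument.

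The point at which your attempts go astray is structural: every route applies Cauchy--Schwarz (or a max bound) \emph{across} the sum over $\ell$ before exploiting the lower bound on $\sigma_\ell$, so the only use you make of $\sigma_\ell\ge\gamma NL$ is to count terms, and you are then forced to bound $\sigma_\ell$ from \emph{above} by $\sigma_1\le NL$, which is the wrong direction. You in fact write down $\sum_{\ell\in T}\sigma_\ell^2(u^\ell_i)^2\le mL^2$, which already implies $|u^\ell_i|\le\sqrt{m}L/\sigma_\ell$ term by term; had you applied this and its column analogue to each summand separately rather than aggregating, the $1/\sigma_\ell$ decay would have cancelled the $\sqrt{nm}$ and produced $L/\gamma$ per term. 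As written, the proposal does not constitute a proof.
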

\begin{proof}
	Assume that $A$ has $s$ singular values such that $\sigma_\ell \ge\gamma NL$.
	For any $\ell \in [s]$, let $M_\ell=\sigma_\ell \bmu^\ell (\bmv^\ell)^\top$, and let $B$ denote $\sum_{\ell:\;\sigma_\ell<\gamma NL}\sigma_\ell \bmu^\ell (\bmv^\ell)^\top$.
	Then, we can write $A$ as,
	$$A=\underbrace{\sigma_1 \bmu^1 (\bmv^1)^\top+\cdots+\sigma_s \bmu^s (\bmv^s)^\top }_{A'}+\sum_{\ell:\;\sigma_\ell<\gamma NL}\sigma_\ell \bmu^\ell (\bmv^\ell)^\top=M_1+\cdots+M_s +B\;.$$
	Consider any $\ell \in[s]$, $(i,j)\in [n]\times[m]$, and let $\beta^\ell_{ij}\eqdef|\sigma_\ell u^\ell_i v^\ell_j|$.
	
	Let $c^A_j$ denote the $j$-th column  of the matrix $A$.
	So, $c^A_j=c_j^{M_1}+\cdots+c_j^{M_s}+c_j^{B}$. For any $\ell \in[s]$, $c_j^{M_\ell}$ is perpendicular to $c_j^B$ and all $\{c_j^{M_t}\}_t$ such that $t \neq \ell$, and thus,
	$$\normtwo{c_j^A}\ge\normtwo{c_j^{M_\ell}}=\normtwo{\sigma_\ell \bmu^\ell v^\ell_j}=\sigma_\ell  |v^\ell_j|\;.$$
	Let $r^A_i$ denote the $i$-th row of the matrix $A$.
	Then similarly, we have $\normtwo{r_i^A} \ge \sigma_\ell  |u^\ell_i|$, and it follows that $\normtwo{c_j^A} \normtwo{r_i^A} \ge \sigma_\ell \beta^\ell_{ij}$.
	
	On the other hand, since $A\in [-L,L]^{n\times m}$, we have $\normtwo{c_j^A}\le \sqrt{n}L$ and $\normtwo{r_i^A}\le \sqrt{m}L$, and therefore,
	$$\beta^\ell_{ij}\le \frac{\sqrt{nm}L^2}{\sigma_\ell}\le \frac{\sqrt{nm}L^2}{\gamma NL}=\frac{L}{\gamma}\;.$$
	By the fact that $s\le \frac{1}{\gamma^2}$, we get that $\|A'\|_{\max}\le \frac{L}{\gamma^3}$, which concludes the proof.
\end{proof}

With this lemma at hand, we are ready to prove Lemma~\ref{lem:GoodBlockApprox}.

\begin{proofof}{Lemma~\ref{lem:GoodBlockApprox}}
	Recall that $A$ can be written as,
	$$
	A=\sum_\ell \sigma_\ell \bmu^\ell(\bmv^\ell)^\top\;,
	$$
	where $\sigma_1\ge \cdots\ge\sigma_{\min\set{n,m}} \geq 0 $ are the singular values of $A$ and $\bmu^1,\ldots,\bmu^{\min\set{n,m}} \in \R^n$ and $\bmv^1,\ldots,\bmv^{\min\set{n,m}} \in \R^m$ are the corresponding left and right singular vectors.
	If we let $A'$ be such that
	$$
	A'=\sum_{\ell: \; \sigma_\ell\ge\gamma NL}\sigma_\ell \bmu^\ell(\bmv^\ell)^\top\;,
	$$
	then we have that $\frac{\|(A-A')\bmx\|_2}{\|x\|_2}\le \gamma NL$ for any $\bmx\in \R^m$.
	
	Next we show the existence of  $A^{\mathrm{str}}$, which is a block matrix (with $O\left(\left(\frac{1}{\gamma^{10}}\right)^{3/\gamma^2}\right)$ blocks), such that $A^{\mathrm{str}}$ has the same value on every block.
	We construct $A^{\mathrm{str}}$ as follows.
	
	Let $\eps=\eps(\gamma)$ be determined later and let $J=\frac{1}{\gamma^2}$.
	Let $\delta_n\eqdef\frac{\eps}{J\sqrt{ n}}$ and $T\eqdef{(\frac{J}{\eps})}^{3/2}$, and partition the interval $[0,1]$ into buckets $B^n_1,\ldots,B^n_T,B_{\mathrm{Large}}^n$
	such that
	$$
	B^n_t \eqdef[(t-1)\delta_n,t\delta_n)\;\;\forall t\in [T]\qquad \mathrm{and}\qquad B^n_{\mathrm{Large}}\eqdef[\sqrt{J/\eps n},1]\;.
	$$
	For every $\bmu^\ell$ such that $\sigma_\ell\ge \gamma NL$, we define a partition $\mathcal{P}_R^\ell=\{P^\ell_{R,1}\ldots,P^\ell_{R,T},\Sigma^\ell_R\}$ of the indices in $[n]$ so that $P^\ell_{R,t} = \{i\in [n]\mid |u^\ell_i|\in B_t^n\}$ for each $t \in [T]$ and $\Sigma^\ell_R=\{i\in[n]\mid |u^\ell_i|\in B^n_{\mathrm{Large}}\}$.
	We eliminate emptysets from $\mathcal{P}_R^\ell$ if exist.
	Note that by the definition of $\Sigma^\ell_R$ we have that $|\Sigma^\ell_R|\le \eps n/J$.
	Next, for every $\ell$ such that $\sigma_\ell\ge \gamma NL$, we define $\hat{\bmu}^\ell$ as follows.
	$$
	\hat{u}^\ell_i=\begin{cases}
	0, &\text{if  } |u^\ell_i|\in B^n_{\mathrm{Large}}\;, \\
	\delta_n(t-1), & \text{if  } |u^\ell_i|\in B^n_t \text{ for some } t \in [T]\;.
	\end{cases}\;.
	$$
	Next, let $\Sigma_R=\bigcup_{\ell: \sigma_\ell\ge \gamma NL}\Sigma_R^\ell$ and let $\mathcal{P}_R$ be a partition of $[n]\setminus \Sigma_R$ that refines all $\{P^\ell_{R,t} \mid P^\ell_{R,t} \neq \emptyset\}$ for which $\ell$ is such that $\sigma_\ell \ge \gamma NL$.
	
	\medskip
	
	\noindent Similarly define $\hat{\bmv}^\ell$ from $\bmv^\ell$ by setting $\delta_m = \frac{\epsilon}{J\sqrt{m}}$, and defining $\mathcal{P}^\ell_C=\{P_{C,1}^\ell,\ldots,P^\ell_{C,T},\Sigma_C^\ell\}$ analogously for each $\ell$.
	Let $\Sigma_C=\bigcup_{\ell: \sigma_\ell \ge \gamma NL}\Sigma_C^\ell$ and let $\mathcal{P}_C$ be a partition of $[m]\setminus \Sigma_C$ that refines all $\{P^\ell_{C,t}\mid P^\ell_{C,t}\neq\emptyset\}$ for which $\ell$ is such that $\sigma_\ell\ge \gamma NL$.
	
	Define,
	$$
	A^{\mathrm{str}}\eqdef\sum_{\ell: \;\sigma_\ell \ge\gamma NL}\sigma_\ell \hat{\bmu}^\ell(\hat{\bmv}^\ell)^\top\; \qquad \mathrm{and} \qquad A^{\mathrm{psd}}\eqdef A - A^{\mathrm{str}}.
	$$
	
	By Fact~\ref{Fact:singularvalues} we have that $\sigma_\ell \le \frac{NL}{\sqrt{\ell}}$ for all $\ell\ge 1$.
	Therefore, we can have at most $J=1/\gamma^2$ indices such that $\sigma_\ell \ge \gamma NL$.
	Thus, the partition $\mathcal{P}_R$ satisfies $|\mathcal{P}_R| \leq \left(\left(\frac{J}{\eps}\right)^{\frac{3}{2\gamma^2}}\right)=\left(\frac{1}{\gamma^2\eps}\right)^{(3/2\gamma^2)}$.
	Similarly, we have $|\mathcal{P}_C| \leq \left(\frac{1}{\gamma^2\eps}\right)^{(3/2\gamma^2)}$.
	Therefore, the resulting matrix is a block matrix with $O\left(\left(\frac{1}{ \gamma^2\eps}\right)^{\frac{3}{\gamma^2}}\right)$ many blocks, such that all the entries in each block are the same.
	We refer to $\mathcal{P}_R$ and $\mathcal{P}_C$ the \textit{row partition of $A^{\mathrm{str}}$} and the \textit{column partition of $A^{\mathrm{str}}$} respectively.
	
	Next, we have that
	\begin{align*}
	\|A^{\mathrm{psd}}\bmx\|=	\|(A-{A}^{\mathrm{str}})\bmx\|_2^2&\le \normtwo{(A-A')\bmx}^2+\normtwo{(A'-{A}^{\mathrm{str}})\bmx}^2\\
	&\le\gamma^2 N^2L^2 \normtwo{\bmx}^2+ \normtwo{(A'-{A}^{\mathrm{str}})\bmx}^2 \;.
	\end{align*}

	\noindent Consider the ``error'' term $\normtwo{(A'-A^{\mathrm{str}})\bmx}^2$.
	\begin{align*}
	& \normtwo{(A'-A^{\mathrm{str}})\bmx}^2 \le \normtwo{\bmx}^2\sum_{(i,j)\in [n]\times [m]}\left(A'_{ij}-A^{\mathrm{str}}_{ij}\right)^2= \normtwo{\bmx}^2\Bigg(\sum_{(i,j)\in \bar{\Sigma}_R \times \bar{\Sigma}_C}\left(A'_{ij}-A^{\mathrm{str}}_{ij}\right)^2\\
	&\qquad\;\;\;\qquad\qquad+\sum_{(i,j)\in (\Sigma_R \times \bar{\Sigma}_C)\cup (\bar{\Sigma}_R\times \Sigma_C)}\left(A'_{ij}-A^{\mathrm{str}}_{ij}\right)^2+\sum_{(i,j)\in \Sigma_R \times \Sigma_C}\left(A'_{ij}-A^{\mathrm{str}}_{ij}\right)^2\Bigg),
	\end{align*}
	where $\bar{\Sigma}_R = [n] \setminus \Sigma_R$ and $\bar{\Sigma}_C = [m] \setminus \Sigma_C$.
	
	\noindent We analyze each of these terms separately.
	First, note that $$\left(A'_{ij}-A^{\mathrm{str}}_{ij}\right)^2=\left(\sum\limits_{\ell:\sigma_\ell\ge \gamma NL }\sigma_\ell(u^\ell_i v^\ell_j-\hat{u}^\ell_i\hat{v}^\ell_j)\right)^2\;.$$
	So,
	\begin{align*}
	\sum_{(i,j)\in \bar{\Sigma}_R \times \bar{\Sigma}_C }\left(A'_{ij}-A^{\mathrm{str}}_{ij}\right)^2&=\sum_{(i,j)\in \bar{\Sigma}_R \times \bar{\Sigma}_C }\left(\sum_{\ell:\sigma_\ell\ge \gamma NL }\sigma_\ell(u^\ell_iv^\ell_j-\hat{u}^\ell_i\hat{v}^\ell_j)\right)^2\\
	&\stackrel{(\ast)}{\le}\sum_{(i,j)\in \bar{\Sigma}_R \times \bar{\Sigma}_C }\left(\frac{2\sqrt{\eps}}{N}\sum_{\ell:\sigma_\ell\ge \gamma NL }\sigma_\ell\right)^2\\
	&\stackrel{(**)}{\le} \sum_{(i,j)\in \bar{\Sigma}_R \times \bar{\Sigma}_C } \left(\frac{2\sqrt{\eps}}{N}\cdot\frac{2NL}{\gamma}\right)^2 \le \frac{16\eps N^2L^2}{\gamma^2}\;.
	\end{align*}
	Here, $(*)$ follows from the fact that for two indices $i\in\bar{\Sigma}_R, j\in \bar{\Sigma}_C$ we have that $\hat{u}^\ell_i \hat{v}^\ell_j=(u^\ell_i\pm \delta_n)(v^\ell_j\pm \delta_m)\le u^\ell_iv^\ell_j \pm\frac{2\sqrt{\eps}}{N}$. 
	$(**)$ follows from the fact that there can be at most $1/\gamma^2$ indices $\ell$, such that $\sigma_\ell\ge\gamma NL $, and therefore
	\begin{align}
	\sum_{\ell: \sigma_\ell \ge \gamma NL}\sigma_\ell\le \sum_{\ell=1}^{1/\gamma^2}\sigma_\ell \le\sum_{\ell=1}^{1/\gamma^2}\frac{NL}{\sqrt{\ell}}\le \frac{2NL}{\gamma} \;.	 \label{eq:sum-of-large-singular-values}
	\end{align}
	
	\noindent Next, we have that,
	\begin{align*}
	& \sum_{(i,j)\in (\Sigma_R \times \bar{\Sigma}_C)\cup (\bar{\Sigma}_R\times \Sigma_C)}\left(A'_{ij}-A^{\mathrm{str}}_{ij}\right)^2\\
	&=\sum_{(i,j)\in (\Sigma_R \times \bar{\Sigma}_C)\cup (\bar{\Sigma}_R\times \Sigma_C)}\left(\sum_{\ell:\sigma_\ell\ge \gamma NL }\sigma_\ell(u^\ell_iv^\ell_j-\hat{u}^\ell_i\hat{v}^\ell_j)\right)^2\\
	&\stackrel{(***)}{=}\sum_{(i,j)\in (\Sigma_R \times \bar{\Sigma}_C)\cup (\bar{\Sigma}_R\times \Sigma_C)}\left(A'_{ij}\right)^2\le \sum_{(i,j)\in (\Sigma_R \times \bar{\Sigma}_C)\cup (\bar{\Sigma}_R\times \Sigma_C)}\left(\frac{L}{\gamma^3}\right)^2\le \frac{2\eps N^2 L^2}{\gamma^6}\;.
	\end{align*}
	Here, $(***)$ follows from the fact that when one of the indices is in $\Sigma_R$ or $\Sigma_C$ we set the corresponding entry in the rounded vector to $0$.
	In addition, the last inequality follows from the fact that when we remove the lower singular part of the matrix, we can only increase the value by at most factor of $1/\gamma ^3$ (see Lemma \ref{lem:BoundValueApproxMarix}).
	Finally,
	\begin{align*}
	\sum_{(i,j)\in \Sigma_R \times \Sigma_C}\left(A'_{ij}-A^{\mathrm{str}}_{ij}\right)^2&=\sum_{(i,j)\in \Sigma_R \times \Sigma_C}\left(\sum_{\ell:\sigma_\ell\ge \gamma NL }\sigma_\ell(u^\ell_iv^\ell_j-\hat{u}^\ell_i\hat{v}^\ell_j)\right)^2\\
	&\stackrel{}{=}\sum_{(i,j)\in \Sigma_R \times \Sigma_C}\left(A'_{ij}\right)^2\le \frac{\eps^2 N^2L^2}{\gamma^6}
	\end{align*}
	
	Combining all the three terms and setting $ \eps=\gamma^8$ gives,
	$$
	\normtwo{(A'-A^{\mathrm{str}})\bmx}^2 \le \normtwo{\bmx}^2\left(\frac{16\eps}{\gamma^2}+\frac{2\eps}{\gamma^6}+\frac{\eps^2}{\gamma^6}\right)N^2L^2\le 19\gamma^2\normtwo{\bmx}^2N^2L^2.
	$$
	Therefore, we get that,
	\begin{align*}
	\|(A-A^{\mathrm{str}})\bmx\|_2^2&\le 2\normtwo{(A-A')\bmx}^2+2\normtwo{(A'-A^{\mathrm{str}})\bmx}^2 \\
	&\le 2(\gamma^2 N^2L^2 \normtwo{\bmx}^2+ 19\gamma^2\normtwo{\bmx}^2N^2L^2)\le 40\gamma^2\normtwo{\bmx}^2N^2L^2\;,
	\end{align*}
	
	and the lemma follows.
\end{proofof}

We are left with bounding the max norm of $A^{\text{str}}$.
\begin{claim}\label{lem:BoundedValuesSTR}
	Given a matrix $A\in [-L,L]^{n\times m}$ and $\gamma \in (0,1)$, let $ A^{\mathrm{str}}\in \R^{n \times m}$ be the block approximation matrix defined above.
	Then, $\|{A}^{\mathrm{str}}\|_{\max}\le \frac{2L}{\gamma^{11}}$
\end{claim}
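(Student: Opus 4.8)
The plan is to bound $|A^{\mathrm{str}}_{ij}|$ for an arbitrary entry $(i,j)\in[n]\times[m]$, exploiting the fact that the rounding operations $\bmu^\ell\mapsto\hat{\bmu}^\ell$ and $\bmv^\ell\mapsto\hat{\bmv}^\ell$ can only shrink coordinates in absolute value. Since $A^{\mathrm{str}}=\sum_{\ell:\,\sigma_\ell\ge\gamma NL}\sigma_\ell\hat{\bmu}^\ell(\hat{\bmv}^\ell)^\top$, the triangle inequality gives $|A^{\mathrm{str}}_{ij}|\le\sum_{\ell:\,\sigma_\ell\ge\gamma NL}\sigma_\ell\,|\hat{u}^\ell_i|\,|\hat{v}^\ell_j|$, so it suffices to control each summand.

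First I would verify that $|\hat{u}^\ell_i|\le|u^\ell_i|$ and $|\hat{v}^\ell_j|\le|v^\ell_j|$ for every $\ell$ with $\sigma_\ell\ge\gamma NL$. This is immediate from the definition of $\hat{\bmu}^\ell$: if $|u^\ell_i|\in B^n_{\mathrm{Large}}$ then $\hat{u}^\ell_i=0$, and otherwise $|u^\ell_i|\in B^n_t=[(t-1)\delta_n,t\delta_n)$ for some $t\in[T]$, whence $|\hat{u}^\ell_i|=(t-1)\delta_n\le|u^\ell_i|$; the same argument applies to $\hat{\bmv}^\ell$. Consequently each summand satisfies $\sigma_\ell\,|\hat{u}^\ell_i|\,|\hat{v}^\ell_j|\le\sigma_\ell\,|u^\ell_i|\,|v^\ell_j|=\beta^\ell_{ij}$, and the per-entry estimate $\beta^\ell_{ij}\le L/\gamma$ established inside the proof of Lemma~\ref{lem:BoundValueApproxMarix} bounds each summand by $L/\gamma$.

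To finish, I would recall (via Fact~\ref{Fact:singularvalues}, exactly as in the proof of Lemma~\ref{lem:BoundValueApproxMarix}) that $\sigma_\ell\le NL/\sqrt{\ell}$, so the condition $\sigma_\ell\ge\gamma NL$ forces $\ell\le1/\gamma^2$; hence the sum ranges over at most $1/\gamma^2$ indices $\ell$. Putting the pieces together, $|A^{\mathrm{str}}_{ij}|\le\frac{1}{\gamma^2}\cdot\frac{L}{\gamma}=\frac{L}{\gamma^3}\le\frac{2L}{\gamma^{11}}$, where the last step uses $\gamma\in(0,1)$. Since $(i,j)$ was arbitrary, this yields $\|A^{\mathrm{str}}\|_{\max}\le2L/\gamma^{11}$.

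There is no genuinely hard step here; the argument is essentially a two-line computation. The only points requiring care are (i) that the rounding is magnitude-nonincreasing uniformly over all coordinates, including those placed in the ``large'' bucket, where the rounded value is exactly $0$ so the inequality is trivial but must still be invoked for the bound to go through on every entry (rows in $\Sigma_R$, columns in $\Sigma_C$); and (ii) that one is entitled to reuse the \emph{per-entry} bound $\beta^\ell_{ij}\le L/\gamma$ from the proof of Lemma~\ref{lem:BoundValueApproxMarix} rather than only its aggregated conclusion $\|A'\|_{\max}\le L/\gamma^3$ — the latter bounds $|\sum_\ell\sigma_\ell u^\ell_i v^\ell_j|$, whereas here we need a bound on $\sum_\ell|\sigma_\ell u^\ell_i v^\ell_j|$.
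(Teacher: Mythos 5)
Your proof is correct, and it takes a genuinely different (and in fact sharper) route than the paper's. The paper bounds each rounded coordinate by the \emph{bucketing threshold}: since any coordinate with $|u^\ell_i|\ge\sqrt{J/\eps n}$ is zeroed out, every surviving $|\hat u^\ell_i|$ is at most $\sqrt{J/\eps n}=1/(\gamma^5\sqrt n)$ (using $J=1/\gamma^2$, $\eps=\gamma^8$), and similarly $|\hat v^\ell_j|\le 1/(\gamma^5\sqrt m)$; it then invokes the aggregate bound $\sum_{\ell:\sigma_\ell\ge\gamma NL}\sigma_\ell\le 2NL/\gamma$ from~\eqref{eq:sum-of-large-singular-values} to get $|A^{\mathrm{str}}_{ij}|\le\frac{1}{\gamma^{10}N}\cdot\frac{2NL}{\gamma}=\frac{2L}{\gamma^{11}}$. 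You instead use the monotonicity $|\hat u^\ell_i|\le|u^\ell_i|$, $|\hat v^\ell_j|\le|v^\ell_j|$ together with the per-entry estimate $\beta^\ell_{ij}=\sigma_\ell|u^\ell_i||v^\ell_j|\le L/\gamma$ from inside the proof of Lemma~\ref{lem:BoundValueApproxMarix} and the count $s\le 1/\gamma^2$, yielding the stronger conclusion $\|A^{\mathrm{str}}\|_{\max}\le L/\gamma^3$. Your two caveats are both on point: the zeroing in the large buckets makes the monotonicity hold trivially there, and you correctly cite the per-entry bound $\beta^\ell_{ij}\le L/\gamma$ rather than the stated conclusion of Lemma~\ref{lem:BoundValueApproxMarix}, since the latter only controls the signed sum while you need the sum of absolute values (though, as it happens, the paper's proof of that lemma actually bounds $\sum_\ell\beta^\ell_{ij}$ anyway). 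A minor trade-off: your argument depends on reopening the proof of Lemma~\ref{lem:BoundValueApproxMarix}, whereas the paper's is self-contained given the rounding thresholds; but your bound is independent of the particular choice $\eps=\gamma^8$ and is polynomially tighter.
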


\medskip
\begin{proof}
	By the definition of $A^{\mathrm{str}}$ we have that  $| A^{\mathrm{str}}_{ij}|=|\sum_{\ell : \sigma_\ell\ge \gamma NL}\sigma_\ell \hat u^\ell_i \hat v^\ell_j|$. By the definition of the rounding process, we have that for every $i\in [n]$ we have that $|\hat u^\ell_i|\le\sqrt{\frac{J}{\eps n}}= \frac{1}{\gamma^5 \sqrt{n}}$ (recall that $J=1/\gamma^2$ and $\eps=\gamma^8$). 
	Similarly, for every $j\in[m]$ we have that $|\hat v^\ell_j|\le \frac{1}{\gamma^5 \sqrt{m}}$.
	Therefore, $$
	|A^{\mathrm{str}}_{ij}|=\left|\sum_{\ell : \sigma_\ell\ge \gamma NL}\sigma_\ell \hat u^\ell_i \hat v^\ell_j\right|\le \frac{1}{\gamma^{10 }N}\sum_{\ell : \sigma_\ell\ge \gamma NL}\sigma_\ell\le \frac{2L}{\gamma^{11}}\;,
	$$
	where the last inequality uses~\eqref{eq:sum-of-large-singular-values}.
\end{proof}

\begin{proofof} {Theorem~\ref{thm:MatrixDecomposition}}
	The proof follows directly from Lemma~\ref{lem:GoodBlockApprox} and Claim~\ref{lem:BoundedValuesSTR}.
\end{proofof}

\section{Dikernels and Sampling Lemmas}\label{sec:sampling}
In this section we will formalize the idea that $A|_{S_R\times S_C}$ is a good approximation of $A$ when $S_R$ and $S_C$ are uniformly random subsets of indices.
(The proof for $A|_S$, where $S$ is uniformly random subset of indices, is almost identical and we omit it.)
We start by providing some background on dikernels and their connection to matrices and then move on to proving our sampling lemmas.

\subsection{Dikernels and Matrices}
We call a (measurable) function $f\colon[0,1]^2\to \R$ a \textit{dikernel}.
We can regard a dikernel as a matrix whose index is specified by a real value in $[0, 1]$.
For two functions $f, g : [0, 1] \to\R$, we define their inner product as $\langle f,g\rangle\eqdef\int_{0}^{1}f(x)g(x)dx$.
For a dikernel $\mathcal{A} :[0,1]^2 \to \R$ and a function $f :[0,1]\to\R$, we define the function $\mathcal{A} f\colon[0,1]\to \R$ as $(\mathcal{A} f)(x)=\dotprod{\mathcal{A} (x,\cdot)}{f}$.
In addition, we define the \textit{spectral norm} of $\mathcal{A} $ as $\normtwo{\mathcal{A} }\eqdef \sup_{f\colon[0,1]\to \R}\frac{\normtwo{\mathcal{A} f}}{\normtwo{f}}$, and  the \textit{Frobenius norm} of $\mathcal{A}$ as $\normF{\mathcal{A}}\eqdef\sqrt{\int_{0}^{1}\int_{0}^{1}\mathcal{A}(x,y)^2 dxdy }$.

For an integer $n \in \mathbb{N}$, let $I^n_1=[0,\frac{1}{n}]$, and for every $1<k\le n$, let $I^n_k=(\frac{k-1}{n},\frac{k}{n}]$.
For $x\in[0,1]$, we define $i^n(x) $ as the unique integer $k\in [n]$ such that $x\in I^n_k$.
\begin{definition}
	Given a matrix $A\in\R^{n\times m}$, we construct the corresponding dikernel $\mathcal{A}$ as $\mathcal{A}(x,y)=A_{i^n(x),i^m(y)}$.
	In addition, given two sets of indices $S_R \subseteq [n]$ and $S_C \subseteq [m]$, when we write $\mathcal{A|_{S_\text{R}\times S_\text{C}}}$, we first extract the matrix $A|_{S_R\times S_C}$ and then consider its corresponding dikernel.
\end{definition}
The following lemma shows that the spectral norms of $A$ and $\mathcal{A}$ are essentially the same up to normalization. 
\begin{lemma}\label{lem:normRelation}
	Let $A\in \R^{n\times m}$ be a matrix.
	Then, we have
	$$
	\max_{\bmv \in \R^m} \frac{\normtwo{A\bmv}^2}{\normtwo{\bmv}^2}= nm\cdot \sup_{f\colon[0,1]\to \R} \frac{\normtwo{\mathcal{A}f}^2}{\normtwo{f}^2}\;.
	$$ 
\end{lemma}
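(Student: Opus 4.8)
The plan is to show the two quantities are equal by exhibiting, for each side, a feasible point achieving (arbitrarily close to) the other side's supremum. The key observation is that the dikernel $\mathcal{A}$ is constant on each cell $I^n_j \times I^m_\ell$, so the operator $\mathcal{A}$ only ``sees'' the averages of $f$ over the intervals $I^m_1,\dots,I^m_m$. Concretely, for $f\colon[0,1]\to\R$ define $c_\ell = \int_{I^m_\ell} f(y)\,dy$ for $\ell\in[m]$, and let $\bmv \in \R^m$ be the vector with $v_\ell = \sqrt{m}\,c_\ell$. A direct computation gives $(\mathcal{A}f)(x) = \sum_{\ell} A_{i^n(x),\ell}\, c_\ell = \frac{1}{\sqrt m}(A\bmv)_{i^n(x)}$, so $\mathcal{A}f$ is constant on each $I^n_j$ with value $\frac{1}{\sqrt m}(A\bmv)_j$, whence $\normtwo{\mathcal{A}f}^2 = \frac{1}{nm}\normtwo{A\bmv}^2$.

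First I would handle the direction ``$\le$'' (i.e.\ the sup side is at least $\frac{1}{nm}$ times the max side). Given any $\bmv\in\R^m$, take $f$ to be the step function that equals $\sqrt{m}\,v_\ell$ on $I^m_\ell$; then $c_\ell = v_\ell/\sqrt m$, so $\bmv$ is recovered, $\normtwo{f}^2 = \normtwo{\bmv}^2$, and by the computation above $\normtwo{\mathcal{A}f}^2/\normtwo{f}^2 = \frac{1}{nm}\normtwo{A\bmv}^2/\normtwo{\bmv}^2$. Taking the max over $\bmv$ gives $\sup_f \normtwo{\mathcal{A}f}^2/\normtwo{f}^2 \ge \frac{1}{nm}\max_{\bmv}\normtwo{A\bmv}^2/\normtwo{\bmv}^2$.

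For the reverse direction ``$\ge$'', start from an arbitrary $f$ and set $\bmv$ as above ($v_\ell = \sqrt m\, c_\ell$). Then $\normtwo{\mathcal{A}f}^2 = \frac{1}{nm}\normtwo{A\bmv}^2$ exactly, while by Cauchy--Schwarz $c_\ell^2 \le \frac1m \int_{I^m_\ell} f^2$, so $\normtwo{\bmv}^2 = m\sum_\ell c_\ell^2 \le \sum_\ell \int_{I^m_\ell} f^2 = \normtwo{f}^2$. Hence $\normtwo{\mathcal{A}f}^2/\normtwo{f}^2 \le \frac{1}{nm}\normtwo{A\bmv}^2/\normtwo{\bmv}^2 \le \frac{1}{nm}\max_{\bmv}\normtwo{A\bmv}^2/\normtwo{\bmv}^2$, and taking the sup over $f$ finishes. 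Combining the two directions yields the claimed equality.

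The only mild subtlety — not really an obstacle — is making sure the supremum over dikernel inputs $f$ is not affected by the non-injective, ``averaging'' nature of $\mathcal{A}$: one must check that replacing $f$ by its piecewise-constant conditional expectation onto the partition $\{I^m_\ell\}$ can only decrease $\normtwo{f}$ while leaving $\mathcal{A}f$ unchanged, which is exactly the Cauchy--Schwarz step above. Care is also needed that $\bmv = \bm 0$ / $f=0$ cases are excluded from the ratios, but this is standard. No deep input is required beyond Cauchy--Schwarz and the block structure of $\mathcal{A}$.
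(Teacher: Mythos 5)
Your proof is correct, and for the harder direction it takes a genuinely different and more elementary route than the paper. The easy direction (embedding a vector $\bmv$ as the step function $f(x)=v_{i^m(x)}$, up to your $\sqrt{m}$ rescaling) matches the paper's argument. For the reverse inequality, however, the paper proceeds variationally: it asserts the existence of an extremizer $f^*$ of the Rayleigh quotient (via weak continuity and weak compactness of the unit ball), computes the functional derivative $\frac{\partial}{\partial f(x)}\normtwo{\mathcal{A}f}^2/\normtwo{f}^2$, and reads off from the stationarity condition that $f^*$ must be almost constant on each interval $I^m_j$, after which it converts $f^*$ back to a vector. Your argument replaces all of this with the observation that $\mathcal{A}f$ depends only on the cell averages $c_\ell=\int_{I^m_\ell}f$, together with the Cauchy--Schwarz bound $\normtwo{\bmv}^2=m\sum_\ell c_\ell^2\le\normtwo{f}^2$ — i.e., projecting $f$ onto piecewise-constant functions fixes $\mathcal{A}f$ and can only shrink $\normtwo{f}$. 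This is cleaner and sidesteps the existence-of-a-maximizer and ``stationarity implies piecewise constant'' steps, which in the paper are stated somewhat informally. What the paper's heavier variational machinery buys is reusability: essentially the same KKT-style argument is deployed again for the constrained problems in Lemma~\ref{lem:Min-Inf} and Lemma~\ref{lem:Min-Inf_general case}, where the objective is no longer a pure Rayleigh quotient and the averaging trick does not apply as directly.
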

\begin{proof} 
	Before starting the proof we introduce some notations and an important observation. We note that $\sup_{f:[0,1]\to \R} \frac{\normtwo{\mathcal{A}f}^2}{\normtwo{f}^2}$ has a minimizer, since the objective function is weakly continuous and we may assume that $\normtwo{f}\le 1$ which is weakly compact.
	For every $x\in[0,1]$ and $j\in[m]$, we let $\mathcal{A}(x,I^m_j)=A_{i^n(x)j}$.
	Also for every $(i,j) \in [n] \times [m]$, we let $\mathcal{A}(I^n_i,I^m_j)=A_{ij}$.
	
	We start by showing that $\max_{\bmv \in \R^m} \frac{\normtwo{A\bmv}^2}{\normtwo{\bmv}^2}\le nm\cdot \sup_{f:[0,1] \to \R}\frac{\normtwo{\mathcal{A}f}^2}{\normtwo{f}^2}$.
	Given a vector $\bmv \in \R^m$, we define the function $f:[0,1]\to\R$ as $f(x)=v_{i^m(x)}$.
	Then,
	\begin{align*}
	\normtwo{\mathcal{A}f}^2&=\int_{0}^{1}\left(\int_{0}^{1}\mathcal{A}(x,y)f(y)dy\right)^2dx=\int_{0}^{1}\left(\sum_{j\in [m]}\int_{I^m_j}\mathcal{A}(x,y)f(y)dy\right)^2dx\\
	&=\int_{0}^{1}\left(\frac{1}{m}\sum_{j\in [m]}\mathcal{A}({x},I^m_j)v_j\right)^2dx=\frac{1}{n}\sum_{i \in [n]}\left(\frac{1}{m}\sum_{j\in [m]}\mathcal{A}({I^n_i},I^m_j)v_j\right)^2 \\
	&=\frac{1}{n}\sum_{i \in [n]}\left(\frac{1}{m}\sum_{j\in [m]}{A}_{ij}v_j\right)^2=\frac{1}{nm^2}\normtwo{A\bmv}^2\;.
	\end{align*}
	In addition,
	$$
	\normtwo{f}^2=\int_{0}^{1}f(x)^2dx=\sum_{j\in[m]}\int_{I^m_j}f(x)^2dx=\frac{1}{m}\sum_{j\in[m]}v_j^2=\frac{1}{m}\normtwo{\bmv}^2\;.
	$$
	
	Next, we will prove that $\max_{\bmv \in \R^m} \frac{\normtwo{A\bmv}^2}{\normtwo{\bmv}^2}\ge nm\cdot \sup_{f:[0,1] \to \R}\frac{\normtwo{\mathcal{A}f}^2}{\normtwo{f}^2}$.
	Let $f:[0,1] \to \R$ be a measurable function. Then, for any $x\in[0,1]$, consider the partial derivative,
	\begin{align*}
	\frac{\partial}{\partial f(x)}\frac{\normtwo{\mathcal{A}f}^2}{\normtwo{f}^2}=\frac{\normtwo{f}^2\cdot\frac{\partial}{\partial f(x)}\normtwo{\mathcal{A}f}^2-\normtwo{\mathcal{A}f}^2\cdot\frac{\partial}{\partial f(x)}\normtwo{f}^2}{\normtwo{f}^4}\;.
	\end{align*}
	Note that,
	$$
	\frac{\partial}{\partial f(x)}{\normtwo{\mathcal{A}f}^2}=\frac{\partial}{\partial f(x)}\dotprod{\mathcal{A}f}{\mathcal{A}f}=\frac{\partial}{\partial f(x)}\dotprod{f}{(\mathcal{A}^* \mathcal{A})f}\;,
	$$
	where $\mathcal{A}^*(x,y)=\mathcal{A}^*(y,x)$.
	So, for $\mathcal{M}=\mathcal{A}^* \mathcal{A}$, we have
	\begin{align*}
	\frac{\partial}{\partial f(x)}{\normtwo{\mathcal{A}f}^2}&=\frac{\partial}{\partial f(x)}\dotprod{f}{\mathcal{M}f}\\
	&=\sum_{j \in [m]}\int_{I^m_j}\mathcal{M}({I^m_j,i^m(x)})f(y)dy+\sum_{j\in [m]}\int_{I^m_j}\mathcal{M}({i^m(x),I^m_j})f(y)dy\;.
	\end{align*}
	Therefore,
	\begin{align*}
	&\frac{\partial}{\partial f(x)}\frac{\normtwo{\mathcal{A}f}^2}{\normtwo{f}^2}\\
	&=\frac{\normtwo{f}^2\left(\sum_{j\in[m]}\int_{I^m_j}\mathcal{M}({I^m_j,i^m(x)})f(y)dy+\sum_{j\in[m]}\int_{I^m_j}\mathcal{M}({i^n(x),I^m_j})f(y)dy\right)}{\normtwo{f}^4}\\
	&\qquad\qquad\qquad\qquad\qquad\qquad\qquad- \frac{2\normtwo{\mathcal{A}f}^2\cdot f(x)}{\normtwo{f}^4}\;.
	\end{align*}

	Consider the optimal solution $f^*$. By the form of the partial derivative, it holds that $f^*(z)=f^*(z')$ for almost all $z,z'\in[0,1]$ such that $i_n(z)=i_n(z')$. That is, $f^*$ is almost constant on each of the intervals $I^m_1,\ldots,I^m_m$.
	Hence, we can define $\bmv \in \R^m$ as $v_j=f^*(x)$, where $x$ is the dominant value in $I^m_j$.
	Then,
	\begin{align*}
	\normtwo{A\bmv}^2&=\sum_{i \in [n]}\left(\sum_{j\in[m]}A_{ij}v_j\right)^2
	=nm^2\int_{0}^{1}\left(\sum_{j\in[m]}\int_{I^m_j}\mathcal{A}(x,I^m_j)f^*(y)dy\right)^2dx \\
	&=nm^2\int_{0}^{1}\left(\int_{0}^{1}\mathcal{A}(x,y)f^*(y)dy\right)^2dx 
	=nm^2 \normtwo{\mathcal{A}f^*}^2 
	\end{align*}
	Moreover,
	$$
	\normtwo{f^*}^2=\int_{0}^{1}f^N(x)^2dx=\sum_{j\in[m]}\int_{I^m_j}f^*(x)^2dx=\frac{1}{m}\sum_{j\in[m]}v_j^2
	$$
	Therefore, we have $\max_{v\in\R^m} \frac{\normtwo{A\bmv}^2}{\normtwo{\bmv}^2}\ge nm\cdot \sup_{f:[0,1] \to \R}\frac{\normtwo{\mathcal{A}f}^2}{\normtwo{f}^2}$.
\end{proof}

\begin{corollary}\label{cor:NormIdent}
	Let $A\in \R^{n\times m}$ be a matrix.
	Then,
	$$
	\normtwo{A}={\sqrt{nm}}\cdot \normtwo{\mathcal{A}}\;.
	$$
\end{corollary}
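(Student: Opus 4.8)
The plan is to derive Corollary~\ref{cor:NormIdent} as an immediate consequence of Lemma~\ref{lem:normRelation}, exploiting the variational characterizations of the spectral norm for both matrices and dikernels. Recall that for a matrix $A\in\R^{n\times m}$ we have $\normtwo{A}=\sigma_1(A)=\max_{\bmv\in\R^m}\normtwo{A\bmv}/\normtwo{\bmv}$, where the maximum is over nonzero $\bmv$; equivalently $\normtwo{A}^2=\max_{\bmv\neq 0}\normtwo{A\bmv}^2/\normtwo{\bmv}^2$. Likewise, by the definition of the spectral norm of a dikernel given in Section~\ref{sec:sampling}, $\normtwo{\mathcal{A}}^2=\sup_{f\colon[0,1]\to\R}\normtwo{\mathcal{A}f}^2/\normtwo{f}^2$.

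First I would invoke Lemma~\ref{lem:normRelation}, which states exactly that
$$
\max_{\bmv\in\R^m}\frac{\normtwo{A\bmv}^2}{\normtwo{\bmv}^2}=nm\cdot\sup_{f\colon[0,1]\to\R}\frac{\normtwo{\mathcal{A}f}^2}{\normtwo{f}^2}\;.
$$
The left-hand side equals $\normtwo{A}^2$ and the right-hand side equals $nm\cdot\normtwo{\mathcal{A}}^2$. Taking square roots of both sides (both are nonnegative) yields $\normtwo{A}=\sqrt{nm}\cdot\normtwo{\mathcal{A}}$, which is the claim. One minor point worth spelling out is the handling of degenerate cases: if $A=0$ then both sides are $0$ and there is nothing to prove; otherwise the suprema are attained (the dikernel side is addressed in the proof of Lemma~\ref{lem:normRelation} via weak compactness of the unit ball, and the matrix side is attained by compactness of the sphere in $\R^m$), so the ratios are genuine maxima and the identity of the optimal values transfers directly.

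I do not anticipate a substantive obstacle here, since all the real work — translating a vector $\bmv$ into a piecewise-constant function $f$ and back, and tracking the normalization factors $\tfrac1n$, $\tfrac1m$, $\tfrac{1}{m^2}$ that combine into the single factor $nm$ — has already been carried out inside Lemma~\ref{lem:normRelation}. The only thing to be careful about is not to conflate $\normtwo{\cdot}$ applied to a matrix/dikernel (operator norm) with $\normtwo{\cdot}$ applied to a vector/function ($\ell_2$ norm); both notations appear in the statement, but the intended reading is unambiguous from context. Thus the corollary is essentially a restatement of Lemma~\ref{lem:normRelation} after taking square roots.
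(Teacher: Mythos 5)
Your proposal is correct and matches the paper's proof exactly: the paper also derives the corollary immediately from Lemma~\ref{lem:normRelation} and the definition of the spectral norm, with your version merely spelling out the square-root step and the degenerate case. Nothing further is needed.
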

\begin{proof}
	The proof is immediate by the definition of the spectral norm and Lemma~\ref{lem:normRelation}. 
	
\end{proof}

\begin{definition}
	Let $\mu$ be a Lebesgue measure.
	A map $\pi\colon [0,1]\to[0,1]$ is \textit{measure preserving} if the pre-image $\pi^{-1}(X)$ is measurable for every measurable set $X$ and $\mu(\pi^{-1}(X))=\mu(X)$.
	A \textit{measure preserving bijection} is a measure preserving map whose inverse map exists and is also measurable.
	For a measure preserving bijection $\pi$ and a dikernel $\mathcal{A}$, we define the dikernel $\pi(\mathcal{A})$, as $\pi(\mathcal{A})(x,y)=\mathcal{A}(\pi(x),\pi(y))$.
\end{definition}

\subsection{Sampling Lemmas}
In this subsection, we will prove that given matrices $A_1,\ldots,A_T\in[-L,L]^{n\times m}$, we obtain a good approximation of their corresponding dikernels, by sampling a small number of elements. The next lemma states that there is a way to ``align'' the sampled matrices with the original matrices. 
\begin{lemma}\label{lem:ApproxBlockSample}
	Given matrices $A_1,\ldots,A_T\in [-L,L]^{n\times n}$ and $\gamma \in (0,1)$, let ${A_1^{\text{str}}},\ldots,{A_T^{\text{str}}}$ be the block approximation matrices as in Lemma~\ref{lem:GoodBlockApprox}. In addition, for $t \in [T]$, let $\mathcal{P}^{A_t}_R=\{V^{A_t}_1,\ldots,V^{A_t}_p\} $ and $\mathcal{P}^{A_t}_C=\{V'^{A_t}_1,\ldots,V'^{A_t}_q\} $ be the row and column partitions of ${A_t^{\text{str}}}$ (from Lemma~\ref{lem:GoodBlockApprox}).
	Let $S_R$ be a set of size $s_R$, generated by picking each element in $[n]$ independently with probability $k_R/n$, and let $S_C$ be a set of size $s_C$, generated by picking each element in $[m]$ independently with probability $k_C/m$ for some $k_R,k_C > 0$.
	
	Then, there exists a measure preserving bijection $\pi\colon[0,1]\to [0,1]$ such that for every $t \in [T]$
	\[
	\exxp{S_R,S_C}\left[\normtwo{\mathcal{A}_t^{\text{str}}-\pi (\mathcal{A}_t^{\text{str}}|_{S_R\times S_C})}\right] =O\left(\frac{L}{\gamma^{11}}\cdot\max\left(\sqrt{\frac{p^{T/2}}{s_R^{1/2}}},\sqrt{\frac{q^{T/2}}{s_C^{1/2}}}\right)\right)\;.
	\]
	
\end{lemma}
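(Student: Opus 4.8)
The plan is to reduce the statement to two facts: (i) all the $A_t^{\text{str}}$ are block-constant on one \emph{common} partition, so a single bijection $\pi$ built from $S_R,S_C$ serves every $t$; and (ii) a block-constant dikernel restricted to a random sample is, after an appropriate rearrangement, close to the original in Frobenius norm, with the error controlled by how far the empirical block proportions are from the true ones. First I would pass to the common refinement: let $\mathcal{P}_R^{*}=\bigwedge_{t\in[T]}\mathcal{P}^{A_t}_R$ and $\mathcal{P}_C^{*}=\bigwedge_{t\in[T]}\mathcal{P}^{A_t}_C$, so $\mathcal{P}_R^{*}$ has $P\le p^{T}$ parts $W_1,\dots,W_P$ and $\mathcal{P}_C^{*}$ has $Q\le q^{T}$ parts $W'_1,\dots,W'_Q$. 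Every $A_t^{\text{str}}$ is constant on each cell $W_a\times W'_b$ (it was already constant on the coarser partition), and so is every restriction $A_t^{\text{str}}|_{S_R\times S_C}$, on the cells $(W_a\cap S_R)\times(W'_b\cap S_C)$. Write $\alpha_a=|W_a|/n$, $\beta_b=|W'_b|/n$ for the true proportions and $\hat\alpha_a=|W_a\cap S_R|/s_R$, $\hat\beta_b=|W'_b\cap S_C|/s_C$ for the empirical ones; these depend only on $S_R,S_C$, not on $t$.

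Next I would construct $\pi$. Reordering rows and columns changes neither norm and amounts to composing with measure-preserving bijections, so I may assume the cells are contiguous blocks of $[0,1]^2$. I then pick the measure-preserving bijection $\pi\colon[0,1]\to[0,1]$ to map the original row-block (resp.\ column-block) intervals onto the sampled ones as well as possible, given that block $a$ has width $\alpha_a$ in $\mathcal{A}_t^{\text{str}}$ but width $\hat\alpha_a$ in the restriction: inside each block align an initial segment of length $\min(\alpha_a,\hat\alpha_a)$ correctly and pair up the leftover mass arbitrarily. Then for every fixed $t$, $\mathcal{A}_t^{\text{str}}(x,y)$ and $\pi(\mathcal{A}_t^{\text{str}}|_{S_R\times S_C})(x,y)=\mathcal{A}_t^{\text{str}}|_{S_R\times S_C}(\pi(x),\pi(y))$ agree unless $x$ or $y$ lies in a ``misaligned'' region, on which both are bounded by $\|A_t^{\text{str}}\|_{\max}\le 2L/\gamma^{11}$ (a submatrix cannot increase the max norm). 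The misaligned region has measure at most $\rho_R+\rho_C$ with $\rho_R=\sum_a|\alpha_a-\hat\alpha_a|$ and $\rho_C=\sum_b|\beta_b-\hat\beta_b|$, so
\[
\normtwo{\mathcal{A}_t^{\text{str}}-\pi(\mathcal{A}_t^{\text{str}}|_{S_R\times S_C})}\le\normF{\mathcal{A}_t^{\text{str}}-\pi(\mathcal{A}_t^{\text{str}}|_{S_R\times S_C})}\le\frac{2L}{\gamma^{11}}\sqrt{2(\rho_R+\rho_C)}\,.
\]

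Finally I would take expectations. Conditioning on $s_R$, the set $S_R$ is a uniform $s_R$-subset, so $|W_a\cap S_R|$ is hypergeometric with mean $s_R\alpha_a$ and variance at most $s_R\alpha_a$; hence $\exxp{}[\,|\alpha_a-\hat\alpha_a|\,]\le\sqrt{\alpha_a/s_R}$, and Cauchy--Schwarz over the $P\le p^{T}$ parts gives $\exxp{}[\rho_R]\le\sqrt{P/s_R}\le\sqrt{p^{T}/s_R}$; symmetrically $\exxp{}[\rho_C]\le\sqrt{q^{T}/s_C}$. Plugging into the display and using concavity of $\sqrt{\cdot}$ and $\sqrt{a+b}\le\sqrt a+\sqrt b$ yields
\[
\exxp{S_R,S_C}\!\left[\normtwo{\mathcal{A}_t^{\text{str}}-\pi(\mathcal{A}_t^{\text{str}}|_{S_R\times S_C})}\right]=O\!\left(\frac{L}{\gamma^{11}}\Big(\big(\tfrac{p^{T}}{s_R}\big)^{1/4}+\big(\tfrac{q^{T}}{s_C}\big)^{1/4}\Big)\right)\,,
\]
which is the asserted bound since $(p^{T}/s_R)^{1/4}=\sqrt{p^{T/2}/s_R^{1/2}}$ and a sum of two nonnegative terms is at most twice their maximum.

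The step I expect to be the main obstacle is the construction of $\pi$. It must simultaneously serve all $T$ matrices --- this is exactly what the common refinement buys --- \emph{and} reconcile the row-block and column-block structure using a single bijection of $[0,1]$; making the misaligned region depend only on the marginal deviations $\rho_R,\rho_C$, rather than on a joint quantity involving $p^{T}q^{T}$, is the delicate point, and is where the contiguity of the reordered blocks together with the explicit ``align the shorter initial segment'' coupling are used. A secondary nuisance is that $s_R=|S_R|$ and $s_C=|S_C|$ are themselves random; this is handled by conditioning on them (the statement is phrased in terms of $s_R,s_C$), or, for an unconditional version, by the standard concentration $s_R=\Theta(k_R)$, $s_C=\Theta(k_C)$ with high probability.
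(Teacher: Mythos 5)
Your proposal is correct and follows essentially the same route as the paper's proof: pass to the common refinement of the $T$ row (resp.\ column) partitions of size $O(p^T)$ (resp.\ $O(q^T)$), build $\pi$ by maximally overlapping each true block with its empirical counterpart, bound the spectral norm by the Frobenius norm of the difference, which is controlled by $\|A^{\mathrm{str}}\|_{\max}\le 2L/\gamma^{11}$ times the square root of the misaligned measure, and finish with a variance plus Cauchy--Schwarz estimate giving $\mathbf{E}[\rho_R]\le\sqrt{P/s_R}$. The only (immaterial) difference is that you condition on $s_R,s_C$ and use a hypergeometric variance bound where the paper computes the variance of the Bernoulli counts directly.
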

\medskip
\begin{proof} 
	We first note that for any $t \in [T]$, any refinement of the row or column partitions of $A_t^{\text{str}}$ will give the same block approximation matrix $A_t^{\text{str}}$.
	Therefore, let $\mathcal{P}_R=\{V^R_1,\ldots,V^R_P\}$ be a partition which refines $\mathcal{P}_R^{A_1},\ldots,\mathcal{P}_R^{A_T}$ and its size is $P=O(p^T)$. Similarly, let $\mathcal{P}_C=\{V^C_1,\ldots,V^C_Q\}$ be a partition which refines $\mathcal{P}_C^{A_1},\ldots,\mathcal{P}_C^{A_T}$ and its size is $Q=O(q^T)$.
	
	We denote by $z^R_i$ the number of elements of $S_R$ that falls into the set $V^R_i$.
	Then, \[\exxp{S_R}[z^R_i]=s_R\cdot\mu(V^R_i)\textrm{ \;and\;} \var[z^R_i]=\mu(V^R_i)(1-\mu(V^R_i))s_R\;.\]
	Similarly, we denote by $z^C_j$ the number of elements of $S_C$ that falls into the set $V^C_j$.
	Then, \[\exxp{S_C}[z^C_j]=s_C\cdot\mu(V^C_j)\textrm{ \;and\;} \var[z^C_j]=\mu(V^C_j)(1-\mu(V^C_j))s_C\;.\]
	
	We next construct a measure preserving bijection.
	We define the following two partitions of the $[0,1]$ interval. 
	
	Let  $\{V^{R'}_1,\ldots,V^{R'}_P\}$ be a partition such that $\mu(V^{R'}_i)=z^R_i/s_R$ and $\mu(V^R_i\cap V^{R'}_i)=\min(\mu(V^R_i),z^R_i/s_R)$, and let  $\{V^{C'}_1,\ldots,V^{C'}_Q\}$ be a partition such that $\mu(V^{C'}_j)=z^C_j/s_C$ and $\mu(V^C_j\cap V^{C'}_j)=\min(\mu(V^C_j),z^C_j/s_C)$.
	We construct the dikernel $\mathcal{Y}:[0,1]^2\to \R$ such that the value of $\mathcal{Y}$ on $V^{R'}_i\times V^{C'}_j$ is the same as the value of $\mathcal{A}^{\text{str}}$ on $V^R_i\times V^C_j$. Therefore, the dikernel $\mathcal{Y}$ agrees with $\mathcal{A}^{\text{str}}$ on the set $Y=\bigcup_{(i,j)\in[P]\times[Q]}(V^R_i\cap V^{R'}_i)\times(V^C_j\cap V^{C'}_j)$. Then, there exists a bijection $\pi$ such that $\pi(\mathcal{A}^{\text{str}}|_{S_R\times S_C})=\mathcal{Y}$. Then,
	\begin{align*}
	1-\mu(Y)&\le 1-\left(\sum_{i\in [P]}\min\left(\mu(V^R_i),\frac{z^R_i}{s_R}\right)\right)\left(\sum_{j\in [Q]}\min\left(\mu(V^C_j),\frac{z^C_j}{s_C}\right)\right)\\
	&=1-\left(1-\frac{1}{2}\sum_{i\in[P]}\left|\mu(V^R_i)-\frac{z^R_i}{s_R}\right|\right)\left(1-\frac{1}{2}\sum_{j\in[Q]}\left|\mu(V^C_j)-\frac{z^C_j}{s_C}\right|\right)\\
	&\le \max\left\{\left(P\sum_{i\in[P]}\left(\mu(V^R_i)-\frac{z^R_i}{s_R}\right)^{2}\right)^{1/2},\left(Q\sum_{j\in[Q]}\left(\mu(V^C_j)-\frac{z^C_j}{s_C}\right)^{2}\right)^{1/2} \right\}\;.
	\end{align*}
	Therefore, we have that $$(1-\mu(Y))^2\le \max\left\{P\sum_{i\in[P]}\left(\mu(V^R_i)-\frac{z^R_i}{s_R}\right)^2, Q\sum_{j\in[Q]}\left(\mu(V^C_j)-\frac{z^C_j}{s_C}\right)^2\right\}\;.$$
	Taking expectation (over the choice of $S_R$ and $S_C$) yields,
	\[
	\exxp{S_R,S_C}\left[1-\mu(Y)\right]\le \max \left(\sqrt{\frac{Q}{s_C}},\sqrt{\frac{P}{s_R}}\right)\;.
	\]
	

	Let  $\mathcal{U}=\mathcal{A^{\text{str}}}-\mathcal{Y}$ and consider a corresponding matrix $U$. $U$ is an $N\times M$ matrix, where $N=\textrm{lcm}(n\cdot\mu(V^R_1\triangle V^{R'}_1),\ldots,n\cdot\mu(V^R_P\triangle V^{R'}_P)) $ and $M=\textrm{lcm}(m\cdot\mu(V^C_1\triangle V^{C'}_1),\ldots,m\cdot\mu(V^C_Q\triangle V^{C'}_Q)) $. 
	By Claim~\ref{lem:BoundedValuesSTR}, the absolute value of an entry in the matrix $A^{\text{str}}$ is bounded by $\frac{2}{\gamma^{11}}L$, and thus the absolute value of an entry in $U$ is bounded by $\frac{4}{\gamma^{11}}L$.
	
	Then,
	\begin{align*}
	\exxp{S_R,S_C}[\normtwo{U}^2]&\le \exxp{S_R,S_C}[\normF{U}^2]=\exxp{S_R,S_C}\left[{\sum_{i=1}^{N}\sum_{j=1}^{M}U_{ij}^2}\right]\\&\le \exxp{S_R,S_C}\left[{NM(1-\mu(Y))\cdot\max_{(i,j) \in [N]\times[M]}(U)_{ij}^2}\right]\\
	&\le NM\cdot\max_{(i,j) \in [N]\times[M]}U_{ij}^2\cdot\exxp{S_R,S_C}\left[1-\mu(Y)\right] \\
	&\le NM\cdot  \left(\frac{4}{\gamma^{11}}\right)^2L^2\cdot\max \left(\sqrt{\frac{Q}{s_C}},\sqrt{\frac{P}{s_R}}\right) \;.
	\end{align*}
	
	Using Corollary~\ref{cor:NormIdent} we get $\exxp{S_R,S_C}\left[\normtwo{\mathcal{U}}\right]\le \frac{4L}{\gamma^{11}}\max\left(\sqrt{\frac{p^{T/2}}{s_R^{1/2}}},\sqrt{\frac{q^{T/2}}{s_C^{1/2}}}\right)$, and the lemma follows.
\end{proof}

\medskip
\noindent In the following lemma we prove concentration around the mean.
\begin{lemma}\label{lem:BoundTwonorm}
	Let $\gamma>0$, and ${A_1,\ldots,A_T}\in [-L,L]^{n\times m}$.
	Let $S_R$ be a set generated by picking each element in $[n]$ independently with probability $k_R/n$ and let $S_C$ be a set generated by picking each element in $[m]$ independently with probability $k_C/m$ for some $k_R,k_C > 0$.
	Then, with probability at least $89/100$ there exists a measure preserving bijection $\pi\colon[0,1]\to[0,1]$ such that for every $t \in [T]$, \begin{align*}
	\normtwo{\mathcal{A}_t-\pi\left(\mathcal{A}_t|_{S_\text{R}\times S_\text{C}}\right)}&\le 210\gamma L T+10L T\left(\sqrt{\frac{8\log n}{k_C}}+\sqrt{\frac{8\log m}{k_R}}+\sqrt{\frac{4\log n\log m}{k_Rk_C}}\right)\\
	&+ O\left(\frac{LT}{\gamma^{11}}\left(\frac{1}{\gamma^{10}}\right)^{\frac{3T}{4\gamma^2}}\cdot\max\left(\frac{1}{k_R^{1/4}},\frac{1}{k_C^{1/4}}\right)\right),
	\end{align*}
	where $N = \sqrt{nm}$.
\end{lemma}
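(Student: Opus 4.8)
The plan is to decompose $\mathcal{A}_t - \pi(\mathcal{A}_t|_{S_R \times S_C})$ using the spectral decomposition $A_t = A_t^{\mathrm{str}} + A_t^{\mathrm{psd}}$ from Theorem~\ref{thm:MatrixDecomposition}, and then bound each resulting piece separately. Write
\[
\mathcal{A}_t - \pi(\mathcal{A}_t|_{S_R \times S_C}) = \underbrace{\mathcal{A}_t^{\mathrm{psd}}}_{\text{(I)}} - \underbrace{\pi\bigl(\mathcal{A}_t^{\mathrm{psd}}|_{S_R \times S_C}\bigr)}_{\text{(II)}} + \underbrace{\Bigl(\mathcal{A}_t^{\mathrm{str}} - \pi\bigl(\mathcal{A}_t^{\mathrm{str}}|_{S_R \times S_C}\bigr)\Bigr)}_{\text{(III)}},
\]
where crucially $\pi$ is the \emph{single} measure-preserving bijection produced by Lemma~\ref{lem:ApproxBlockSample} (applied to the family $A_1^{\mathrm{str}},\ldots,A_T^{\mathrm{str}}$), so the same $\pi$ works simultaneously for all $t \in [T]$. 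By the triangle inequality for the spectral norm of dikernels, it suffices to bound (I), (II), and (III).

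For term (I): by Lemma~\ref{lem:GoodBlockApprox} (or property~2 of Theorem~\ref{thm:MatrixDecomposition}) we have $\|A_t^{\mathrm{psd}}\|_2 \le 7\gamma N L$, so by Corollary~\ref{cor:NormIdent}, $\|\mathcal{A}_t^{\mathrm{psd}}\|_2 = \|A_t^{\mathrm{psd}}\|_2 / \sqrt{nm} \le 7\gamma L$; summing over the $T$ matrices contributes at most $\approx 7\gamma L T$, comfortably inside the $210 \gamma L T$ budget. For term (III): this is \emph{exactly} what Lemma~\ref{lem:ApproxBlockSample} controls in expectation — $\mathbf{E}_{S_R,S_C}[\|\mathcal{A}_t^{\mathrm{str}} - \pi(\mathcal{A}_t^{\mathrm{str}}|_{S_R\times S_C})\|_2] = O\bigl(\tfrac{L}{\gamma^{11}}\max(\sqrt{p^{T/2}/s_R^{1/2}}, \sqrt{q^{T/2}/s_C^{1/2}})\bigr)$, and substituting the block count $p, q = O((1/\gamma^{10})^{3/(2\gamma^2)})$ from Lemma~\ref{lem:GoodBlockApprox}, together with $s_R \approx k_R$, $s_C \approx k_C$ (these concentrate since $\mathbf{E}[s_R]=k_R$, $\mathbf{E}[s_C]=k_C$), one recovers the last error term $O\bigl(\tfrac{LT}{\gamma^{11}}(1/\gamma^{10})^{3T/(4\gamma^2)}\max(k_R^{-1/4}, k_C^{-1/4})\bigr)$ after a Markov / union-bound argument over the $T$ matrices to convert expectation to a high-probability statement (losing only constant factors, absorbed into the $O(\cdot)$). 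The subtle point is that $\pi$ depends on the sampled sets $S_R, S_C$ but not on $t$, which is why the bound is uniform in $t$.

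Term (II) is where the genuine work lies, and I expect it to be the main obstacle. Here $\mathcal{A}_t^{\mathrm{psd}}|_{S_R \times S_C}$ is a random submatrix of a matrix with small spectral norm but \emph{not} small max-norm (indeed $\|A_t^{\mathrm{str}}\|_{\max}$, hence $\|A_t^{\mathrm{psd}}\|_{\max}$, can be as large as $\Theta(L/\gamma^{11})$), so one cannot simply quote an entrywise bound. The standard tool is a matrix-sampling / concentration inequality in the spirit of Frieze–Kannan or Rudelson–Vershynin: sampling each row with probability $k_R/n$ and each column with probability $k_C/n$ and rescaling yields a submatrix whose (normalized) spectral norm is, with high probability, at most $\|\mathcal{A}_t^{\mathrm{psd}}\|_2$ plus an additive sampling error of order $L\bigl(\sqrt{\log n / k_C} + \sqrt{\log m / k_R} + \sqrt{\log n \log m /(k_R k_C)}\bigr)$ — this is precisely the shape of the middle term $10 L T(\sqrt{8\log n/k_C} + \sqrt{8\log m/k_R} + \sqrt{4\log n\log m/(k_R k_C)})$ in the statement, with the factor $T$ and constant $10$ coming from a union bound over the $T$ matrices and the explicit constants in the concentration bound. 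The technical care needed is: (a) pass correctly between the matrix norm and the dikernel norm under the restriction operator and the alignment $\pi$ (using Corollary~\ref{cor:NormIdent} and the measure-preserving property, noting $\pi$ is an isometry for $\|\cdot\|_2$); (b) handle the fact that $|S_R|, |S_C|$ are random, which is again dealt with by a Chernoff bound showing $s_R = \Theta(k_R)$, $s_C = \Theta(k_C)$ with high probability; and (c) track the max-norm $L' := 4L/\gamma^{11}$ of the relevant matrix through the concentration bound — the $\gamma^{-11}$ factor is absorbed into the constant in front of the $\gamma L T$ term, which is why that term carries the large constant $210$ rather than $7$. Finally, a union bound over all these high-probability events (the concentration of $s_R, s_C$; the matrix concentration for term (II); the Markov bound for term (III) applied to the $T$ matrices) gives overall success probability at least $89/100$, after choosing the hidden constants appropriately.
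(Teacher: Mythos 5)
Your proposal follows essentially the same route as the paper: the same three-term decomposition via $A_t = A_t^{\mathrm{str}} + A_t^{\mathrm{psd}}$, with term (I) bounded deterministically by $7\gamma L$ via Corollary~\ref{cor:NormIdent}, term (III) by Lemma~\ref{lem:ApproxBlockSample}, term (II) by the Tropp-style random-submatrix bound (Lemma~\ref{lem:RandMatrixNorm}), and the same Chernoff/Markov/union-bound bookkeeping over the $T$ matrices. The one divergence is your accounting for term (II): the paper applies Lemma~\ref{lem:RandMatrixNorm} with the original entry bound $L$ and the $210\gamma LT$ constant instead absorbs the \emph{spectral-norm} contribution $\frac{3k_Rk_C}{nm}\normtwo{A^{\mathrm{str}}-A}^2$ (giving an extra $14\gamma L$ on top of the $7\gamma L$ from term (I), then multiplied by $10T$ via Markov), whereas your plan to push the max-norm $L'=4L/\gamma^{11}$ through the logarithmic terms of the concentration bound would inflate those terms by $\gamma^{-11}$ and would not recover the stated middle term.
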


In order to prove the lemma we introduce the following result regarding a random submatrix from~\cite{Tropp:2015vy} section 5.2.2. 
\begin{lemma}\label{lem:RandMatrixNorm}
	Given a matrix $A\in [-L,L]^{n\times m}$, let $P={\diag}(\chi_1,\ldots,\chi_n)$ be the diagonal matrix where $\{\chi_i\}$'s are $\textrm{Bernoulli}(k_R/n)$ random variables for $k_R>0$.
	In addition, let $R={\diag}(\xi_1,\ldots,\xi_m)$ be the diagonal matrix where $\{\xi_j\}$'s are $\textrm{Bernoulli}(k_C/m)$ random variables for $k_C > 0$.
	Then,
	\[\exxp{}[\normtwo{PAR}^2]\le\frac{3k_R\cdot k_C}{nm}\normtwo{A}^2 +2k_RL^2\log n+2k_CL^2\log m+L^2\log n\log m \;.\]
\end{lemma}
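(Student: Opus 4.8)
The plan is to reduce the statement to two applications of the matrix Chernoff inequality (in expectation form), chained by conditioning: once for the column projector $R$ and once for the row projector $P$. I will use the following standard bound from~\cite{Tropp:2015vy}: if $X_1,\dots,X_N$ are independent random positive semidefinite Hermitian $\delta\times\delta$ matrices with $\lambda_{\max}(X_i)\le L_0$ almost surely, and $Y=\sum_i X_i$ with $\mu_{\max}=\lambda_{\max}(\exxp{}[Y])$, then $\exxp{}[\lambda_{\max}(Y)]\le(e^\theta-1)\theta^{-1}\mu_{\max}+\theta^{-1}L_0\log\delta$ for every $\theta>0$; taking $\theta=1$ gives $\exxp{}[\lambda_{\max}(Y)]\le(e-1)\mu_{\max}+L_0\log\delta$.

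\textbf{Step 1 (column sampling).} First I would bound $\exxp{R}[\normtwo{AR}^2]$. Since $R$ is a symmetric idempotent $0/1$ diagonal matrix, $\normtwo{AR}^2=\lambda_{\max}(ARR^\top A^\top)=\lambda_{\max}(ARA^\top)=\lambda_{\max}\big(\sum_{j\in[m]}\xi_j c_jc_j^\top\big)$, where $c_j=Ae_j\in\R^n$ is the $j$-th column of $A$. The matrices $\xi_j c_jc_j^\top$ are independent, positive semidefinite, of dimension $n$, with $\lambda_{\max}(\xi_j c_jc_j^\top)\le\normtwo{c_j}^2\le nL^2$ (each entry of $c_j$ lies in $[-L,L]$), and $\exxp{R}\big[\sum_j\xi_j c_jc_j^\top\big]=\frac{k_C}{m}AA^\top$, whose top eigenvalue is $\frac{k_C}{m}\normtwo{A}^2$. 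Matrix Chernoff with $\delta=n$ gives $\exxp{R}[\normtwo{AR}^2]\le(e-1)\frac{k_C}{m}\normtwo{A}^2+nL^2\log n$.

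\textbf{Step 2 (row sampling, conditional).} Next I would fix $R$ (equivalently the random sampled column set $S_C$, of size $s_C=|S_C|$) and bound $\exxp{P}[\normtwo{PAR}^2\mid R]$. Writing $M=AR$ and using that $P$ is a symmetric idempotent, $\normtwo{PM}^2=\lambda_{\max}(M^\top P^\top PM)=\lambda_{\max}(M^\top PM)=\lambda_{\max}\big(\sum_{i\in[n]}\chi_i \tilde r_i\tilde r_i^\top\big)$, where $\tilde r_i=M^\top e_i\in\R^m$ is the $i$-th row of $M=AR$. Each $\tilde r_i$ has at most $s_C$ nonzero entries, all of absolute value at most $L$, so $\normtwo{\tilde r_i}^2\le s_CL^2$; conditioned on $R$ the matrices $\chi_i\tilde r_i\tilde r_i^\top$ are independent, positive semidefinite, of dimension $m$, and $\exxp{P}\big[\sum_i\chi_i\tilde r_i\tilde r_i^\top\mid R\big]=\frac{k_R}{n}M^\top M$ with top eigenvalue $\frac{k_R}{n}\normtwo{AR}^2$. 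Matrix Chernoff with $\delta=m$ gives $\exxp{P}[\normtwo{PAR}^2\mid R]\le(e-1)\frac{k_R}{n}\normtwo{AR}^2+s_CL^2\log m$.

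\textbf{Step 3 (combine) and the main difficulty.} Taking the $R$-expectation of the Step~2 bound, using $\exxp{R}[s_C]=k_C$ and the Step~1 bound, we get $\exxp{}[\normtwo{PAR}^2]=\exxp{R}\exxp{P}[\normtwo{PAR}^2\mid R]\le(e-1)\frac{k_R}{n}\exxp{R}[\normtwo{AR}^2]+k_CL^2\log m\le(e-1)^2\frac{k_Rk_C}{nm}\normtwo{A}^2+(e-1)k_RL^2\log n+k_CL^2\log m$. Since $(e-1)^2<3$ and $e-1<2$, and since the claimed right-hand side additionally contains the nonnegative slack term $L^2\log n\log m$, the lemma follows. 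The one point that needs care is the bookkeeping of the two distinct ambient dimensions in the two Chernoff applications ($n$ for the column-sampling stage, $m$ for the row-sampling stage), together with the fact that in Step~2 the almost-sure bound $L_0=s_CL^2$ is itself random, so one must average the conditional bound over $R$ via linearity (using $\exxp{R}[s_C]=k_C$) rather than substituting a worst-case value; everything else is routine linear algebra and the identity $\normtwo{BP}^2=\lambda_{\max}(B^\top P B)$ for idempotent $P$.
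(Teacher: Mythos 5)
Your proof is correct. Note that the paper does not prove this lemma at all: it imports it verbatim from Tropp's monograph (Section 5.2.2), where it is derived by essentially the same two-stage argument you use, namely conditioning and applying the expectation form of the matrix Chernoff bound once for the column projector and once for the row projector. The only real difference is in the second stage: Tropp controls $\max_i \normtwo{\text{row}_i(AR)}^2$ by a further Chernoff-type bound in terms of the maximum row norm of $A$ and the maximum entry, which is where the cross term $L^2\log n\log m$ in the stated inequality originates, whereas you use the deterministic bound $s_C L^2$ (given $R$) and average via $\exxp{}[s_C]=k_C$. For matrices with entries in $[-L,L]$ this is just as good, and your final bound $(e-1)^2\frac{k_Rk_C}{nm}\normtwo{A}^2+(e-1)k_RL^2\log n+k_CL^2\log m$ is in fact slightly stronger than the stated one (it needs no cross term); since $(e-1)^2<3$ and $e-1<2$, the lemma follows. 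Your handling of the random almost-sure bound in the conditional application (fix $R$, then take expectation by linearity) is exactly the right way to avoid the pitfall there.
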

The above lemma shows that a random submatrix of size roughly $k_R\times k_C$ gets its ``fair share'' of the spectral norm of $A$.

\begin{proofof}{Lemma~\ref{lem:BoundTwonorm}}
	Since $S_R$ and $S_C$ are set of indices generated by choosing each index to $S_R$ (or $S_C$) with probability $k_R/n$ ($k_C/m$), we have that with probability at least $ 99/100$, \[|S_R|\ge k_R/2\;\;\text{and }|S_C|\ge k_C/2\;.\]We henceforth condition on that.
	
	For any measure preserving bijection $\pi:[0,1]\to[0,1]$, and $t \in [T]$ we have
	\begin{align*}
	\exxp{S_R,S_C}[\normtwo{\mathcal{A}_t-\pi(\mathcal{A}_t|_{S_\text{R}\times S_\text{C}})}] &\le \normtwo{\mathcal{A}_t-\mathcal{A}_t^{\text{str}}}+\exxp{S_R,S_C}\left[\normtwo{\mathcal{A}_t^{\text{str}}-\pi (\mathcal{A}_t^{\text{str}}|_{S_\text{R}\times S_\text{C}})}\right]\\
	&+\exxp{S_R,S_C}\left[\normtwo{\pi(\mathcal{A}_t^{\text{str}}|_{S_\text{R}\times S_\text{C}})-\pi (\mathcal{A}_t|_{S_\text{R}\times S_\text{C}})}\right]\;.
	\end{align*}
	
	Where $A_t^{\text{str}}$ is the matrix obtained by Lemma~\ref{lem:GoodBlockApprox}.

	\noindent By Lemma~\ref{lem:GoodBlockApprox}  and Corollary~\ref{cor:NormIdent}, we have that for any $t \in [T]$ $$\normtwo{\mathcal{A}_t-\mathcal{A}_t^{\text{str}}}\le 7\gamma L\;.$$

	\noindent By the facts that $|S_R|\ge k_R/2$, $|S_C|\ge k_C/2$,  $p=O\left(\left(\frac{1}{\gamma^{10}}\right)^{3/\gamma^2}\right)$ and $q=O\left(\left(\frac{1}{\gamma^{10}}\right)^{3/\gamma^2}\right)$, Lemma~\ref{lem:ApproxBlockSample} yields that for any $t \in [T]$:
	\[\exxp{S_R,S_C}\left[\normtwo{\mathcal{A}_t^{\text{str}}-\pi (\mathcal{A}_t^{\text{str}}|_{S_R\times S_C})}\right]=O\left(\frac{L}{\gamma^{11}}\left(\frac{1}{\gamma^{10}}\right)^{\frac{3T}{4\gamma^2}}\cdot\max\left(\frac{1}{k_R^{1/4}},\frac{1}{k_C^{1/4}}\right)\right)
	\]

	We are left with bounding $\exxp{S_R,S_C}\left[\normtwo{\pi(\mathcal{A}_t^{\text{str}}|_{S_R\times S_C})-\pi (\mathcal{A}_t|_{S_R\times S_C})}\right]$.
	We apply Lemma~\ref{lem:RandMatrixNorm}  on $(A_t^{\text{str}}-A_t)|_{S_R\times S_C}$  to get
	\begin{align*}
	&\exxp{S_R,S_C}\left[\normtwo{(A_t^{\text{str}}-A_t)|_{S_R\times S_C}}^2\right]\\
	&\qquad\qquad\le \frac{3k_C\cdot k_R}{nm}\normtwo{A^{\text{str}}-A}^2+2L^2k_R\log n+2L^2k_C\log m +L^2\log n\log m \\
	&\qquad\qquad \le \frac{3k_C\cdot k_R}{nm} \cdot 16\gamma^2L^2nm+2L^2k_R\log n+2L^2k_C\log m +L^2\log n\log m\\
	&\qquad\qquad \le 48L^2\gamma^2k_R\cdot k_C +2L^2k_R\log n+2L^2k_C\log m +L^2\log n\log m\;,
	\end{align*}
	which implies that,
	\begin{align*}
	&\exxp{S_R,S_C}\left[\normtwo{(A_t^{\text{str}}-A_t)|_{S_R\times S_C}}\right]\\
	&\qquad\qquad\le 7\gamma L\sqrt{k_C\cdot k_R}+L\sqrt{2k_R\log n }+L\sqrt{2k_C\log m }+L\sqrt{\log m\log n} \;.\end{align*}
	By applying Corollary~\ref{cor:NormIdent} and using the fact that the dimension of $(A^{\text{str}}-A)|_{S_R\times S_C}$ at least $k_R\cdot k_C/4$, we get
	\[\exxp{S_R,S_C}\left[\normtwo{\mathcal{A}_t^{\text{str}}|_{S_R\times S_C}-\mathcal{A}_t|_{S_R\times S_C}}\right]\le14\gamma L+L\sqrt{\frac{8\log n}{k_c} }+L\sqrt{\frac{8\log m}{k_R} }+L\sqrt{\frac{4\log m\log n}{k_R\cdot k_C}}\;.\]
	Putting everything together,
	\begin{align*}
	\exxp{S_R,S_C}[\normtwo{\mathcal{A}_t-\pi(\mathcal{A}_t|_{S_R\times S_C})}] &\le 21\gamma L+L\left(\sqrt{\frac{8\log n}{k_C}}+\sqrt{\frac{8\log m}{k_R}}+\sqrt{\frac{4\log n\log m}{k_Rk_C}}\right)\\
	&+ O\left(\frac{L}{\gamma^{11}}\left(\frac{1}{\gamma^{10}}\right)^{\frac{3T}{4\gamma^2}}\cdot\max\left(\frac{1}{k_R^{1/4}},\frac{1}{k_C^{1/4}}\right)\right)\;.
	\end{align*}
	By Markov inequality, with probability at least $9/10T$,
	\begin{align*}
	\normtwo{\mathcal{A}_t-\pi(\mathcal{A}_t|_{S_\text{R}\times S_\text{C}})}&\le 210\gamma LT+10LT\left(\sqrt{\frac{8\log n}{k_C}}+\sqrt{\frac{8\log m}{k_R}}+\sqrt{\frac{4\log n\log m}{k_Rk_C}}\right)\\
	&+ O\left(\frac{LT}{\gamma^{11}}\left(\frac{1}{\gamma^{10 }}\right)^{\frac{3T}{4\gamma^2}}\cdot\max\left(\frac{1}{k_R^{1/4}},\frac{1}{k_C^{1/4}}\right)\right)\;.
	\end{align*}
	By using a union bound the lemma follows.
\end{proofof}

\section{Applications}\label{sec:quadratic}
\subsection{ Quadratic Function Minimization}
In this section, we show that we can approximately solve quadratic function minimization problems in polylogarithmic time.

Recall that we are given a matrix $A\in \R^{n\times n}$ and vectors $\bmd,\bmb\in \R^{n}$, and consider the following quadratic function minimization problem:
\begin{align}
\qquad\min_{\bmv\in \R^n} \psi_{n,A,\bmd,\bmb}(\bmv),\; \text{where } \psi_{n,A,\bmd,\bmb}(\bmv)= \dotprod{\bmv}{A\bmv}+n\dotprod{\bmv}{\diag(\bmd)\bmv}+n\dotprod{\bmb}{\bmv}\;.\label{Eq:MinDiscreat}
\end{align}
Here $\diag(\bmd) \in \R^{n \times n}$ is a matrix whose diagonal entries are specified by $\bmd$.

First, we describe our algorithm for minimizing quadratic functions.
We first sample a set of indices $S \subseteq \set{1,2,\ldots,n}$ with each index included with probability $k/n$, where $k$ is some constant.
If $|S|$ is too large, we immediately stop the process by claiming that the algorithm has failed.
Otherwise, we solve the problem on $A|_S$, $\bmd|_S$, $\bmb|_S$ and then output the optimal solution.
The detail is given in Algorithm~\ref{alg:MainAlg}.

\begin{algorithm}[h!]
	\caption{Minimization Algorithm($A$, $n$, $\eps$, $k$) }\label{alg:MainAlg}
	\begin{algorithmic}[1]
		\State Let $S\subseteq \set{1,2,\ldots,n}$ such that each index $i$ is taken to $S$ independently w.p $k/n$.\label{alg:Stage1}
		\If {$|S|>2k$}
		\State \textbf{Abort}
		\EndIf
		\State \Return $\min_{\bmv\in \R^{|S|}}\psi_{|S|,A|_S,\bmd|_S,\bmb|_S}(\bmv)$.
	\end{algorithmic}
\end{algorithm}

Due to our extensive use of dikernels in the analysis, we introduce a continuous version of problem (\ref{Eq:MinDiscreat}). The real valued function $\Psi_{n,A,\bmd,\bm b}$ on function $f\colon [0,1]\to \R$ is defined as
$$\Psi_{n,A,\bmd,\bmb}(f)=\dotprod{f}{\mathcal{A}f}+\dotprod{f^2}{\mathcal{D}{\mathbf{1}}}+\dotprod{f}{\mathcal{B}{\mathbf{1}}}\;,$$
where $\mathcal{D}$ and $\mathcal{B}$ are the corresponding dikernels of $\bmd\cdot \bmone^\top$ and $\bmb\cdot \bmone^\top$ respectively, $f^2\colon[0,1]\to\R$ is a function such that $f^2(x)=f(x)^2$ for every $x\in[0,1]$ and $\bmone\colon[0,1]\to\R$ is the constant  function that has the value $1$ everywhere.

In order to prove Theorem~\ref{thm:mainTheorem-intro}, we prove that the minimizations of $\psi_{n,A,\bmd,\bmb}$ and $\Psi_{n,A,\bmd,\bmb}$ are essentially equivalent. 
\begin{lemma}\label{lem:Min-Inf}
	Let $A\in \R^{n\times n}$ and $\bmb,\bmd\in \R^{n}$. Then, for any $r>0$
	$$\min_{\bmv:\normtwo{\bmv}\le r}\psi_{n,A,\bmd,\bmb}(\bmv)=n^2\cdot \inf_{f:\normtwo{f}\le \frac{r}{\sqrt{n}}}\Psi_{n,A,\bmd,\bmb}(f) \;.$$
\end{lemma}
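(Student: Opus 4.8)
The plan is to build a dictionary between vectors $\bmv\in\R^n$ and \emph{step functions} on $[0,1]$ that are constant on each of the intervals $I^n_1,\dots,I^n_n$: show that the continuous objective restricted to such step functions is an exact rescaling of the discrete objective, and then argue that replacing an arbitrary feasible $f$ by a step function never increases $\Psi_{n,A,\bmd,\bmb}$.

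\textbf{Step 1: step functions realize the discrete problem.} For $\bmv\in\R^n$ set $f_\bmv(x)\eqdef v_{i^n(x)}$. Using $\mathcal{A}(x,y)=A_{i^n(x)i^n(y)}$ together with $(\mathcal{D}\bmone)(x)=d_{i^n(x)}$ and $(\mathcal{B}\bmone)(x)=b_{i^n(x)}$, a direct computation (each integral over $I^n_i$ contributing a factor $1/n$) gives
\[
\dotprod{f_\bmv}{\mathcal{A}f_\bmv}=\tfrac{1}{n^2}\dotprod{\bmv}{A\bmv},\qquad \dotprod{f_\bmv^{2}}{\mathcal{D}\bmone}=\tfrac{1}{n}\dotprod{\bmv}{\diag(\bmd)\bmv},\qquad \dotprod{f_\bmv}{\mathcal{B}\bmone}=\tfrac{1}{n}\dotprod{\bmb}{\bmv},
\]
so $\Psi_{n,A,\bmd,\bmb}(f_\bmv)=\tfrac{1}{n^2}\psi_{n,A,\bmd,\bmb}(\bmv)$; moreover $\normtwo{f_\bmv}^2=\tfrac1n\normtwo{\bmv}^2$, hence $\normtwo{f_\bmv}\le r/\sqrt n$ exactly when $\normtwo{\bmv}\le r$. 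Since $\bmv\mapsto\psi_{n,A,\bmd,\bmb}(\bmv)$ is continuous and $\{\bmv:\normtwo{\bmv}\le r\}$ is compact the discrete minimum is attained, and taking the infimum only over step functions already yields $n^2\inf_{\normtwo{f}\le r/\sqrt n}\Psi_{n,A,\bmd,\bmb}(f)\le\min_{\normtwo{\bmv}\le r}\psi_{n,A,\bmd,\bmb}(\bmv)$.

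\textbf{Step 2: rounding a feasible $f$ down to a step function.} Given any $f$ with $\normtwo{f}\le r/\sqrt n$, let $\bar f$ be constant on each $I^n_i$ with value $n\int_{I^n_i}f$, so that $\bar f=f_\bmv$ for $\bmv=\bigl(n\int_{I^n_1}f,\dots,n\int_{I^n_n}f\bigr)$. The identities $\dotprod{f}{\mathcal{A}f}=\sum_{i,j}A_{ij}\bigl(\int_{I^n_i}f\bigr)\bigl(\int_{I^n_j}f\bigr)$ and $\dotprod{f}{\mathcal{B}\bmone}=\sum_i b_i\int_{I^n_i}f$ show that the $\mathcal{A}$- and $\mathcal{B}$-terms depend only on the block integrals of $f$, which averaging preserves, so these are unchanged. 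For the term $\dotprod{f^2}{\mathcal{D}\bmone}=\sum_i d_i\int_{I^n_i}f^2$, Jensen's inequality gives $\int_{I^n_i}\bar f^2=n\bigl(\int_{I^n_i}f\bigr)^2\le\int_{I^n_i}f^2$; summing this also gives $\normtwo{\bar f}\le\normtwo{f}\le r/\sqrt n$, so $\bar f$ is feasible, and $\Psi_{n,A,\bmd,\bmb}(\bar f)\le\Psi_{n,A,\bmd,\bmb}(f)$. Combining with Step 1, for every feasible $f$ we get $n^2\Psi_{n,A,\bmd,\bmb}(f)\ge n^2\Psi_{n,A,\bmd,\bmb}(\bar f)=\psi_{n,A,\bmd,\bmb}(\bmv)\ge\min_{\normtwo{\bmv}\le r}\psi_{n,A,\bmd,\bmb}(\bmv)$, which is the reverse inequality; the lemma follows.

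\textbf{Expected obstacle.} The only delicate point is showing that block-averaging does not increase $\Psi_{n,A,\bmd,\bmb}$. It is automatic for the $\mathcal{A}$- and $\mathcal{B}$-terms, but for $\dotprod{f^2}{\mathcal{D}\bmone}=\sum_i d_i\int_{I^n_i}f^2$ one must multiply the Jensen bound $\int_{I^n_i}\bar f^2\le\int_{I^n_i}f^2$ by $d_i$ with the correct sign, i.e. one uses that $n\dotprod{\bmv}{\diag(\bmd)\bmv}$ acts as a convex (regularizing) term; the very same Jensen inequality is what keeps $\bar f$ feasible, so the two requirements are consistent. The $A|_S$ analogue is identical after replacing $[0,1]$ by the interval partition induced by $S$.
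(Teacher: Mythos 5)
Your Step 1 is correct and coincides with the easy direction of the paper's argument. The gap is in Step 2: the inequality $\Psi_{n,A,\bmd,\bmb}(\bar f)\le\Psi_{n,A,\bmd,\bmb}(f)$ needs $d_i\ge 0$ for every $i$, but the lemma (and its use in Theorem 1.1, where $\bmd$ is an arbitrary vector with $\max_i|d_i|\le L$) imposes no sign condition. When $d_i<0$ the Jensen bound $\int_{I^n_i}\bar f^2\le\int_{I^n_i}f^2$ gets multiplied by a negative number and flips. This is not a removable technicality: take $n=2$, $A=0$, $\bmb=0$, $\bmd=(-1,0)$, $r>0$, and $f=r\sqrt2\,\mathbf 1_{[0,1/4]}$, which is feasible for the continuous problem with radius $r/\sqrt2$; then $\Psi(f)=-\int_0^{1/2}f^2=-r^2/2$, while its block average $\bar f$ (equal to $r/\sqrt2$ on $[0,1/2]$ and $0$ elsewhere) has $\Psi(\bar f)=-r^2/4>\Psi(f)$. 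So block-averaging can strictly increase the objective, and the chain $\Psi(f)\ge\Psi(\bar f)=\frac1{n^2}\psi(\bmv)\ge\frac1{n^2}\min_{\bmv}\psi(\bmv)$ breaks at its first link. (The lemma itself still holds in this example; only your rounding fails.) Your "expected obstacle" paragraph identifies the issue but then resolves it by asserting that $n\dotprod{\bmv}{\diag(\bmd)\bmv}$ "acts as a convex (regularizing) term," which is precisely the unwarranted positivity assumption.

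The paper avoids this by reasoning about the \emph{optimal} $f^*$ rather than rounding an arbitrary feasible $f$: it first argues the continuous problem attains its infimum, then writes the generalized KKT stationarity condition and observes that the resulting expression for the derivative at $x$ depends on $x$ only through $i^n(x)$, which forces $f^*$ to be a.e.\ constant on each interval $I^n_1,\ldots,I^n_n$ irrespective of the signs of the $d_i$; converting that step function back to a vector gives the reverse inequality. To salvage a rounding-style proof you would need either to assume $\bmd\ge 0$ (not available in the applications) or to produce a per-block rounding that preserves both $\int_{I^n_i}f$ and $\int_{I^n_i}f^2$, which a single constant value per block cannot do.
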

\begin{proof} 
	In contrast to the proof of Lemma~\ref{lem:normRelation}, in this case we have to deal with constrained optimization, and therefore must consider the KKT optimality conditions. We start by showing that $\min_{\bmv:\normtwo{\bmv}\le r}\psi_{n,A,\bmd,\bmb}(\bmv)\ge n^2\cdot \inf_{f:\normtwo{f}\le \frac{r}{\sqrt{n}}}\Psi_{n,A,\bmd,\bmb}(f)$. Given a vector $\bmv \in \R^n$ such that $\normtwo{\bmv}\le r$, we define the function $f:[0,1]\to\R$ as $f(x)=v_{i^n(x)}$.
	Then,
	\begin{align*}
	&\dotprod{f}{\mathcal{A}f}=\sum_{i,j\in[n]}\int_{I^n_i}\int_{I^n_j}A_{ij}f(x)f(y)dxdy=\frac{1}{n^2}\dotprod{\bmv}{A\bmv}\\
	&\dotprod{f^2}{\mathcal{D}\bm{1}}=\sum_{i,j\in[n]}\int_{I^n_i}\int_{I^n_j}d_i f(x)^2dxdy=\frac{1}{n}\dotprod{\bmv}{\diag(\bm{d})\bmv}\\
	&\dotprod{f}{\mathcal{B}\bm{1}}=\sum_{i,j\in[n]}\int_{I^n_i}\int_{I^n_j}b_i f(x)dxdy=\frac{1}{n}\dotprod{\bmv}{\bmb}
	\end{align*}
	In addition,
	$$
	\normtwo{f}^2=\int_{0}^{1}f(x)^2dx=\sum_{j\in[n]}\int_{I^n_j}f(x)^2dx=\frac{1}{n}\sum_{j\in[n]}v_j^2=\frac{1}{n}\normtwo{\bmv}^2\le \frac{r^2}{n}\;.
	$$
	
	Next, we show that $\min_{\bmv:\normtwo{\bmv}\le r}\psi_{n,A,\bmd,\bmb}(\bmv)\le n^2\cdot \inf_{f:\normtwo{f}\le \frac{r}{\sqrt{n}}}\Psi_{n,A,\bmd,\bmb}(f)$.
	First, we note that the latter problem has a minimizer $f:[0,1]\to \R $ because it is weakly continuous and coercive (See, e.g.,~\cite{Peressini:1993ug}).
	According to the generalized KKT conditions (see, e.g., Section~9.4 of~\cite{Luenberger:1997:OVS:524110}), there exists $\lambda$ such that:
	\begin{itemize}
		\item (Stationarity) $\frac{\partial}{\partial f^*(x)}\Psi_{n,A,\bmd,\bmb}(f^*(x))- \lambda \frac{\partial}{\partial f^*(x)}(\normtwo{f^*}-r/\sqrt{n})=0$ for almost all $x$.
		\item (Primal feasibility) $\normtwo{f^*}-r/\sqrt{n}\le 0$
		\item (Complementary slackness) $\lambda\cdot (\normtwo{f^*}-r/\sqrt{n})=0$
	\end{itemize}
	
	The stationarity condition yields:
	\begin{align*}
	&\frac{\partial}{\partial f^*(x)}\Psi_{n,A,\bmd,\bmb}(f^*(x))- \lambda \frac{\partial}{\partial f^*(x)}(\normtwo{f^*}-r/\sqrt{n})\\
	&= \sum_{i\in {n}}\int_{I^n_i}A_{ii^n(x)}f^*(y)dy+\sum_{j\in {n}}\int_{I^n_j}A_{i^n(x) j}f^*(y)dy+2d_{i^n(x)}f^*(x) +b_{i^n(x)}-2\lambda f^*(x) \;,
	\end{align*}
	By the form of the partial derivatives, $f^*(z)=f^*(z')$ for almost all $z,z'\in[0,1]$ such that $i^n(z)=i^n(z')$.
	That is, $f^*$ is almost constant on each of the intervals $I^n_1,\ldots,I^n_n$. Therefore, we define $\bmv\in \R ^n$ as $v_j=f^*(x)$, where $x$ is the dominant value in $I^n_j$. Then,
	\begin{align*}
	&\dotprod{\bmv}{A\bmv}=\sum_{i,j\in [n]}A_{ij}v_iv_j = n^2\sum_{i,j\in [n]}\int_{I^n_i}\int_{I^n_j}A_{ij}f^*(x)f^*(y)dxdy  
	=n^2\dotprod{f^*}{\mathcal{A}f^*}\;. \\ 
	&\dotprod{\bmv}{\diag(\bm{d})\bmv}=\sum_{i\in[n]}d_i v_i^2= n\sum_{i\in[n]}\int_{I^n_i}d_i f^*(x)^2dx 
	=n\dotprod{(f^*)^2}{\mathcal{D}\bm{1}}\;. \\ 
	&\dotprod{\bmv}{\bmb}=\sum_{i\in[n]}b_iv_i
	=n\sum_{i\in[n]}\int_{I^n_i}b_i f^*(x)dx 
	= n\dotprod{f^*}{\mathcal{B}\bm{1}}. 
	\end{align*}
	
	In addition,
	$
	\normtwo{f^*}^2=\int_{0}^{1}f^*(x)^2dx=\sum_{j\in[n]}\int_{I^n_j}f^*(x)^2dx
	=\frac{1}{n}\normtwo{\bmv}^2 
	\le\frac{r^2}{n}\;. 
	$ 
	
	Hence, we get that $$\min_{\bmv:\normtwo{\bmv}\le r}\psi_{n,A,\bmd,\bmb}(\bmv)\le n^2\cdot \inf_{f:\normtwo{f}\le \frac{r}{\sqrt{n}}}\Psi_{n,A,\bmd,\bmb}(f)\;,$$
	and the lemma follows.
\end{proof}

With the above lemma, we are ready to prove our main result.

\begin{proofof}{Theorem~\ref{thm:mainTheorem-intro}}
	By applying Chernoff bounds, we have that with probability at least $1-o(1)$, the size of $S$ is at most $2k$.
	As before, we apply Lemma~\ref{lem:BoundTwonorm} with $\gamma=O(\eps)$,  $$k=\max\Bigg\{  O\left(\frac{\log^2 n}{\eps^2}\right), \left(\frac{1}{\eps}\right)^{O(1/\eps^2)}  \Bigg\}\;,$$ and $S_C=S_R=S$\footnote{We note that in this case, the sets $S_R$ and $S_C$ are dependent. However, the proof of Lemma~\ref{lem:BoundTwonorm} can be easily modified to the case where the matrix is in $[-L,L]^{n \times n}$, where for this case $S_R=S_C=S$.}.
	Then, with probability at least $2/3$ there exists a measure preserving bijection $\pi\colon[0,1]\to [0,1]$ such that for any function $f\colon[0,1]\to \R$,
	\[
	\max \Big\{\Big|\dotprod{f}{(\mathcal{A}-\pi(\mathcal{A|_S}))f}\Big|,\Big|\dotprod{f}{(\mathcal{D}-\pi(\mathcal{D|_S}))f}\Big|,\Big|\dotprod{f}{(\mathcal{B}-\pi(\mathcal{B|_S}))f}\Big|\Big\}\le \frac{\eps L}{3} \normtwo{f}^2\;.
	\]
	Let $R=\max\left\{\frac{\normtwo{\tilde{\bmv}^*}}{\sqrt{|S|}},\frac{\normtwo{\bmv^*}}{\sqrt{n}}\right\}$. Then, by using Lemma~\ref{lem:Min-Inf}:
	\begin{align*}
	\tilde{z}^*&=\min_{\bmv\in \R^{|S|}}\psi_{|S|,A|_S,\bmd|_S,\bmb|_S}(\bmv)=\min_{\bmv:\normtwo{\bmv}\le R\sqrt{|S|}}\psi_{|S|,A|_S,\bmd|_S,\bmb|_S}(\bmv)\\
	&=|S|^2\cdot\min_{f:\normtwo{f}\le R}\Psi_{|S|,A|_S,\bmd|_S,\bmb|_S}(f)\\
	&=|S|^2\cdot \min_{f:\normtwo{f}\le R}\Big\{\dotprod{f}{\left(\pi(\mathcal{A|_S})-\mathcal{A}\right)f}+\dotprod{f}{\mathcal{A}f}+\dotprod{f^2}{\left(\pi(\mathcal{D|_S})-\mathcal{D}\right)\mathbf{1}}+\dotprod{f^2}{\mathcal{D}\mathbf{1}}\\
	&\;\;\;\;\;\;\;\;\;\;\;\;\;\;\;\;\;\;\;\;\;\;\;\;\;+\dotprod{f}{\left(\pi(\mathcal{B|_S})-\mathcal{B}\right)\mathbf{1}}+\dotprod{f}{\mathcal{B}\mathbf{1}}\Big\} \\
	&\le |S|^2\cdot \min_{f:\normtwo{f}\le R}\left\{\dotprod{f}{\mathcal{A}f}+\dotprod{f^2}{\mathcal{D}\mathbf{1}}+\dotprod{f}{\mathcal{B}\mathbf{1}} \pm\eps L\normtwo{f}^2\right\}\\
	&\le |S|^2\cdot \min_{f:\normtwo{f}\le R}\Psi_{n,A,\bmd,\bmb}(f)\pm\eps  L|S|^2R^2\\
	&= \frac{|S|^2}{n^2}\cdot \min_{\bmv:\normtwo{\bmv}\le\sqrt{n}R}\psi_{n,A,\bmd,\bmb}(\bmv)\pm\eps L|S|^2R^2\\
	&= \frac{|S|^2}{n^2}\cdot \min_{\bmv\in \R^n}\psi_{n,A,\bmd,\bmb}(\bmv)\pm\eps L|S|^2 R^2
	= \frac{|S|^2}{n^2}z^*\pm\eps L |S|^2 R^2\;.
	\end{align*}
	By rearranging the inequality and applying the union bound the theorem follows.
\end{proofof}

As a corollary we show that we can obtain an approximation algorithm for minimizing a quadratic function over a ball of radius $r$ with better error bounds compared than the one obtained by Hayashi and Yoshida \cite{Hayashi:2016wh}. The proof of correctness is similar to the proof of Theorem~\ref{thm:mainTheorem-intro}.

\begin{algorithm}[ht!]
	\caption{Minimization Algorithm Over a Ball($A$, $n$, $\eps$, $k$, $r$) }\label{alg:MainAlgBall}
	\begin{algorithmic}[1]
		\State Let $S\subseteq \set{1,2,\ldots,n}$ such that each index $i$ is taken to $S$ independently w.p $k/n$.\label{alg:Stage1Ball}
		\If {$|S|>2k$}
		\State \textbf{Abort}
		\EndIf
		\State \Return $\min_{\normtwo{v}\le\sqrt{\frac{|S|}{n}}r}\psi_{|S|,A|_S,\bmd|_S,\bmb|_S}(\bmv)$.
	\end{algorithmic}
\end{algorithm}

\begin{corollary}[Restatement of Theorem~\ref{Eq:MinDiscreatConstrained-intro}] Let $\bmv^*$ and $z^*$ be an optimal solution and optimal value, respectively, of $\min_{\normtwo{\bmv}\le r} \psi_{n,A,\bmd,\bmb}(\bmv)$. Let $\eps>0$ and let $S$ be a random set generated as in Algorithm~\ref{alg:MainAlgBall}  with
	$$k=\max\Bigg\{  O\left(\frac{\log^2 n}{\eps^2}\right), \left(\frac{1}{\eps}\right)^{O(1/\eps^2)}  \Bigg\}\;.$$  Then, we have that with probability at least $2/3$, the  following hold: Let $\tilde{\bmv}^*$ and $\tilde{z}^*$ be an optimal solution and the optimal value, respectively, of the problem $$\min_{\normtwo{v}\le\sqrt{\frac{|S|}{n}}r}\psi_{|S|,A|_S,d|_S,b|_S}(\bmv)\;.$$ Then,
	$$\left|\frac{1}{|S|^2}\tilde{z}^*-\frac{1}{n^2}{z^*}\right|\le \frac{\eps L r^2}{n}\;,$$
	where $L=\max\{\max_{i,j}|A_{ij}|,\max_i |d_i|,\max_i|b_i| \}$.
\end{corollary}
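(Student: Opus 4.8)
The plan is to follow the proof of Theorem~\ref{thm:mainTheorem-intro} almost verbatim, the one simplification being that here the feasible set is fixed in advance, so we do not need to track the norms of the (unknown) optimal solutions. First, by a Chernoff bound, with probability $1-o(1)$ we have $|S|\le 2k$, so Algorithm~\ref{alg:MainAlgBall} does not abort; the same bound also gives $|S|\ge k/2$, which is what Lemma~\ref{lem:BoundTwonorm} requires. Conditioning on this event, apply Lemma~\ref{lem:BoundTwonorm} with $\gamma=O(\eps)$, $T=3$ matrices ($A$, $\bmd\cdot\bmone^\top$, and $\bmb\cdot\bmone^\top$), and $S_R=S_C=S$ (using the footnote remark that the lemma extends to the square case with $S_R=S_C$). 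For the stated choice of $k$ this yields, with probability at least $2/3$, a single measure-preserving bijection $\pi\colon[0,1]\to[0,1]$ such that for every $f\colon[0,1]\to\R$,
\[
\max\Big\{\big|\dotprod{f}{(\mathcal{A}-\pi(\mathcal{A}|_S))f}\big|,\ \big|\dotprod{f^2}{(\mathcal{D}-\pi(\mathcal{D}|_S))\bmone}\big|,\ \big|\dotprod{f}{(\mathcal{B}-\pi(\mathcal{B}|_S))\bmone}\big|\Big\}\le \tfrac{\eps L}{3}\normtwo{f}^2 .
\]

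Next, set $R=r/\sqrt{n}$. The key observation is that, after the $\sqrt{n}$-normalization built into Lemma~\ref{lem:Min-Inf}, both the original problem over $\normtwo{\bmv}\le r$ and the sampled problem over $\normtwo{\bmv}\le\sqrt{|S|/n}\,r$ correspond to minimizing the continuous functional over the \emph{same} ball $\{f:\normtwo{f}\le R\}$; unlike the unconstrained case, there is no dependence on $\normtwo{\bmv^*}$ or $\normtwo{\tilde{\bmv}^*}$. Using Lemma~\ref{lem:Min-Inf} on the sampled instance and the invariance of $\Psi$ under $\pi$ (exactly as in the proof of Theorem~\ref{thm:mainTheorem-intro}),
\[
\tilde{z}^* = \min_{\bmv:\normtwo{\bmv}\le\sqrt{|S|/n}\,r}\psi_{|S|,A|_S,\bmd|_S,\bmb|_S}(\bmv) = |S|^2\cdot\min_{f:\normtwo{f}\le R}\Big\{\dotprod{f}{\pi(\mathcal{A}|_S)f}+\dotprod{f^2}{\pi(\mathcal{D}|_S)\bmone}+\dotprod{f}{\pi(\mathcal{B}|_S)\bmone}\Big\}.
\]
Now replace $\pi(\mathcal{A}|_S)$ by $\mathcal{A}$ (and similarly for $\mathcal{D},\mathcal{B}$), incurring a total error of at most $\eps L\normtwo{f}^2\le \eps L R^2=\eps L r^2/n$ by the three bounds above; since this error is uniform over the ball, the minimum is perturbed by at most $\eps L R^2$. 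Hence
\[
\tilde{z}^* = |S|^2\cdot\min_{f:\normtwo{f}\le R}\Psi_{n,A,\bmd,\bmb}(f)\ \pm\ \eps L|S|^2R^2 = \frac{|S|^2}{n^2}\,z^*\ \pm\ \frac{\eps L|S|^2 r^2}{n},
\]
where the last equality applies Lemma~\ref{lem:Min-Inf} once more, in dimension $n$ with radius $\sqrt{n}R=r$. Dividing by $|S|^2$ and taking a union bound over the two bad events (and absorbing the constant factors into $k$ and $\gamma$) gives $\big|\tfrac{1}{|S|^2}\tilde{z}^*-\tfrac{1}{n^2}z^*\big|\le \eps L r^2/n$ with probability at least $2/3$.

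I do not expect a genuine obstacle here: the only points requiring care are (i) checking that Lemma~\ref{lem:Min-Inf}, whose KKT-based argument is already stated for an arbitrary radius, is being invoked with the matching radii $\sqrt{n}R$ and $\sqrt{|S|}\cdot R$ on the two instances, and (ii) the elementary ``minima of uniformly close functions are close'' step, which works precisely because the perturbation bound from Lemma~\ref{lem:BoundTwonorm} holds for \emph{all} $f$ in the feasible ball simultaneously. The bookkeeping of constants ($\tfrac{\eps L}{3}$ per term, $\gamma=O(\eps)$, $T=3$) is identical to that in Theorem~\ref{thm:mainTheorem-intro}.
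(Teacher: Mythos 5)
Your proposal is correct and is exactly the argument the paper intends: the paper's own ``proof'' is just the remark that it is similar to Theorem~\ref{thm:mainTheorem-intro}, and you have carried out that adaptation, correctly identifying the key point that the radius $\sqrt{|S|/n}\,r$ of the sampled problem is chosen precisely so that, after Lemma~\ref{lem:Min-Inf}, both continuous problems are minimized over the same ball $\normtwo{f}\le r/\sqrt{n}$, which replaces the solution-dependent radius $R$ of the unconstrained case and yields the fixed error $\eps L r^2/n$. No gaps beyond those already present in the paper's proof of Theorem~\ref{thm:mainTheorem-intro}.
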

\subsection{ Singular Values Approximation Algorithm}\label{sec:Eigen}
As an additional application for our method, we show that we can obtain an approximation algorithms for the top singular values of a given matrix. We note that similar results (with better running time) can be obtained by applying known sampling techniques from~\cite{frieze2004fast}.
In Subsection~\ref{ssec:LargestEigenvalue}, as a warm-up, we will prove the correctness of Algorithm~\ref{alg:EigenAlg}, which approximate the largest singular value.
The algorithm is simple, and will allow us to demonstrate the use of our method.
In addition, Algorithm~\ref{alg:EigenAlg} has better error guarantee.
In Subsection~\ref{ssec:TopEigen}, we generalize the ideas behind Algrorithm~\ref{alg:EigenAlg}, and prove the result for the top singular values.
\subsubsection{Warmup- Approximating the Largest Singular Value}\label{ssec:LargestEigenvalue}
In this subsection we analyze the following algorithm.
\begin{algorithm}[ht!]
	\caption{Approximate the largest Singular Value ($A$, $n$, $m$, $\eps$) }\label{alg:EigenAlg}
	\begin{algorithmic}[1]
		\State Let $k=\max\Bigg\{  O\left(\frac{\max\{\log^2 n,\log^2 m\}}{\eps^2}\right), \left(\frac{1}{\eps}\right)^{O(1/\eps^2)}  \Bigg\}$
		\State Let $S_R\subseteq \set{1,2,\ldots,n}$ such that each index $i$ is taken to $S_R$ independently with probability $k/n$.\label{alg:eigenStage1}
		\State Let $S_C \subseteq \set{1,2,\ldots,m}$ such that each index $i$ is taken to $S_C$ independently with probability $k/m$.\label{alg:eigenStage2}
		\If {$|S_R|>2k$ or $|S_C|>2k$}
		\State \textbf{Abort}
		\EndIf
		\State \Return $\sqrt{\frac{nm}{|S_R||S_C|}}\sigma_1(A|_{S_R\times S_C})$.
	\end{algorithmic}
\end{algorithm}

\begin{theorem}\label{thm:eigenvalueapprox}
	Given a matrix $A\in {[-L,L]}^{n\times m}$ and $\eps>0$, Algorithm~\ref{alg:EigenAlg} outputs a value $z$ such that with probability at least $2/3$ \[
	|\sigma_1(A)-z|\le \eps L \sqrt{nm}\;.\]
\end{theorem}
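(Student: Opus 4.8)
The plan is to show that $z = \sqrt{\tfrac{nm}{|S_R||S_C|}}\,\sigma_1(A|_{S_R\times S_C})$ is close to $\sigma_1(A)$ by passing through the dikernel formulation. First I would note that by Corollary~\ref{cor:NormIdent}, $\sigma_1(A) = \normtwo{A} = \sqrt{nm}\cdot\normtwo{\mathcal{A}}$, and likewise $\sigma_1(A|_{S_R\times S_C}) = \sqrt{|S_R||S_C|}\cdot\normtwo{\mathcal{A}|_{S_R\times S_C}}$, so that the quantity we must control is exactly $\bigl|\,\normtwo{\mathcal{A}} - \normtwo{\pi(\mathcal{A}|_{S_R\times S_C})}\,\bigr|$, where $\pi$ is the measure-preserving bijection supplied by Lemma~\ref{lem:BoundTwonorm} (note $\normtwo{\pi(\mathcal{A}|_{S_R\times S_C})} = \normtwo{\mathcal{A}|_{S_R\times S_C}}$ since applying a measure-preserving bijection does not change the spectral norm). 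By the triangle inequality for the spectral (operator) norm of dikernels,
\[
\bigl|\,\normtwo{\mathcal{A}} - \normtwo{\pi(\mathcal{A}|_{S_R\times S_C})}\,\bigr| \le \normtwo{\mathcal{A} - \pi(\mathcal{A}|_{S_R\times S_C})}\;.
\]

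Next I would invoke Lemma~\ref{lem:BoundTwonorm} with $T = 1$, the single matrix $A_1 = A$, the parameter $\gamma = O(\eps)$ chosen so that the $210\gamma L$ term is at most $\eps L/3$, and $k_R = k_C = k$ with $k$ as set in Algorithm~\ref{alg:EigenAlg}. With this choice of $k$, the term $10L\bigl(\sqrt{8\log n / k} + \sqrt{8\log m / k} + \sqrt{4\log n\log m/k^2}\bigr)$ is at most $\eps L/3$ because $k = \Omega(\max\{\log^2 n,\log^2 m\}/\eps^2)$, and the last term $O\bigl(\tfrac{L}{\gamma^{11}}(1/\gamma^{10})^{3/(4\gamma^2)} k^{-1/4}\bigr)$ is at most $\eps L/3$ because $k = (1/\eps)^{O(1/\eps^2)}$ dominates the $(1/\gamma^{10})^{3/(4\gamma^2)} = (1/\eps)^{O(1/\eps^2)}$ factor with room to spare for the extra $k^{-1/4}$ and $\gamma^{-11}$. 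Hence with probability at least $89/100$, $\normtwo{\mathcal{A} - \pi(\mathcal{A}|_{S_R\times S_C})} \le \eps L$. Multiplying back by $\sqrt{nm}$ using Corollary~\ref{cor:NormIdent} gives $|\sigma_1(A) - z| \le \eps L\sqrt{nm}$ on this event; I would then fold in the $99/100$ event that $|S_R|,|S_C|$ are not too small (so the algorithm does not abort and $\pi$ applies), and by a union bound the total failure probability is below $1/3$, which accounts for the $2/3$ success probability in the statement.

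The main obstacle is purely bookkeeping: verifying that the three-term error bound in Lemma~\ref{lem:BoundTwonorm} collapses to $O(\eps L)$ under the stated choices of $k$ and $\gamma$, in particular checking that the $(1/\gamma)^{O(1/\gamma^2)}$-type factor in the third term is swallowed by the $k = (1/\eps)^{O(1/\eps^2)}$ lower bound once $\gamma$ is a constant multiple of $\eps$. There is also a minor subtlety that Lemma~\ref{lem:BoundTwonorm} as stated handles rectangular $n \times m$ matrices with independent $S_R, S_C$, which matches the singular-value application exactly (unlike the symmetric quadratic-minimization case), so no modification is needed here. I do not expect any conceptual difficulty beyond arranging the constants.
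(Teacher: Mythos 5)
Your proposal is correct and follows essentially the same route as the paper: condition on the Chernoff event for $|S_R|,|S_C|$, invoke Lemma~\ref{lem:BoundTwonorm} with $\gamma=O(\eps)$ and the stated $k$ to get $\normtwo{\mathcal{A}-\pi(\mathcal{A}|_{S_R\times S_C})}\le\eps L$, and convert back and forth with Corollary~\ref{cor:NormIdent} plus the triangle inequality for the spectral norm. The only nitpick is that the abort condition guards against $|S_R|$ or $|S_C|$ being too \emph{large} (the ``not too small'' event is used inside the proof of Lemma~\ref{lem:BoundTwonorm}), but this does not affect the argument.
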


\medskip

\begin{proof}
	By usual Chernoff bound we have that with probability at least $1-o(1)$, the sizes of $S_R$ and $S_C$ are at most $2k$.
	We henceforth condition on that.
	We apply Lemma~\ref{lem:BoundTwonorm} with $\gamma=O(\eps)$ and
	\begin{align*}
	k=k_R=k_C&=
	\max\Bigg\{\frac{\max\{\log^2n,\log^2 m\}}{\gamma^2},\frac{1}{\gamma^{48}}\left(\frac{1}{\gamma^{10}}\right)^{\frac{3}{\gamma^2}}\Bigg\}\\
	&=\max\Bigg\{  O\left(\frac{\max\{\log^2 n,\log^2 m\}}{\eps^2}\right), \left(\frac{1}{\eps}\right)^{O(1/\eps^2)}  \Bigg\}\;.
	\end{align*}
	Then, with probability at least $2/3$, there exists a measure preserving bijection $\pi:[0,1]\to [0,1]$ such that
	\[
	\normtwo{(\mathcal{A}-\pi(\mathcal{A}|_{S_R \times S_C}))} \le \eps L\;.
	\]
	Then, by the definition of $\sigma_1(A)$,
	\begin{align*}
	\sigma_1(A) &= \normtwo{A}= \sqrt{nm} \normtwo{\mathcal{A}} \tag{By Corollary~\ref{cor:NormIdent}}\ \\
	& \leq  \sqrt{nm} \left(\normtwo{\mathcal{A}|_{S_R \times S_C}}+ \normtwo{\pi(\mathcal{A}|_{S_R \times S_C})-\mathcal{A}} \right)\\
	& = \sqrt{nm}\left(\normtwo{\mathcal{A}|_{S_R \times S_C}}\pm \eps L \right)
	= \sqrt{nm}\normtwo{\mathcal{A}|_{S_R \times S_C}}\pm \eps L \sqrt{nm} \\
	&= \sqrt{\frac{nm}{|S_R||S_C|}}\normtwo{A|_{S_R \times S_C}} \pm \eps L \sqrt{nm}\tag{By Corollary~\ref{cor:NormIdent}}\ \\
	&= \sqrt{\frac{nm}{|S_R||S_C|}}\sigma_1(A|_{S_R \times S_C})\pm \eps L \sqrt{nm}.
	\end{align*}
	By applying a union bound, the theorem follows.
\end{proof}

\subsubsection{Approximating the t-th Singular Value}\label{ssec:TopEigen}
In this subsection, we will generalize Algorithm~\ref{alg:EigenAlg} to approximate the $t$-th largest singular value. In order to do so, we consider the following (well known) result regarding best rank-$t$ approximation of a matrix.
\begin{lemma}\label{lem:BestRankApprox}
	Given $A\in\R^{n\times m}$ with left and right singular vectors $\bmu^1,\ldots,\bmu^{\min\set{n,m}}$ and $\bmv^1,\ldots,\bmv^{\min\set{n,m}}$ corresponding to singular values $\sigma_1(A)\ge \ldots\ge \sigma_{\min\set{n,m}}(A)$. For $t \leq \min{\set{n,m}}$, let $A_t=\sum_{\ell\in[t]}\sigma_\ell (A)\bmu^\ell(\bmv^\ell)^\top$.
	Then,  $$\Lambda_t\eqdef\min_{\substack{\hat{\bmu}^1,\ldots,\hat{\bmu}^t\in\R^n \\ \hat{\bmv}^1,\ldots,\hat{\bmv}^t \in \R^m}}\norm{A-\sum_{\ell\in[t]}\hat{\bmu}^\ell(\hat{\bmv}^\ell)^\top}_F^2=\normF{A-A_t}^2=\sum_{\ell=t+1}^{n}\sigma_\ell(A)^2\;.$$
\end{lemma}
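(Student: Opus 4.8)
The plan is to prove the two displayed equalities separately: first the ``easy'' identity $\normF{A - A_t}^2 = \sum_{\ell=t+1}^{n}\sigma_\ell(A)^2$, and then the optimality $\Lambda_t = \normF{A-A_t}^2$, which reduces to showing $\normF{A - B}^2 \ge \sum_{\ell=t+1}^n \sigma_\ell(A)^2$ for every matrix $B$ with $\operatorname{rank}(B)\le t$. (Throughout, $\sigma_\ell(A)=0$ whenever $\ell>\min\set{n,m}$, so the sum $\sum_{\ell=t+1}^n$ is interpreted with that convention.)

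For the first identity, I would write $A - A_t = \sum_{\ell > t}\sigma_\ell(A)\,\bmu^\ell(\bmv^\ell)^\top$. Since $\set{\bmu^\ell}$ and $\set{\bmv^\ell}$ are orthonormal families, the rank-one matrices $\bmu^\ell(\bmv^\ell)^\top$ are pairwise orthogonal with respect to the Frobenius inner product $\langle X,Y\rangle_F=\mathrm{tr}(X^\top Y)$ and each has unit Frobenius norm; hence $\normF{A-A_t}^2 = \sum_{\ell>t}\sigma_\ell(A)^2$. Taking $B=A_t$, which has rank at most $t$, already gives $\Lambda_t \le \sum_{\ell=t+1}^{n}\sigma_\ell(A)^2$.

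For the matching lower bound, I would invoke Weyl's inequality for singular values: for matrices $X,Y$ of equal dimensions and indices $i,j\ge 1$, $\sigma_{i+j-1}(X+Y)\le \sigma_i(X)+\sigma_j(Y)$. Applying it with $X=A-B$, $Y=B$, and $j=t+1$, and using $\sigma_{t+1}(B)=0$ (as $\operatorname{rank}(B)\le t$), we obtain $\sigma_i(A-B)\ge \sigma_{i+t}(A)$ for every $i\ge 1$. Summing the squares,
$$
\normF{A-B}^2 = \sum_{i\ge 1}\sigma_i(A-B)^2 \ge \sum_{i\ge 1}\sigma_{i+t}(A)^2 = \sum_{\ell=t+1}^{n}\sigma_\ell(A)^2,
$$
and combining with the upper bound yields $\Lambda_t = \normF{A-A_t}^2 = \sum_{\ell=t+1}^{n}\sigma_\ell(A)^2$.

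The main obstacle: there is no deep difficulty here, as this is the classical Eckart--Young--Mirsky theorem; the only step needing care is the lower bound. If a self-contained argument is preferred over citing Weyl's inequality, I would instead argue by the Courant--Fischer min-max characterization of singular values together with a dimension count: the span of $\bmv^1,\dots,\bmv^{t+1}$ has dimension $t+1$ while the null space of $B$ has dimension at least $m-t$, so for any $i$ one can extract an $i$-dimensional subspace inside $\operatorname{span}(\bmv^1,\dots,\bmv^{i+t})$ on which $B$ vanishes, forcing $\sigma_i(A-B)\ge \sigma_{i+t}(A)$; this reproves the inequality used above. Either route is routine, so I expect the proof to be short.
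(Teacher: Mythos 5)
Your proof is correct. The paper does not actually prove this lemma --- it invokes it as a ``well known'' result (it is the Eckart--Young--Mirsky theorem for the Frobenius norm) --- and your argument is the standard one: the orthogonality of the rank-one terms $\bmu^\ell(\bmv^\ell)^\top$ gives both the value $\normF{A-A_t}^2=\sum_{\ell>t}\sigma_\ell(A)^2$ and the upper bound on $\Lambda_t$, while Weyl's inequality $\sigma_{i+t}(A)\le\sigma_i(A-B)+\sigma_{t+1}(B)$ with $\sigma_{t+1}(B)=0$ for $\operatorname{rank}(B)\le t$ gives the matching lower bound; the identification of the feasible set $\{\sum_{\ell\in[t]}\hat\bmu^\ell(\hat\bmv^\ell)^\top\}$ with the set of matrices of rank at most $t$ is also handled correctly.
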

The above suggests that if we were able to get an approximation $\widetilde{\Lambda}_t$ and $\widetilde{\Lambda}_{t-1}$, of ${\Lambda}_t$ and ${\Lambda}_{t-1}$, within an additive error of $\beta^2 nm$ (for some $\beta\in(0,1)$), then we could just compute $\sqrt{\widetilde{\Lambda}_t-\widetilde{\Lambda}_{t-1}}$, to get an approximation of $\sigma_t(A)$ within an additive error of $O({\beta}\sqrt{nm})$.

In order to get such approximations, we use the framework used for the analysis of Algorithm \ref{alg:EigenAlg}.
Specifically, we sample a set of indices $S_R\subseteq[n]$ and $S_C\subseteq [m]$, each picked independently with equal probability.	If $|S_R|$ or $|S_C|$ is too large, we stop the process and declare that the algorithm failed.  Otherwise, for $r\in\{t-1,t\}$, we let $$\widetilde{\Lambda}_r=\frac{nm}{|S_R||S_C|}\min_{\bmu^1,\ldots,\bmu^r\in \R^{|S_R|},\bmv^1,\ldots,\bmv^r\in \R^{|S_C|}}\|A|_{S_R\times S_C}-\sum_{\ell\in[r]}\bmu^\ell(\bmv^\ell)^\top\|_F^2,$$ and return $\sqrt{\widetilde{\Lambda}_t-\widetilde{\Lambda}_{t-1}}$. More precise details are given in Algorithm \ref{alg:topEigenAlg} below.
\begin{algorithm}[h!]
	\caption{Approximate the $t$-th largest Singular Value ($A$, $n$, $m$,$t$, $\eps$) }\label{alg:topEigenAlg}
	\begin{algorithmic}[1]
		\State Let $k=
		\max\Bigg\{  O\left(\frac{\max\{\log^2 n,\log^2 m\}}{\eps^2}\right), \left(\frac{1}{\eps}\right)^{O(1/\eps^2)}  \Bigg\}$
		\State Let $S_R\subseteq [n]$ such that each index $i\in[n]$ is taken to $S_R$ independently with probability $k/n$.
		\State Let $S_C\subseteq [m]$ such that each index $i\in[m]$ is taken to $S_C$ independently with probability $k/m$.
		\If {$|S_R|>2k$ or $|S_C|>2k$}
		\State \textbf{Abort}
		\EndIf
		\State Compute the following:\\ $\;\;\;\widetilde{\Lambda}_{t-1}=\frac{nm}{|S_R||S_C|}\min\limits_{\substack{\bmu^1,\ldots,\bmu^{t-1}\in \R^{|S_R|}\\\bmv^1,\ldots,\bmv^{t-1}\in \R^{|S_C|}}}\norm{A|_{S_R \times S_C}-\sum\limits_{\ell\in [t-1]}\bmu^\ell(\bmv^\ell)^\top}_F^2$
		\\$\;\;\;\widetilde{\Lambda}_{t}=\frac{nm}{|S_R||S_C|}\min_{\substack{\bmu^1,\ldots,\bmu^t\in \R^{|S_R|}\\\bmv^1,\ldots,\bmv^t\in \R^{|S_R|}}}\norm{A|_{S_R \times S_C}-\sum\limits_{\ell\in [t]}\bmu^\ell(\bmv^\ell)^\top}_F^2$
		\State \Return $z=\sqrt{\widetilde{\Lambda}_t-\widetilde{\Lambda}_{t-1}}$
		
	\end{algorithmic}
\end{algorithm}

\begin{theorem}[Restatement of Theorem~\ref{the:topEigen-intro}]\label{the:topEigen}
	Given a matrix $A \in [-L,L]^{n \times m}$, $\epsilon \in (0,1)$ let $$k=
	\max\Bigg\{  O\left(\frac{\max\{\log^2 n,\log^2 m\}}{\eps^2}\right), \left(\frac{1}{\eps}\right)^{O(1/\eps^2)}  \Bigg\}\;.$$
	Then, for every $t\le k$, Algorithm~\ref{alg:topEigenAlg} outputs a value $z$ such that with probability at least $2/3$,
	$$
	| \sigma_t(A) - z | \leq L  \sqrt{\epsilon t nm}.
	$$
\end{theorem}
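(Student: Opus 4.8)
The plan is to show that the value the algorithm returns equals $\sqrt{nm/(|S_R||S_C|)}\,\sigma_t(A|_{S_R\times S_C})$ and then control it by a single application of Lemma~\ref{lem:BoundTwonorm} together with Weyl's inequality, exactly paralleling the proof of Theorem~\ref{thm:eigenvalueapprox} for $\sigma_1$.

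First I would reproduce the sampling setup from the proof of Theorem~\ref{thm:eigenvalueapprox}: Chernoff bounds give $|S_R|,|S_C|\in[k/2,2k]$ with probability $1-o(1)$, and applying Lemma~\ref{lem:BoundTwonorm} with $T=1$ and $\gamma=\Theta(\eps)$ yields, with probability at least $2/3$, a measure-preserving bijection $\pi\colon[0,1]\to[0,1]$ with $\normtwo{\mathcal{A}-\pi(\mathcal{A}|_{S_R\times S_C})}\le \eps L$. Write $a_\ell=\sigma_\ell(\mathcal{A})$ and $\tilde a_\ell=\sigma_\ell(\mathcal{A}|_{S_R\times S_C})$. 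Since $\pi(\mathcal{A}|_{S_R\times S_C})$ is the conjugate of $\mathcal{A}|_{S_R\times S_C}$ by the unitary of $L^2[0,1]$ induced by $\pi$, it has the same singular values; Weyl's perturbation inequality for compact operators then gives $|a_\ell-\tilde a_\ell|\le \normtwo{\mathcal{A}-\pi(\mathcal{A}|_{S_R\times S_C})}\le \eps L$ for every $\ell\ge 1$.

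Next I would identify the output. Lemma~\ref{lem:BestRankApprox} applied to the matrix $A|_{S_R\times S_C}$ gives $\widetilde\Lambda_r=\frac{nm}{|S_R||S_C|}\sum_{\ell>r}\sigma_\ell(A|_{S_R\times S_C})^2$ for every $r$ (taking $\sigma_\ell:=0$ beyond the rank), and the extension of Corollary~\ref{cor:NormIdent} to all singular values, $\sigma_\ell(A|_{S_R\times S_C})=\sqrt{|S_R||S_C|}\,\tilde a_\ell$ — which follows from Lemma~\ref{lem:normRelation} and the Courant--Fischer characterization, since $\mathcal{A}|_{S_R\times S_C}$ is supported on the finite-dimensional space of functions constant on the subdivision intervals, on which it is a rescaled copy of $A|_{S_R\times S_C}$ — turns this into $\widetilde\Lambda_r=nm\sum_{\ell>r}\tilde a_\ell^2$. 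Telescoping then collapses to $\widetilde\Lambda_{t-1}-\widetilde\Lambda_t=nm\,\tilde a_t^2$, so the returned value is $z=\sqrt{\widetilde\Lambda_{t-1}-\widetilde\Lambda_t}=\sqrt{nm}\,\tilde a_t$ (the difference is of the $\Lambda$'s in non-increasing order, hence nonnegative, and the identity holds for every $t$, so no edge case arises). The identical computation on $A$ itself gives $\sigma_t(A)^2=\Lambda_{t-1}-\Lambda_t=nm\,a_t^2$, i.e. $\sigma_t(A)=\sqrt{nm}\,a_t$.

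Combining the pieces, $|\sigma_t(A)-z|=\sqrt{nm}\,|a_t-\tilde a_t|\le \eps L\sqrt{nm}\le L\sqrt{\eps t\,nm}$ (using $\eps<1\le t$), and a union bound over the size conditioning and the failure event of Lemma~\ref{lem:BoundTwonorm} keeps the overall success probability at $2/3$. There is no genuinely hard step: the argument is a direct generalization of the $\sigma_1$ case, and the telescoping automatically cancels the Frobenius-norm contributions, so — unlike what the ``$\beta^2 nm$-approximation'' discussion preceding the theorem might suggest — one does not need a separate concentration bound for $\|\mathcal{A}|_{S_R\times S_C}\|_F$. The only points worth stating carefully are the two dikernel-level facts used above, namely the extension of Corollary~\ref{cor:NormIdent} to all singular values and Weyl's inequality for the Hilbert--Schmidt operators in play, together with the bookkeeping identity $\widetilde\Lambda_r=nm\sum_{\ell>r}\tilde a_\ell^2$. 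I would also note in passing that the argument in fact proves the sharper bound $|\sigma_t(A)-z|\le \eps L\sqrt{nm}$; to recover the literal statement via the $\Lambda$-approximation route instead, one bounds $|\widetilde\Lambda_r-\Lambda_r|$ term-by-term using $|a_\ell-\tilde a_\ell|\le\eps L$ and $a_\ell\le\normtwo{\mathcal{A}}\le L$ (plus a routine concentration bound for $\|\mathcal{A}|_{S_R\times S_C}\|_F^2$), obtaining $|\widetilde\Lambda_r-\Lambda_r|=O(\eps r L^2 nm)$ and hence $L\sqrt{\eps t\,nm}$ after taking square roots.
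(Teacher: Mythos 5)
Your proposal is correct, but it takes a genuinely different route from the paper. The paper never identifies the algorithm's output with a singular value of the submatrix; instead it controls the best rank-$r$ approximation values $\Lambda_r$ directly. Concretely, it proves Lemma~\ref{lem:Min-Inf_general case} (a dikernel analogue of Eckart--Young), applies Lemma~\ref{lem:BoundTwonorm} simultaneously to the two dikernels $\mathcal{A}$ and $\mathcal{A}^2$ (so $T=2$, the second being needed to compare Frobenius norms of $\mathcal{A}$ and $\mathcal{A}|_{S_R\times S_C}$), and bounds $|\Lambda_r-\widetilde\Lambda_r|$ by expanding the squared Frobenius objective into $\langle \bmone,(\mathcal{A}^2-\pi(\mathcal{A}^2|_{S_R\times S_C}))\bmone\rangle$ plus $t$ cross terms $\langle f_\ell,(\mathcal{A}-\pi(\mathcal{A}|_{S_R\times S_C}))g_\ell\rangle$; this yields $|\Lambda_r-\widetilde\Lambda_r|=O(\eps t\, nm)$ and hence the stated $L\sqrt{\eps t\, nm}$ after taking square roots -- essentially your closing ``$\Lambda$-approximation route.'' Your main argument instead telescopes $\widetilde\Lambda_{t-1}-\widetilde\Lambda_t$ exactly to $\frac{nm}{|S_R||S_C|}\sigma_t(A|_{S_R\times S_C})^2$ and invokes Weyl's inequality for the dikernel operators; this is shorter, needs only a $T=1$ application of Lemma~\ref{lem:BoundTwonorm}, and gives the sharper bound $\eps L\sqrt{nm}$ with no $\sqrt{t}$ loss. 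The price is that you rely on two facts the paper does not prove -- the extension of Corollary~\ref{cor:NormIdent} to all singular values, and Weyl's perturbation inequality for the compact operators induced by bounded dikernels together with the unitary invariance under $\pi$ -- though both are standard and your justifications are sound. (You also correctly read $z=\sqrt{\widetilde\Lambda_{t-1}-\widetilde\Lambda_t}$, silently fixing a sign typo that appears both in Algorithm~\ref{alg:topEigenAlg} and in the paper's own displayed identity $\sigma_t(A)^2=\Lambda_t-\Lambda_{t-1}$.)
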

In order to prove Theorem~\ref{the:topEigen}, we start by proving the following.
\begin{lemma}\label{lem:Min-Inf_general case}
	Let $A\in \R^{n\times m}$ be a matrix, and $\mathcal{A}$ be its corresponding dikernel. Then, for any $R \geq \|\mathcal{A}\|_F$, we have
	\begin{align*}
	\min_{\substack{\bmu^1,\ldots,\bmu^t\in \R^n\\\bmv^1,\ldots,\bmv^t\in \R^m}} \norm{A-\sum_{\ell\in[t]} \bmu^\ell (\bmv^\ell)^\top}_F^2
	& =
	nm\cdot \inf_{\substack{f_1,\ldots,f_t,g_1,\ldots,g_t:[0,1]\to \R: \\ \|f_\ell\|_2^2,\|g_\ell\|_2^2 \leq R\; \forall \ell \in [t]}}\norm{\mathcal{A}-\sum_{\ell\in[t]} f_\ell g_\ell^\top}_F^2\\
	& =
	nm\cdot \inf_{\substack{f_1,\ldots,f_t,g_1,\ldots,g_t:[0,1]\to \R}}\norm{\mathcal{A}-\sum_{\ell\in[t]} f_\ell g_\ell^\top}_F^2  \;.
	\end{align*}
\end{lemma}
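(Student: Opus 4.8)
\medskip

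The plan is to mimic the argument of Lemma~\ref{lem:normRelation} (and Lemma~\ref{lem:Min-Inf}), establishing the chain of equalities by showing each direction, and then reconciling the constrained and unconstrained infima on the dikernel side. The core observation is the usual dictionary between step functions on $[0,1]$ and vectors: given a vector $\bmw\in\R^n$, the function $f_{\bmw}(x)=w_{i^n(x)}$ satisfies $\|f_{\bmw}\|_2^2=\frac1n\|\bmw\|_2^2$, and for two vectors $\bmu\in\R^n,\bmv\in\R^m$ one checks that the dikernel of the matrix $\bmu\bmv^\top$ equals $f_{\bmu}f_{\bmv}^\top$ (up to the measure-zero boundary issue), so that $\|\mathcal{A}-\sum_\ell f_{\bmu^\ell}f_{\bmv^\ell}^\top\|_F^2=\frac{1}{nm}\|A-\sum_\ell \bmu^\ell(\bmv^\ell)^\top\|_F^2$ by the same interval-by-interval integration used in Lemma~\ref{lem:normRelation}. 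This immediately gives the ``$\geq$'' direction against the \emph{unconstrained} dikernel infimum: any matrix decomposition yields a step-function decomposition of the same normalized cost.

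\medskip

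For the reverse direction I would first argue that the unconstrained dikernel infimum is attained and that an optimal tuple $(f_1^*,\dots,f_t^*,g_1^*,\dots,g_t^*)$ may be taken to consist of step functions constant on each interval $I^n_i$ (resp. $I^m_j$). This follows by a stationarity/first-variation argument exactly as in Lemmas~\ref{lem:normRelation} and~\ref{lem:Min-Inf}: writing out $\partial/\partial f_\ell(x)$ of $\|\mathcal{A}-\sum_k f_kg_k^\top\|_F^2$, the resulting optimality condition depends on $x$ only through $i^n(x)$ (the coefficients are integrals of $g_k^*$ against rows of $\mathcal{A}$, which are constant on $I^n_i$), hence $f_\ell^*$ is almost everywhere constant on each $I^n_i$; symmetrically for $g_\ell^*$ on $I^m_j$. (Existence of a minimizer over step functions is clear by finite-dimensional compactness after normalizing, since scaling $f_\ell\mapsto cf_\ell$, $g_\ell\mapsto c^{-1}g_\ell$ leaves the objective invariant and the objective is coercive in the relevant bounded region; one can also simply optimize directly over the $2t(n+m)$ real coordinates.) Rounding $f_\ell^*,g_\ell^*$ to vectors $\bmu^\ell,\bmv^\ell$ then shows the matrix minimum is $\leq nm$ times the unconstrained dikernel infimum, closing the loop between the first and third expressions.

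\medskip

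It remains to insert the constrained infimum in the middle, i.e. to show that imposing $\|f_\ell\|_2^2,\|g_\ell\|_2^2\le R$ for all $\ell$ does not change the value, given $R\ge\|\mathcal{A}\|_F$. Since the constrained infimum is trivially sandwiched between the unconstrained one (which it upper-bounds) and... no — constrained $\ge$ unconstrained always, so it suffices to exhibit \emph{one} feasible tuple achieving the unconstrained value. Take the optimal step-function tuple above; by the scaling invariance $f_\ell g_\ell^\top=(cf_\ell)(c^{-1}g_\ell)^\top$ we may rescale within each rank-one term so that $\|f_\ell\|_2=\|g_\ell\|_2$ for every $\ell$. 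Then, since $\mathcal{A}=\sum_\ell f_\ell g_\ell^\top+(\mathcal{A}-\sum_\ell f_\ell g_\ell^\top)$ is an orthogonal-type decomposition once we further reduce to the genuine SVD of the step matrix (the best rank-$t$ fit has singular values $\sigma_1,\dots,\sigma_t$ of the associated matrix, each $\le\|\mathcal{A}\|_F=R$ by Fact~\ref{Fact:singularvalues}), each factor has squared norm $\le R$, so the tuple is feasible. Hence constrained $=$ unconstrained, completing the three-way equality.

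\medskip

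The main obstacle I anticipate is the middle equality: one must be a little careful that the ``take the SVD and split the singular value evenly'' normalization both (i) keeps every factor's squared $L^2$-norm bounded by $R$ — which is where $R\ge\|\mathcal{A}\|_F$ and Fact~\ref{Fact:singularvalues} ($\sigma_\ell\le\|\mathcal{A}\|_F$) are used — and (ii) does not increase the Frobenius error, which is automatic since we are only reparametrizing the \emph{same} rank-$\le t$ approximant. The first two equalities are routine adaptations of Lemma~\ref{lem:normRelation}'s two-sided argument to the multi-term, Frobenius-norm setting, with the only mild subtlety being the first-variation computation for a sum of $t$ rank-one terms rather than a single quadratic form.
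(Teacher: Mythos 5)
Your proposal is correct and follows essentially the same route as the paper's proof: the step-function dictionary for the two "easy" directions, the first-variation/stationarity argument (after SVD-orthogonalization) to show the unconstrained dikernel optimum is attained by step functions, and the $\sqrt{\sigma_\ell}$-rescaling together with Fact~\ref{Fact:singularvalues} to verify feasibility of the norm constraints. The only difference is organizational — you close the cycle between the first and third expressions and then insert the constrained infimum, whereas the paper chains matrix $\to$ constrained $\to$ unconstrained $\to$ matrix — but the ingredients are identical.
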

The last minimization problem has a minimizer because the objective function is weakly continuous and coercive\footnote{A function $f:\R^n\to \R \cup \{\pm \infty\}$ is called \textit{coercive} if $\norm{\bmx}\rightarrow\infty$ implies $f(x)\rightarrow\infty$.} (See, e.g.,~\cite{Peressini:1993ug}).

\begin{proof}
	First, we show that $nm \cdot \inf\limits_{\substack{f_1,\ldots,f_t,g_1,\ldots,g_t:[0,1]\to \R: \\  \|f_\ell\|_2^2,\|g_\ell\|_2^2 \leq R\; \forall \ell \in [t] }}\norm{\mathcal{A}-\sum\limits_{\ell\in[t]} f_\ell g_\ell^\top}_F^2\le \min\limits_{\substack{\bmu^1,\ldots,\bmu^t\in \R^n\\\bmv^1,\ldots,\bmv^t\in \R^m}} \norm{A-\sum\limits_{\ell\in[t]} \bmu^\ell (\bmv^\ell)^\top}_F^2$.
	Given a solution $\bmu^1,\ldots,\bmu^t,\bmv^1,\ldots,\bmv^t$ we define functions 
	$f_1,\ldots,f_t,g_1,\ldots,g_t:[0,1]\to \R$ such that for every $\ell\in [t]$, $f_j(x)=u^\ell_{i^n(x)}$ and $g_j(x)=v^\ell_{i^m(x)}$.
	
	As we can assume $\|\bmu^\ell\|_2,\|\bmv^\ell\|_2 \leq \|A\|_F$, we have $\|f_\ell\|_2^2,\|g_\ell\|_2^2 \leq \|\mathcal{A}\|_F^2 \leq R$ for every $\ell\in[t]$.
	Then,
	\begin{align*}
	\norm{\mathcal{A}-\sum_{\ell\in[t]} f_\ell g_\ell^\top}_F^2&=\int_{0}^{1}\int_{0}^{1}\left(\mathcal{A}(x,y)-\sum_{\ell\in [t]}f_\ell(x)g_\ell(y)\right)^2 dy dx\\
	&=\int_{0}^{1}\sum_{j\in [m]}\int_{I^m_j}\left(\mathcal{A}(x,y)-\sum_{\ell\in [t]}f_\ell(x)g_\ell(y)\right)^2 dy dx\\
	&=\int_{0}^{1}\sum_{j\in [m]}\frac{1}{m}\left(\mathcal{A}(x,I^m_j)-\sum_{\ell\in [t]}f_\ell(x)v^\ell_j\right)^2 dx\\
	&=\frac{1}{nm}\sum_{i\in[n]}\sum_{j\in [m]}\left(A_{ij}-\sum_{\ell\in[t]}u^\ell_i v^\ell_j\right)^2
	=\frac{1}{nm}\norm{A-\sum_{\ell\in[t]} \bmu^\ell (\bmv^\ell)^\top}_F^2\;.
	\end{align*}
	Therefore, we have that $$nm \cdot \inf\limits_{\substack{f_1,\ldots,f_t,g_1,\ldots,g_t:[0,1]\to \R:\\  \|f_\ell\|_2^2,\|g_\ell\|_2^2  \leq R\; \forall \ell \in [t]}}\norm{\mathcal{A}-\sum\limits_{i\in[t]} f_\ell g_\ell^\top}_F^2\le \min\limits_{\substack{\bmu^1,\ldots,\bmu^t\in \R^n\\\bmv^1,\ldots,\bmv^t\in \R^m}} \norm{A-\sum\limits_{\ell\in[t]} \bmu^\ell (\bmv^\ell)^\top}_F^2$$.
	
	Next, we clearly have
	\[
	nm\cdot \inf\limits_{f_1,\ldots,f_t,g_1,\ldots,g_t:[0,1]\to \R}\norm{\mathcal{A}-\sum\limits_{\ell\in[t]} f_\ell g_\ell^\top}_F^2 \leq nm\cdot \inf\limits_{\substack{f_1,\ldots,f_t,g_1,\ldots,g_t:[0,1]\to \R: \\ \|f_\ell\|_2^2, \|g_\ell\|_2^2 \leq R\; \forall \ell \in [t]} }\norm{\mathcal{A}-\sum\limits_{\ell\in[t]} f_\ell g_\ell^\top}_F^2\;.
	\]
	
	Finally, we show that $\min\limits_{\substack{\bmu^1,\ldots,\bmu^t\in \R^n \\ \bmv^1,\ldots,\bmv^t\in \R^m}} \norm{A-\sum\limits_{\ell\in[t]} \bmu^\ell (\bmv^\ell)^\top}_F^2 \leq nm \cdot \inf\limits_{f_1,\ldots,f_t,g_1,\ldots,g_t:[0,1]\to \R}\norm{\mathcal{A}-\sum\limits_{\ell\in[t]} f_\ell g_\ell^\top}_F^2$.
	Consider the optimal solution $(f_1^*,\ldots,f_t^*,g_1^*,\ldots,g_t^*)$. 
	Note that we can assume that the two sets of functions $\set{f^*_\ell}$ and $\set{g^*_\ell}$ are orthogonal, since the operator $y \mapsto \sum_{\ell\in [t]}f^*_\ell(\cdot )g^*_\ell(y)$ is compact, and hence we can express it as $y \mapsto \sum_{\ell \in [t]}\sigma_\ell f'_\ell(\cdot)g'_\ell(y)$ for non-negative $\sigma_\ell\;(\ell \in [t])$ and two orthogonal sets of functions $\set{f'_\ell}$ and $\set{g'_\ell}$ by singular value decomposition of a compact operator.
	Hence, we can replace $f_\ell$ and $g_\ell$ by $\sqrt{\sigma_\ell}f'_\ell$ and $\sqrt{\sigma_\ell}g'_\ell$ without changing the objective value.
	
	For any fixed $\ell_0\in[t]$ and $x_0\in[0,1]$, the partial derivative with respect to $f_{\ell_0}(x_0)$ is
	\begin{align*}
	\frac{\partial}{\partial f_{\ell_0} (x_0)}\norm{\mathcal{A}-\sum_{\ell\in[t]} f_\ell g_\ell^\top}_F^2 &=\frac{\partial}{\partial f_{\ell_0} (x_0)}\int_{0}^{1}\int_{0}^{1}\left(\mathcal{A}(x,y)-\sum_{\ell\in [t]}f_\ell(x)g_\ell(y)\right)^2 dy dx\\
	&=2\int_{0}^{1}g_{\ell_0}(y)\left(\sum_{\ell\in[t]}f_\ell(x_0)g_\ell(y)-\mathcal{A}(x_0,y)\right)dy\\
	&=2\sum_{\ell\in[t]}f_{\ell}(x_0)\int_{0}^{1}g_{\ell_0}(y)g_\ell(y)dy-2\int_{0}^{1}\mathcal{A}(x_0,y)g_{\ell_0}(y)dy\\
	&=2f_{\ell_0} (x_0)\normtwo{g_{\ell_0}}^2-2\sum_{i\in [m]}\int_{I^m_j}A_{i^n(x_0)j}g_{\ell_0}(y)dy.
	\end{align*}
	The partial derivatives must converge to zero almost everywhere.
	Then, by the form of the partial derivative, 
	we can assume that for every $\ell\in[t]$,  $f^*_\ell$ is almost constant on each of the intervals $I^n_1,\ldots,I^n_n$ and $g^*_\ell$ is almost constant on each of the intervals $I^m_1,\ldots,I^m_m$.
	For every $\ell\in[t]$, we define $u^\ell_i=f^*_\ell(x)$ and $v^\ell_j=g^*_\ell(y)$, where $x$ and $y$ are dominant elements in $I^n_i$ and $I^m_j$, respectively.
	Thus,
	\begin{align*}
	&\norm{A-\sum_{\ell\in[t]} \bmu^\ell (\bmv^\ell)^\top}_F^2=\sum_{i\in [n]}\sum_{j\in[m]}\left(A_{ij}-\sum_{\ell\in[t]}u^\ell_i v^\ell_j\right)^2\\ 
	&=nm\sum_{(i,j)\in[n]\times[m]}\int_{I_i}\int_{I^m_j}\left(\mathcal{A}(x,y)-\sum_{\ell\in[t]}f^*_\ell(x)g^*_\ell(y)\right)^2dx dy\\
	&=nm\int_{0}^{1}\int_{0}^{1}\left(\mathcal{A}(x,y)-\sum_{\ell\in[t]}f^*_\ell(x)g^*_\ell(y)\right)^2dx dy 
	= nm\norm{\mathcal{A}-\sum_{\ell\in[t]} f^*_\ell (g^*_\ell)^\top}_F^2 
	\end{align*}
	Therefore, we have $$\min\limits_{\substack{\bmu^1,\ldots,\bmu^t\in \R^n\\\bmv^1,\ldots,\bmv^t\in \R^m}} \norm{A-\sum_{\ell\in[t]} \bmu^\ell (\bmv^\ell)^\top}_F^2 \leq nm \cdot \inf\limits_{f_1,\ldots,f_t,g_1,\ldots,g_t:[0,1]\to \R}\norm{\mathcal{A}-\sum_{\ell\in[t]} f_\ell g_\ell^\top}_F^2,$$ and the lemma follows.
\end{proof}

With this result, we can prove the theorem.

\medskip
\begin{proofof}{Theorem~\ref{the:topEigen}}
	First, we note that we can assume $L=1$ as the output of Algorithm~\ref{alg:topEigenAlg} on $A$ is $L$ times the output of Algorithm~\ref{alg:topEigenAlg} on $A/L$.
	\medskip

	Let $\mathcal{A}^2$ be a dikernel such that $\mathcal{A}^2(x,y) = \mathcal{A}(x,y)^2$ for every $x,y\in [0,1]$.
	We apply Lemma~\ref{lem:BoundTwonorm} with $\gamma=O(\eps)$ and
	\begin{align*}
	k=k_R=k_C&=
	\max\Bigg\{\frac{\max\{\log^2n,\log^2 m\}}{\gamma^2},\frac{1}{\gamma^{48}}\left(\frac{1}{\gamma^{10}}\right)^{\frac{3}{\gamma^2}}\Bigg\}\\
	&=\max\Bigg\{  O\left(\frac{\max\{\log^2 n,\log^2 m\}}{\eps^2}\right), \left(\frac{1}{\eps}\right)^{O(1/\eps^2)}  \Bigg\}\;.
	\end{align*}

	We note that since we are applying Lemma~\ref{lem:BestRankApprox} on the sampled matrix $A|_{S_R\times S_C}$, we get a bound on $t$ such that $t\le \min\{|S_R|,|S_C|\} = O(k)$.
	Then, with probability at least $2/3$, there exists a measure preserving bijection $\pi:[0,1]\to [0,1]$ such that
	\begin{align}
	\normtwo{\mathcal{A} - \pi(\mathcal{A}|_{S_R \times S_C})} \leq \epsilon,
	\quad \text{and} \quad
	\normtwo{\mathcal{A}^2 - \pi(\mathcal{A}^2|_{S_R \times S_C})} \leq \epsilon.
	\label{eq:topEigen-1}
	\end{align}
	In particular, the latter means that
	\[
	\|\mathcal{A}\|_F^2 =
	\int_0^1\int_0^1 \mathcal{A}(x,y)^2 dx dy
	=
	\int_0^1\int_0^1 \mathcal{A}|_{S_R\times S_C}(x,y)^2 dx dy \pm \epsilon
	=
	\|\mathcal{A}|_\mathcal{S_R\times S_C}\|_F^2 \pm \epsilon.
	\]
	
	In what follows, we condition on~\eqref{eq:topEigen-1}.
	Then, we have
	\begin{align*}
	& ||\mathcal{A} - \sum_{\ell \in [t]} f_\ell g_\ell^\top||_F^2 -  ||\pi(\mathcal{A|_{S_R\times S_C}}) - \sum_{\ell \in [t]} f_\ell g_\ell^\top||_F^2\\
	& = \int_0^1\int_0^1 ({\mathcal{A}(x, y)}^2 - \pi(\mathcal{A|_{S_R\times S_C}})(x, y)^2) dx dy\\
	&\qquad - 2 \int_0^1 \int_0^1 (\mathcal{A}(x,y) - \pi(\mathcal{A|_{S_R\times S_C}})(x,y)) \sum_{\ell \in [t]} f_\ell(x) g_\ell(y)dx dy \\
	& =  \langle 1, \mathcal{A}^2 - \pi(\mathcal{A}^2|_\mathcal{S_R\times S_C}) 1\rangle -  2 \sum_{\ell \in [t]} \langle f_\ell, \mathcal{A} - \pi(\mathcal{A|_{S_R\times S_C}}) g_\ell\rangle \\
	& \leq \epsilon + 2\epsilon \sum_{\ell \in [t]} \normtwo{f_\ell} \normtwo{g_\ell}. \tag{By~\eqref{eq:topEigen-1}}
	\end{align*}
	
	Let $R = \max\set{\|\mathcal{A}\|_F^2, \|\mathcal{A|_{S_R\times S_C}}\|_F^2} = \|\mathcal{A}\|_F^2 \pm \epsilon \leq 2$.
	Then, we have
	\begin{align*}
	\Lambda_t & = \min_{\substack{\bmu^1,\ldots,\bmu^t\in \R^n\\\bmv^1,\ldots,\bmv^t\in \R^m}} \norm{A-\sum_{\ell\in[t]} \bmu^\ell (\bmv^\ell)^\top}_F^2
	= nm\cdot \inf_{\substack{f_1,\ldots,f_t,g_1,\ldots,g_t:[0,1]\to \R:\\ \|f_\ell\|_2^2,\|g_\ell\|_2^2 \leq R\;\forall \ell \in [t]} }\norm{\mathcal{A}-\sum_{\ell\in[t]} f_\ell g_\ell^\top}_F^2 \tag{By Lemma~\ref{lem:Min-Inf_general case}}\\
	& = nm\cdot \inf_{\substack{f_1,\ldots,f_t,g_1,\ldots,g_t:[0,1]\to \R: \\ \normtwo{f_\ell}^2,\normtwo{g_\ell}^2 \leq R\; \forall \ell \in [t] }}\Bigl(\norm{\pi(\mathcal{A|_{S_R\times S_C}})-\sum_{\ell\in[t]} f_\ell g_\ell^\top}_F^2 \pm (\epsilon  + 2\epsilon \sum_{\ell \in [t]}\normtwo{f_\ell} \normtwo{g_\ell})\Bigr) \\
	& = nm\cdot \inf_{\substack{f_1,\ldots,f_t,g_1,\ldots,g_t:[0,1]\to \R: \\ \|f_\ell\|_2^2,\|g_\ell\|_2^2 \leq R\; \forall \ell \in [t] }}\norm{\pi(\mathcal{A|_{S_R\times S_C}})-\sum_{\ell\in[t]} f_\ell f_\ell^\top}_F^2  \pm (\epsilon nm + 2\epsilon t R^2 nm) \\
	& = \frac{nm}{|S_R||S_C|} \cdot \min_{\substack{\bmu^1,\ldots,\bmu^t \in \R^{|S_R|}\\\bmv^1,\ldots,\bmv^t \in \R^{|S_C|}}}\norm{A|_{S_R\times S_C}-\sum_{\ell\in[t]} \bmu^\ell (\bmv^\ell)^\top}_F^2  \pm (\epsilon nm + 2\epsilon t R^2 nm) \\
	& = \tilde{\Lambda}_t \pm O(\epsilon t nm).
	\end{align*}
	In addition, from Lemma \ref{lem:BestRankApprox}, we get
	\begin{align*}
	\sigma_t(A)^2 =
	\Lambda_t - \Lambda_{t-1}
	=
	\tilde{\Lambda}_t - \tilde{\Lambda}_{t-1}
	\pm
	O(\epsilon t nm),
	\end{align*}
	which implies
	\begin{align*}
	|\sigma_t(A) - z| & \le\left|\sqrt{\tilde{\Lambda}_t - \tilde{\Lambda}_{t-1}
		\pm
		O(\epsilon t nm)} - \sqrt{\tilde{\Lambda}_t - \tilde{\Lambda}_{t-1}} \right|
	=  O(\sqrt{\epsilon t nm}).
	\end{align*}
	By replacing $\epsilon$ with $\epsilon/C$ for a sufficiently large constant $C>0$, we obtain the desired result.
\end{proofof}
\subsubsection{Experiments}\label{sec:experiments}

\begin{figure}[h!]
	\centering
	\includegraphics[width=\hsize]{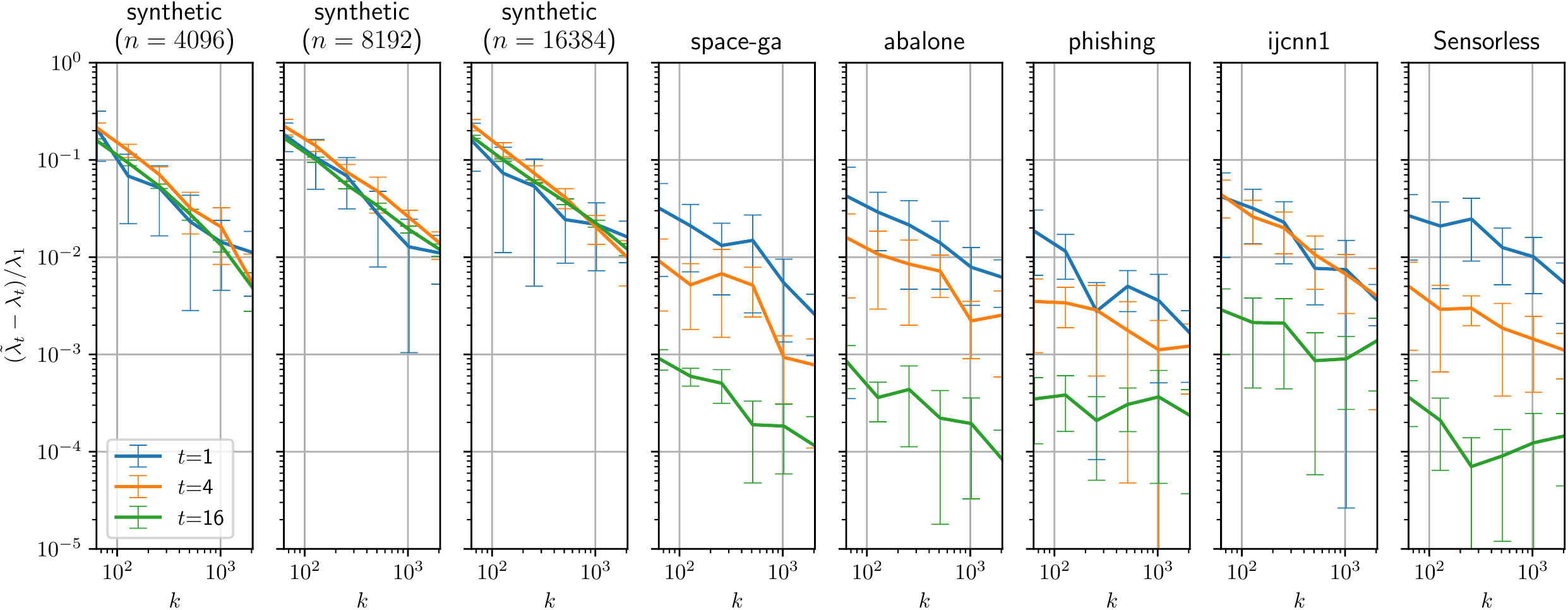}
	\caption{Accuracy. The error bar indicates the standard deviation.}\label{fig:accuracy}
\end{figure}

\begin{table}[h!]
	\centering
	\small
	\caption{Runtime in seconds for $t=16$}\label{tab:runtime}
	\begin{tabular}{|l|r||rrrrrr|r|}
		\toprule
		\multirow{2}{*}{Data} & \multirow{2}{*}{$n$} & \multicolumn{6}{c|}{$k$} & \multicolumn{1}{c|}{Power}\\
		& & 64 & 128 & 256 & 512 & 1024 & 2048 & iteration\\
		\midrule
		synthetic & 4096 & 0.03 & 0.04 & 0.07 & 0.25 & 0.90 & 3.36 & 11.70 \\
synthetic & 8192 & 0.03 & 0.04 & 0.08 & 0.24 & 0.67 & 2.34 & 48.70 \\
synthetic & 16384 & 0.02 & 0.03 & 0.06 & 0.14 & 0.54 & 1.99 & 126.10 \\
space-ga & 3107 & 0.02 & 0.02 & 0.05 & 0.20 & 0.64 & 2.63 & 5.55 \\
abalone & 4177 & 0.03 & 0.04 & 0.07 & 0.24 & 0.91 & 2.98 & 12.09 \\
phishing & 11055 & 0.02 & 0.03 & 0.05 & 0.15 & 0.59 & 2.27 & 65.38 \\
ijcnn1 & 49990 & 0.03 & 0.06 & 0.10 & 0.23 & 0.68 & 2.18 & 963.39 \\
Sensorless & 58509 & 0.03 & 0.06 & 0.11 & 0.24 & 0.67 & 2.22 & 1201.26 \\

		\bottomrule
	\end{tabular}
	
\end{table}

In this section, we experimentally demonstrate the effectiveness of our method.
We conducted experiments on a Linux server with an Intel Xeon E5-2690 (2.90 GHz) processor and 256 GB of main memory.
All the algorithms were implemented in Python. 
Here, we consider kernel PCA, which is a representative example of constrained quadratic optimization problems.
Let $\bmx_1,\ldots,\bmx_n \in \R^d$ be data points.
For a kernel function $\textbf{ker}:\R^d \times \R^d \to \R$, we create a Gram matrix $K\in\R^{n \times n}$ in the feature space via $K_{ij}= \textbf{ker}(\bmx_i,\bmx_j)$ for each $i,j \in [n]$.
Then, we want to compute the largest few, say, $t$, eigenvalues of $K$, because it represents the maximum variance of the data points projected to a $t$-dimensional subspace in the feature space.
Note that as $K$ is positive-semidefinite, its eigenvalues are exactly its singular values, and hence we can apply our approximation algorithm for computing top singular values.


We use synthetic and real data for our experiments.
For synthetic data, we generated a random matrix $X \in \R^{n \times 10}$ with each entry generated from a standard normal distribution.
For real data, we used space-ga ($n=3107$ and $d=6$), abalone ($n=4177$ and $d=$8), phishing ($n=11055$ and $d=$68), ijcnn1 ($n=49990$ and $d=22$), and Sensorless ($n=58509$ and $d=48$), which are provided by~\cite{Chang:2011dt}.
We adopted the radial basis function kernel $\textbf{ker}(\bmx,\bmx') = \exp(-\|\bmx-\bmx'\|_2^2/(2\sigma^2))$ with $\sigma=1$.
We implemented our method (Algorithm~\ref{alg:topEigenAlg}) using the power iteration method with $20$ iterations to compute eigenvalues of the sampled matrix, and compared it against the power iteration method with $20$ iterations on the full matrix.
We run our method 10 times for each setting.

Figure~\ref{fig:accuracy} shows the accuracy of our method.
As our method provides additive approximation, we measured the relative error with respect to $\lambda_1$, the largest eigenvalue.
Since it is computationally expensive to compute eigenvalues exactly, we regard the outputs of the power iteration method on the full matrix as the true eigenvalues.
We observe that we can achieve smaller multiplicative error as the parameter $k$ increases.
For all data, the multiplicative error against $\lambda_1$ drops to approximately 1\% by choosing $k=1024$.

Table~\ref{tab:runtime} shows the runtime of each method for $t=16$.
We observe that our method outperforms the power iteration method especially when $n$ is large.
This is because the runtime of our method is independent of $n$ once $k$ is determined whereas that of the power iteration method grows roughly quadratically in $n$.

\section{Acknowledgments}
The authors would like to thank Eric Blais, Kohei Hayashi, Takanori Maehara, and Hong Zhou for many useful discussions.
\bibliographystyle{alpha}
\bibliography{MinQuadApprox}

\end{document}